\newtheorem{theorem}{Theorem}
\newtheorem{lemma}[theorem]{Lemma}
\newtheorem{proposition}[theorem]{Proposition}
\newtheorem{corollary}[theorem]{Corollary}
\newtheorem{definition}[theorem]{Definition}
\newtheorem{example}[theorem]{Example}
\newtheorem{remark}[theorem]{Remark}
\definecolor{darkblack}{rgb}{0, .07, .5}
\definecolor{darkred}{rgb}{0.5,0,0}
 \definecolor{mahogany}{rgb}{0.65, 0., 0.5}
\def\Pr{\text{\rm{Pr}}}
\newcommand{\Var}{\mathsf{Var}}
\newcommand{\E}{\mathbb{E}}
\newcommand{\Val}{\text{Val}}
\newcommand{\RN}[1]{%
  \textup{\uppercase\expandafter{\romannumeral#1}}%
}
\journal{Games and Economic Behavior}
\begin{document}

\begin{frontmatter}


\newcommand\blfootnote[1]{%
  \begingroup
  \renewcommand\thefootnote{}\footnote{#1}%
  \addtocounter{footnote}{-1}%
  \endgroup
}

\title{Playing Games with Bounded Entropy }


\author{Mehrdad Valizadeh, Amin Gohari}

\address{Department of Electrical Engineering, Sharif University of Technology, Tehran, Iran
(valizadeh@ee.sharif.edu, aminzadeh@sharif.edu)}

\begin{abstract}
In this paper, we consider zero-sum repeated games in which the maximizer (player or team) is restricted to strategies requiring no more than a limited amount of randomness. Particularly, we analyze the maxmin payoff of the maximizer in two models: the first model (the Neyman-Okada/Gossner-Vieille model) forces the maximizer to randomize her action in each stage just by conditioning her decision to the outcomes of a given sequence of random source, whereas in the second model (the Gossner and Tomala model), the maximizer is a team of players who are free to privately randomize their corresponding actions but do not have access to any explicit source of shared randomness needed for coordination. The works of Gossner and Vieille, and Gossner and Tomala adopted the method of types to establish their results; however, we utilize the idea of random hashing which is the core of randomness extractors in the information theory literature. In addition, we adopt the well-studied tool of simulation of a source from another source. By utilizing these tools, we are able to simplify the prior results and generalize them as well. We give a full characterization of the maxmin payoff of the maximizer in the repeated games under study. Particularly, the maxmin value of the first model is fully described by the function $\mathcal {J}(\mathtt h)$, where $\mathcal {J}(\mathtt h)$ is the maximum payoff that the maximizer can secure in the one-shot game by choosing mixed strategies of entropy at most $\mathtt h$. In the second part of the paper, we study the computational aspects of $\mathcal {J}(\mathtt h)$, which has not received much attention in the game theory literature. We observe the equivalence of this problem with entropy minimization problems in other scientific contexts. Next, we offer three explicit lower bounds on the entropy-payoff trade-off curve. To do this, we provide and utilize new results for the set of distributions that guarantee a certain payoff for Alice (mixed strategies corresponding to a security level for Alice). In particular, we study how this set of distributions shrinks as we increase the security level. While the use of total variation distance is common in game theory, our derivation indicates the suitability of utilizing the R\'enyi-divergence of order two.\blfootnote{A short version of this paper was presented at the 2017 IEEE Symposium On Information Theory (ISIT 2017).}
\end{abstract}

\begin{keyword}


Repeated Games \sep Bounded Entropy \sep Randomness Extraction \sep Source Simulation \sep Entropy Minimization \sep Information Theory
\end{keyword}

\end{frontmatter}



\tableofcontents
\section{Introduction}\label{S:intro}

\cite{Nash} proved that if the players of a given one-shot game can randomize on their pure strategies set according to \emph{any} probability distribution, then the game has at least one Nash equilibrium in the mixed strategies. However, if the players have restrictions on implementing their random actions, then implementable Nash equilibria do not necessarily exist. Specifically, assume that one of the players is restricted to constructing her actions as a deterministic function of a given random source. In this setup, Nash equilibria do not necessarily exist. Similarly, in a class of repeated games including two-player zero-sum games, Nash equilibria do not exist if sufficient random bits are not available to the players (See \cite{Hubacek} and \cite{Budinich}). Therefore, when the players are limited to a given amount of randomness, the maximum payoff that each player can guarantee, regardless of what strategies the other players choose, becomes of interest. In this paper, we study the maximum guaranteed payoff in repeated zero-sum games with bounded randomness under two models. In the first model, we consider a finitely repeated version of a two-player zero-sum game in which the private randomness available to one of the players is limited. The second model is a zero-sum game between a team and an adversary player. In this game, the team players are free to randomize their corresponding actions privately but do not have access to any explicit source of shared randomness to coordinate their actions. It is assumed that the adversary player monitors the played actions imperfectly; hence, the history of actions observed by the team players is an implicit and limited random source for their coordination.

A version of the first model was studied by \cite{Gossner2002}. They studied a repeated version of a zero-sum game $G$ between two players Alice and Bob, where Alice was the maximizer and Bob was the minimizer. At each stage, first, Alice observed an independent drawing of a random source $X$ whose distribution was a common knowledge. Both players then played an action which was monitored by the other player. Alice was restricted to choosing the action of each stage as a \emph{deterministic} function of the observed random source up to that stage along with the history of the previous actions, while Bob chose his actions at each stage as a \emph{random} function of the history of the previous actions. Note that the only source of randomization for Alice was the outcomes of random source $X$, while Bob could freely randomize his actions. \cite{Gossner2002} proved that when the number of stages of the repeated game is sufficiently large, the maximum expected average payoff of Alice is specified by the entropy-payoff trade-off curve of the one-shot game $G$. Specifically, for the one-shot game $G$, define $\mathcal{J}(\mathsf h)$ as the maximum expected payoff that Alice can secure (regardless of what Bob plays) by playing mixed actions of entropy at most $\mathsf h$. Let $\mathcal J_{\text{cav}}(.)$ be the upper concave envelope of $\mathcal J(.)$. \cite{Gossner2002} proved that the maximum expected average payoff of Alice in the repeated game converges to $\mathcal J_{\text{cav}}(H(X))$, where $H(X)$ is the entropy of random variable $X$.
This is similar to Shannon's compression formula $H(X)$, which is defined on a single copy of the source (\emph{single letter}), but gives the ultimate compression limit when multiple copies of the source are observed.

Generalizing the model of \cite{Gossner2002}, we study a repeated zero-sum game that considers the possibility of leakage of Alice's random source sequence to Bob, and hence we call it \emph{the repeated game with leaked randomness source}. In other words, we assume that Bob can imperfectly monitor the random source of Alice. More specifically, we assume an i.i.d.\ sequence of pairs $(X_1, Y_1)$, $(X_2, Y_2),\ldots$ distributed according to a given distribution $p(x,y)$. The sequence of $X_1, X_2, \ldots$ is revealed symbol by symbol (causally) to Alice as the game is played out, while the sequence of $Y_1, Y_2, \ldots$ is revealed symbol by symbol to Bob. We can view $Y_i$ as the leakage that Bob obtains about Alice's observation.
As before, Alice cannot randomize freely and is only able to use the randomness in the sequence of $X_1, X_2, \ldots$. We show that $\mathcal{J}_{\text{cav}}(H(X|Y))$ is the maximum payoff that Alice can secure regardless of Bob's actions. This result is a generalization of the result of \cite{Gossner2002}, since when $Y$ is a constant random variable, the conditional entropy $H(X|Y)$ reduces to the unconditional entropy $H(X)$. Furthermore, it is obtained that if a genie provides the values of $Y_1,Y_2,\ldots$ symbol by symbol for Alice, the maximum payoff she can secure remains unchanged. In other words, knowledge of what Bob knows about Alice's observations is not helpful for Alice.

The second model, originally studied by \cite{Gossner_Tomala}, takes into account the limited access of a team players to shared randomness needed for coordination against an adversary. This model, which is called \emph{secret correlation in repeated games with imperfect monitoring}, consists of a zero-sum game $G$ played repeatedly between team $A$ and player $B$. Team $A$, consisting of $m$ players, is the maximizer and player $B$ is the minimizer. At each stage, first, all players choose an action from their corresponding set of actions, and then the players of team $A$ observe all of the actions played, while player $B$ observes a noisy version of the action profile of team $A$. To choose the action of each stage, each player can privately randomize her action and utilize the history of her observations up to that stage. Since the action profile of team $A$ is revealed to player $B$ just through a noisy channel, the players of team $A$ could extract shared random bits from the profile of actions, where the extracted bits are \emph{almost} independent of the observations of player $B$. These shared bits can be utilized by players of team $A$ to coordinate their actions in the upcoming stages. Although the players of team $A$ can randomize their actions privately and extract shared random bits, the extracted shared bits are limited. Consequently, in this setup, the set of implementable strategies of team $A$ is constrained by the amount of shared randomness they could extract and use.

In the model of \cite{Gossner_Tomala}, at each stage $t$, player $B$ monitors the action profile of team $A$ through a noisy channel with output signal $S_t$. They assumed that $S_t$ was not only seen by player $B$ but also by the players of team $A$. Generalizing the result of \cite{Gossner_Tomala}, we remove the assumption that the players of team $A$ observe $S_t$, and show that this causes no reduction in the payoff of team $A$.

The above generalizations of the results of \cite{Gossner2002} and \cite{Gossner_Tomala} are immediate from our different proof technique. To explain this, consider that in order to construct the optimal strategies for the above two models of repeated games, we need to simulate random actions from the available source of randomness. In the first model, the source of randomness is the random sequence of $X_1,X_2,\ldots$; in the second model, the source of randomness is the history of (the imperfectly monitored) played actions. \cite{Gossner2002} and \cite{Gossner_Tomala} introduced a new notion of \emph{absolute Kullback distance}, which has not been used in the information theory literature. They took the absolute Kullback distance as the measure of the accuracy of the simulation, and utilized the \emph{method of types} to simulate the desired random actions from the source of randomness. On the other hand, we consider the total variation distance as the measure of accuracy, and utilize the standard method of \emph{random hashing}. The method of types alone is insufficient for obtaining the generalizations discussed in this paper, and hashing is necessary. We separate the simulation of random actions from the source of randomness into two steps: first, we extract sufficient private random bits from the randomness source, and then simulate the desired actions using the extracted random bits. Therefore, we need to use two tools for ``randomness extraction" and ``simulation of a source from another source". These tools are available in the literature of information theory for stationary or nonstationary/ergodic or non-ergodic sources (\emph{the information spectrum methods}). Randomness extraction is reviewed in \ref{sec:randomness_extraction} of this paper for completeness. Through utilizing these tools, we are able to simplify the proofs and generalize the results of \cite{Gossner2002} and \cite{Gossner_Tomala}.

As stated above, the maximum guaranteed payoff of Alice in the repeated game with leaked randomness source (first model) is characterized in terms of $\mathcal{J}(\mathsf h)$, which is the maximum payoff that Alice can secure in the one-shot game by choosing mixed actions with entropy at most $\mathsf h$. In this paper, we also study the computational aspects of $\mathcal{J}(\mathsf h)$. Equivalently, we study the inverse function of $\mathcal{J}(\mathsf h)$ denoted by $F(w)$. $F(w)$ is the minimum entropy of the randomness consumed by Alice to guarantee payoff $w$ in the one-shot game. We call $F(w)$ the \emph{min-entropy function}. \footnote{One should not confuse our ``min-entropy function" with the term ``min-entropy" commonly used to denote the R\'enyi entropy of order infinity.}

To compute $F(w)$, first, we need to consider the set of distributions on the action of Alice that would secure a payoff $w$ for her. This set will be a polytope in the space of all probability distributions. Then, we should solve an entropy minimization problem over this polytope in the space of probability distributions. In fact, minimizing and maximizing entropy arises in a wide range of contexts. Computing maximum entropy under a set of linear constraints is a well-studied problem with a wide range of applications, \emph{e.g.,} see \cite{Fang}, \cite[p.367]{Cover} and the principle of maximum entropy. \cite{MinMax} have shown that computing the minimum entropy can be also quite significant, and \cite{Watanabe} has shown that many algorithms for clustering and pattern recognition are essentially solving entropy minimization problems. An special case of the entropy minimization problem (with its own applications) is that of finding a joint probability of minimum entropy given its marginal distributions (the marginal distribution is a linear constraint on the joint probability distribution); see \cite{Kocaoglu, Cicalese, Kovacevic}. In addition, \cite{Shor} has shown that the quantum version of the entropy minimization problem is closely related to a number of noticeable problems in quantum information theory.

\cite{Kovacevic2} have shown that entropy minimization problem is an NP-hard non-convex optimization problem. Since the entropy is a concave function over the probability simplex, its minimum occurs at a vertex of the feasible domain. As a result, computation of $F( w )$ leads to a search problem over an exponentially large set. \cite{Computation} have proposed an algorithm to solve the entropy minimization problem (and hence, can be used to compute $F(w)$), but it has no guarantee of finding the global minimum for all polytopes.

$F(w)$ provides a game theoretic interpretation of the entropy minimization problem. In Section~\ref{S:prob_state}, we study the properties of function $F(w)$, and utilize \emph{probabilistic} tools to obtain a number of easy-to-compute bounds on the value of $F( w )$. While the literature on game theory makes extensive use of the total variation distance between distributions, we use the $\chi^2$--divergence (or the Tsallis divergence of order two) to derive a lower bound for $F(w)$. This lower bound is strictly tighter than the bound derived using the total variation distance showing the applicability of $\chi^2$--divergence in the context of game theory.

The rest of this paper is organized as follows: In Section~\ref{sec:pre}, we introduce the notation of this paper, and present a concise discussion of Shannon entropy. In Section~\ref{sec:rep_0}, we study a version of our first model studied by \cite{Gossner2002}, and simplify the proof. The complete version of our first model namely the repeated game with leaked randomness source will be studied in Section~\ref{sec:5op}. In Section~\ref{sec:imperfect_monitoring}, we investigate the problem of secret correlation in repeated games with imperfect monitoring, simplify the proof of \cite{Gossner_Tomala}, and extend their results. Section~\ref{S:prob_state} is devoted to the computational aspects of the min-entropy function and in Section~\ref{S:proofs}, we provide the proofs. Some of the details including a discussion on randomness extraction are left for the appendices.

\section{Preliminaries}\label{sec:pre}
\subsection{Notation} \label{def-sec-nkl4}
In this paper, we use the notation $x^j$ to represent a sequence of variables $(x_1,x_2,\ldots,x_j)$. The same notation is used to represent sequences of random variables, \emph{i.e.,} $X^j=(X_1,X_2,\ldots,X_j)$. Note that this notation is used for sequences that have two subscripts the same way \emph{i.e.,} $X_k^j=(X_{k,1},X_{k,2},\ldots,X_{k,j}).$ Calligraphic letters such as $\mathcal{X},\mathcal{Y}, \mathcal{A}, \mathcal{B}, \dots$ represent finite sets, and $|\mathcal X|$ denotes the cardinality of the finite set $\mathcal X$. Real vectors are represented by bold lower case letters, and bold uppercase letters are used to represent random vectors. For example the probability mass function (pmf) of a random variable $X$ with finite sample space $\mathcal{X}=\{1,2,\dots,n\}$, beside the representation $p_X(x)$, will be also denoted by a vector $\mathbf{p}=(p_1,\dots,p_n)$. When it is obvious from the context, we drop the subscript and use $p(x)$ instead of $p_X(x)$. We say that $X^n$ is drawn i.i.d.\ from $p(x)$ if
$$p(x^n)=\prod_{i=1}^np(x_i).$$
We use $\Delta (\mathcal{A})$ to denote the probability simplex on alphabet $\mathcal{A}$, \emph{i.e.,} the set of all probability distributions on the finite set $\mathcal{A}$.
The total variation distance between pmfs $p_X$ and $q_X$ is denoted by $d_1(p_X, q_X)$ or $\|p_X-q_X\|_{TV}$ and is defined as:
$$d_1(p_X,q_X)=\|p_X-q_X\|_{TV}\triangleq \frac 12 \sum_{x\in \mathcal X}|p_X(x)-q_X(x)|.$$
When the pmfs are represented by vectors $\mathbf{p}$ and $\mathbf{q}$, the total variation distance between them is represented by $d_1(\mathbf{p},\mathbf{q})$ or $\|\mathbf{p}-\mathbf{q}\|_{TV}$. Some of the properties of the total variation distance are summarized in the following lemma.
\begin{lemma}\label{lemma:tv_p}
The following properties hold for the total variation distance:
\begin{description}
\item[\quad\textbf{Property 1:}] $\|p_{E}p_{F|E}-q_{E}p_{F|E}\|_{TV}=\|p_{E}-q_{E}\|_{TV}$;
\item[\quad\textbf{Property 2:}] $\|p_{E}p_{F|E}-q_{E}q_{F|E}\|_{TV}\geq \|p_{E}-q_{E}\|_{TV}$;
\item[\quad\textbf{Property 3:}] $\|p_{E_1}p_{F_1}-p_{E_2}q_{F_2}\|_{TV}\leq \|p_{E_1}-p_{E_2}\|_{TV}+\|p_{F_1}-q_{F_2}\|_{TV}$;
\item[\quad\textbf{Property 4:}] For an arbitrary deterministic function $f$ on the sample space of $E_1$ and $E_2$ we have $\|p_{f(E_1)}-p_{f(E_2)}\|_{TV}\leq \|p_{E_1}-p_{E_2}\|_{TV}$;
\item[\quad\textbf{Property 5:}] $\|p_{E_1}p_{F}-p_{E_2}p_{F}\|_{TV} = \|p_{E_1}-p_{E_2}\|_{TV}$.
\end{description}
\end{lemma}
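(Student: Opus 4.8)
The plan is to prove each of the five properties directly from the definition $\|p_X - q_X\|_{TV} = \frac12\sum_x |p_X(x) - q_X(x)|$, since each is essentially an elementary $\ell_1$ computation once the right sum is written out. I would dispatch them roughly in the order 1, 5, 3, 4, 2, since later parts lean on the same manipulations as earlier ones.

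For Property~1, expand $\|p_E p_{F|E} - q_E p_{F|E}\|_{TV} = \frac12\sum_{e,f} p_{F|E}(f|e)\,|p_E(e) - q_E(e)|$, then sum over $f$ first using $\sum_f p_{F|E}(f|e) = 1$ to get $\frac12\sum_e |p_E(e) - q_E(e)|$. Property~5 is the special case of Property~1 in which $p_{F|E}$ does not depend on $e$ (i.e.\ $F$ is independent of $E$ under both joint laws); alternatively one runs the identical one-line computation with $p_F$ in place of $p_{F|E}$. For Property~3 (the triangle-type inequality for products with possibly differing first marginals), the clean route is to insert the hybrid term $p_{E_2} p_{F_1}$ and apply the triangle inequality for $\|\cdot\|_{TV}$: $\|p_{E_1}p_{F_1} - p_{E_2}q_{F_2}\|_{TV} \le \|p_{E_1}p_{F_1} - p_{E_2}p_{F_1}\|_{TV} + \|p_{E_2}p_{F_1} - p_{E_2}q_{F_2}\|_{TV}$, and then bound each summand by Property~5 (noting $E_1, F_1$ are independent so $p_{E_1}p_{F_1}$ is indeed a product law, and similarly for the hybrid), giving $\|p_{E_1} - p_{E_2}\|_{TV} + \|p_{F_1} - q_{F_2}\|_{TV}$. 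Property~4 is the data-processing inequality for total variation: writing $p_{f(E)}(z) = \sum_{e : f(e) = z} p_E(e)$, we get $\sum_z |p_{f(E_1)}(z) - p_{f(E_2)}(z)| = \sum_z \big|\sum_{e:f(e)=z}(p_{E_1}(e) - p_{E_2}(e))\big| \le \sum_z \sum_{e:f(e)=z} |p_{E_1}(e) - p_{E_2}(e)| = \sum_e |p_{E_1}(e) - p_{E_2}(e)|$, using the triangle inequality in the middle; halving gives the claim.

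Property~2 (that replacing a shared conditional $p_{F|E}$ by two different ones $p_{F|E}, q_{F|E}$ can only increase the distance, i.e.\ $\|p_E p_{F|E} - q_E q_{F|E}\|_{TV} \ge \|p_E - q_E\|_{TV}$) is the one I expect to be the main obstacle, though it is still short: the slick argument is to apply Property~4 with the deterministic marginalization map $f(e,f) \mapsto e$, which sends $p_E p_{F|E}$ to $p_E$ and $q_E q_{F|E}$ to $q_E$, so $\|p_E - q_E\|_{TV} = \|p_{f(E,F)} - p_{f(E',F')}\|_{TV} \le \|p_E p_{F|E} - q_E q_{F|E}\|_{TV}$. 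So the real content is just to observe that Property~2 is a corollary of Property~4 rather than needing its own hands-on estimate. With all five in hand the lemma is complete. $\qed$
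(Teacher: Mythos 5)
Your proposal is correct: each of the five computations checks out, and deriving Property~2 from Property~4 via the marginalization map $(e,f)\mapsto e$, as well as obtaining Property~3 by inserting the hybrid $p_{E_2}p_{F_1}$ and invoking Property~5 twice, are sound. Note that the paper itself states Lemma~\ref{lemma:tv_p} without proof, treating these as standard facts about the total variation distance, so there is no in-paper argument to compare against; your elementary $\ell_1$ derivations are exactly the kind of routine verification the authors implicitly rely on.
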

\subsection{Entropy function}
Let $X\in \mathcal{X}$ and $Y\in \mathcal Y$ be two random variables with joint probability distribution $p_{X,Y}$ and respective marginal distributions $p_X$ and $p_Y$. The Shannon entropy (or simply the entropy) of the random variable $X$ is defined to be:
$$H(X)=\sum_{x\in \mathcal X} -p_X(x) \log(p_X(x)),$$
where $0\log(0)=0$ by continuity and all logarithms in this paper are in base two. Since the entropy is a function of the pmf $p_X$, we sometimes write $H(p_X)$ (or $H(\mathbf p)$ when the pmf is denoted by probability vector $\mathbf p$) instead of $H(X)$.

The conditional Shannon entropy (or simply the conditional entropy) of $X$ given $Y$ is defined as:
\begin{align*}
H(X|Y)&=\sum_{(x,y)\in\mathcal{X}\times \mathcal{Y}} -p_{XY}(x,y) \log(p_{X|Y}(x|y))\\
&=\sum_{y\in \mathcal Y} p_Y(y)H(X|Y=y),
\end{align*}
where $H(X|Y=y)=\sum_{x\in \mathcal X} -p_{X|Y}(x|y) \log(p_{X|Y}(x|y))$.

Using the definition of entropy and conditional entropy one can check that
$$H(X,Y)=H(Y)+H(X|Y).$$
Utilizing the above property iteratively, the chain rule for a sequence of random variables is obtained:
$$H(X^n)=H(X_1)+\sum_{i=2}^n H(X_i|X^{i-1}).$$
Furthermore, the following properties hold for the entropy function:
\begin{itemize}
\item $H(X)\geq 0.$
\item $H(X_1,X_2)\geq H(X_1).$
\item Let $f(x)$ be an arbitrary deterministic function then $H(f(X))\leq H(X)$.
\end{itemize}

\section{Repeated games with a bounded randomness source}\label{sec:rep_0}
In this section, we revisit the repeated game studied by \cite{Gossner2002}. The high-level picture of the proof of \cite{Gossner2002} is as follows: it uses the so-called ``block-Markov" proof technique of information theory, where we divide time into a number of blocks and Alice uses her observations in each block to produce the actions for the next block. Our goal is to show that the high-level picture of the proof presented in \cite{Gossner2002} can be made precise in an easier manner using standard information theory tools.

\subsection{Problem definition}\label{subsec:gossner}
Let us first begin with reviewing the definition of the repeated game with a bounded randomness source studied by \cite{Gossner2002}. Consider a $T$ stage repeated zero-sum game between players Alice($A$) and Bob($B$) with respective finite action sets $\mathcal{A}=\{1,\dots,n\}$ and $\mathcal{B}=\{1,\dots,n'\}$, where $n$ and $n'$ are natural numbers. Let $X^T=(X_{1},X_2,\dots,X_{T})$ be a sequence of random variables drawn i.i.d.\ from a sample space $\mathcal{X}$ with law $p_{X}$.
 In every stage $t\in\{1,2,\dots,T\}$, first, Alice observes random source $X_{t}$ privately, and then, Alice and Bob choose actions $A_t\in \mathcal{A}$ and $B_t\in\mathcal{B}$, respectively. At the end of stage $t$, both players observe the chosen actions $A_t$ and $B_t$ and Alice gets stage payoff $u_{A_t,B_t}$ from Bob. In order to choose actions at stage $t$, players make use of the history of their observations up to stage $t$, which is denoted by $\mathsf H_1^t=(X_{1},A_1,B_1, \dots,X_{t-1},A_{t-1},B_{t-1},X_{t})$ for Alice, and $\mathsf H_2^t=(A_1,B_1,\dots,A_{t-1},B_{t-1})$ for Bob. Let $\sigma_t:(\mathcal{A}\times\mathcal{B})^{t-1}\times \mathcal{X}^t\to \mathcal{A}$ and $\tau_t:(\mathcal{A}\times\mathcal{B})^{t-1}\to \mathcal{B}$ be the functions mapping the history of observations of Alice and Bob to actions at stage $t$, so $A_t=\sigma_t(\mathsf H_1^t)$ and $B_t=\tau_t(\mathsf H_2^t)$. Note that Alice does not have access to any private source of randomness except $\mathsf H_1^t$, so she has to use the deterministic function $\sigma_t(\cdot)$, while Bob can utilize the random function $\tau_t(\cdot)$. We call the $T$-tuples $\sigma=(\sigma_1,\sigma_2,\dots,\sigma_T)$ and $\tau=(\tau_1,\tau_2,\dots,\tau_T)$ the strategies of Alice and Bob, respectively. The expected average payoff for Alice up to stage $T$ induced by strategies $\sigma$ and $\tau$ is denoted by $\lambda_T(\sigma,\tau)$, which is
\begin{equation}\label{eq:exp_avg}
\lambda_T(\sigma,\tau)=\E_{\sigma,\tau}\left[\frac{1}{T}\sum_{t=1}^T u_{A_t,B_t}\right],
\end{equation}
where $\E_{\sigma,\tau}$ denotes the expectation with respect to the distribution induced by i.i.d.\ repetitions of $p_{X}$ and strategies $\sigma$ and $\tau$. Alice wishes to maximize $\lambda_T(\sigma,\tau)$, and Bob wishes to minimize it.

\begin{definition}\label{def:max-min value}
Let $v$ be an arbitrary real value:
\begin{itemize}
\item Alice can secure $v$ if there exists a strategy $\sigma^*$ for Alice such that for all strategy $\tau$ of Bob we have $\liminf_{T\to \infty} \lambda_T(\sigma^*,\tau) \geq v$.
\item Bob defends $v$ if given an arbitrary strategy $\sigma$ for Alice, there exists a strategy $\tau^*$ for Bob such that $\limsup_{T\to \infty} \lambda_T(\sigma,\tau^*) \leq v$.
\item $v$ is the maxmin value of the repeated game, if Alice can secure $v$ and Bob can defend $v$.
\end{itemize}
\end{definition}
\begin{theorem}[\cite{Gossner2002}]\label{T:maxmin_val_gossner}
The maxmin value of the repeated game with bounded randomness source defined in Section~\ref{subsec:gossner} is $\mathcal J_{\text{cav}} (H(X))$, where $\mathcal J_{\text{cav}} (\mathsf h)$ is the upper concave envelope of
\begin{equation}\label{eq:j}
\mathcal{J}(\mathsf h) = \max_{\mathbf{p}\in \Delta(\mathcal{A}), H(\mathbf{p})\leq \mathsf{h}} \min_{b\in\mathcal{B}} \E_{\mathbf{p}}[u_{A,b}],
\end{equation}
where $\E_{\mathbf{p}}$ denotes the expectation with respect to $\mathbf{p}$.
\end{theorem}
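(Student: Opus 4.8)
The plan is to prove the two inequalities of the maxmin characterization separately: that Bob can defend $\mathcal{J}_{\text{cav}}(H(X))$ (the converse), and that Alice can secure it (the achievability). \emph{For the converse}, fix an arbitrary strategy $\sigma$ of Alice and let Bob play the \emph{deterministic} myopic best response: at stage $t$, having observed $\mathsf H_2^t=(a^{t-1},b^{t-1})$, Bob uses his knowledge of $\sigma$, of $p_X$, and of his own past actions to form the conditional law $p_{A_t\mid A^{t-1}=a^{t-1},B^{t-1}=b^{t-1}}$ and plays an action $b_t$ that is a best response to it (ties broken by a fixed rule). Conditioned on a realized history $h=(a^{t-1},b^{t-1})$, the stage-$t$ payoff then equals $\min_{b}\E_{p_{A_t\mid h}}[u_{A,b}]\le\mathcal{J}\big(H(A_t\mid A^{t-1}=a^{t-1},B^{t-1}=b^{t-1})\big)$ straight from the definition \eqref{eq:j} of $\mathcal{J}$. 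Taking expectation over $h$, using $\mathcal{J}\le\mathcal{J}_{\text{cav}}$ together with the concavity of $\mathcal{J}_{\text{cav}}$, and then averaging over $t$ with concavity again, gives
\[
\lambda_T(\sigma,\tau^*)\;\le\;\mathcal{J}_{\text{cav}}\!\left(\tfrac{1}{T}\sum_{t=1}^{T}H(A_t\mid A^{t-1},B^{t-1})\right)\;\le\;\mathcal{J}_{\text{cav}}\!\left(\tfrac{1}{T}\,H(A^{T})\right),
\]
where the last inequality uses $H(A_t\mid A^{t-1},B^{t-1})\le H(A_t\mid A^{t-1})$ and the chain rule. Since Bob's strategy is deterministic, $A^{T}$ is a deterministic function of $X^{T}$, hence $H(A^{T})\le H(X^{T})=T\,H(X)$; as $\mathcal{J}_{\text{cav}}$ is non-decreasing (because $\mathcal{J}$ is non-decreasing and bounded), we obtain $\lambda_T(\sigma,\tau^*)\le\mathcal{J}_{\text{cav}}(H(X))$ for every $T$, which is even stronger than the required $\limsup_T\lambda_T(\sigma,\tau^*)\le\mathcal{J}_{\text{cav}}(H(X))$.

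\emph{For the achievability} I would use a block-Markov construction. If $H(X)\ge\log|\mathcal{A}|$ Alice has more than enough randomness to simulate any mixed action and simply plays an optimal one-shot mixed action each stage, so assume $H(X)<\log|\mathcal{A}|$. Pick $\mathsf{h}_1,\mathsf{h}_2\le\log|\mathcal{A}|$ and $\lambda\in[0,1]$ with $\lambda\mathsf{h}_1+(1-\lambda)\mathsf{h}_2=H(X)$ and $\lambda\mathcal{J}(\mathsf{h}_1)+(1-\lambda)\mathcal{J}(\mathsf{h}_2)=\mathcal{J}_{\text{cav}}(H(X))$, and let $\mathbf{p}_1,\mathbf{p}_2\in\Delta(\mathcal{A})$ attain them, i.e.\ $H(\mathbf{p}_i)=\mathsf{h}_i$ and $\min_{b}\E_{\mathbf{p}_i}[u_{A,b}]=\mathcal{J}(\mathsf{h}_i)$. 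Partition the horizon into blocks of slowly growing length $L$. During block $k$, Alice plays (using the randomness harvested in block $k-1$) while also banking the fresh i.i.d.\ chunk $X^{(k)}\in\mathcal{X}^{L}$; at the start of block $k+1$ she applies the randomness-extraction result (reviewed in the appendix) to $X^{(k)}$ to obtain $\approx L\,H(X)-o(L)$ bits that are $\epsilon_L$-close in total variation to uniform, and then applies the source-simulation (resolvability) result to convert these bits into an action block $\mathbf{A}^{(k+1)}\in\mathcal{A}^{L}$ whose law is $\epsilon_L$-close in total variation to the product $\mathbf{p}_1^{\otimes\lambda L}\otimes\mathbf{p}_2^{\otimes(1-\lambda)L}$, with $\epsilon_L\to0$ as $L\to\infty$; in the first block Alice plays a fixed action, and a slight shortfall of bits in later blocks is absorbed by playing a vanishing fraction of stages by default.

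For the payoff analysis, the crucial point is that in this model Bob never observes the source: his entire history before block $k+1$ is a function of $X^{(1)},\dots,X^{(k-1)}$ and of his own private randomness, hence independent of $X^{(k)}$ and therefore of the bits Alice extracts in block $k$ and of the ideal block law she aims to play. In the idealized experiment in which $\mathbf{A}^{(k+1)}$ is exactly $\mathbf{p}_1^{\otimes\lambda L}\otimes\mathbf{p}_2^{\otimes(1-\lambda)L}$ and independent of Bob's pre-block state, at each stage $t$ inside block $k+1$ the action $A_t$ is drawn from $\mathbf{p}_1$ or $\mathbf{p}_2$ independently of the block's earlier actions, of Bob's pre-block state, and of Bob's randomness, hence independently of $B_t$; thus $\E[u_{A_t,B_t}]\ge\min_{b}\E_{\mathbf{p}_j}[u_{A,b}]$ and the ideal block-average payoff is at least $\lambda\mathcal{J}(\mathsf{h}_1)+(1-\lambda)\mathcal{J}(\mathsf{h}_2)=\mathcal{J}_{\text{cav}}(H(X))$. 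Since, given $\tau$, Bob acts on the block only through a fixed stochastic kernel, the properties of Lemma~\ref{lemma:tv_p} (notably Properties~1 and~4) together with the triangle inequality show that the real block transcript is within $O(\epsilon_L)$ in total variation of the ideal one, so the real block-average payoff is at least $\mathcal{J}_{\text{cav}}(H(X))-O(\epsilon_L)\max_{a,b}|u_{a,b}|$; averaging over blocks, a standard block-length-growth argument makes both the boundary loss and $\epsilon_L$ vanish, yielding $\liminf_T\lambda_T(\sigma^*,\tau)\ge\mathcal{J}_{\text{cav}}(H(X))$ uniformly over Bob's strategies $\tau$.

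The step I expect to be the main obstacle is exactly this achievability bookkeeping: producing in each block randomness that is \emph{simultaneously} near-uniform and near-independent of everything Bob has observed, feeding it through source simulation, and then controlling the accumulation of total-variation errors through a block payoff that Bob is adaptively influencing within the block. This is precisely where the two information-theoretic tools (randomness extraction and source resolvability), together with the total-variation calculus of Lemma~\ref{lemma:tv_p}, replace the method-of-types argument of \cite{Gossner2002}; the converse, by contrast, is short once one notices that a deterministic best response for Bob forces $H(A^{T})\le T\,H(X)$.
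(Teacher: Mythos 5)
Your proposal is correct and follows essentially the same route as the paper: the converse via Bob's deterministic myopic best response, Jensen's inequality applied to $\mathcal J_{\text{cav}}$, the chain rule, and $H(A^T)\le H(X^T)=TH(X)$; the achievability via a block-Markov scheme in which each block's actions are simulated (through randomness extraction plus source simulation, with the total-variation calculus of Lemma~\ref{lemma:tv_p}) from the previous block's source observations, exploiting that Bob's pre-block history is independent of that source chunk. The only differences (slowly growing versus fixed block lengths, and handling the boundary case by a vanishing fraction of default stages rather than reducing to strict inequality) are cosmetic.
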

In Section~\ref{subsec:gossner_proof}, we give a simplified proof of \cite{Gossner2002} that explains how Alice can secure $\mathcal J_{\text{cav}} (H(X))$. The complete proof is left for Section~\ref{sec:5op}, where we present a generalized version of the above result.

\subsection{Simplifying the proof of \cite{Gossner2002}}\label{subsec:gossner_proof}

As mentioned in \cite{Gossner2002}, the upper concave envelope $\mathcal J_{\text{cav}}(\cdot)$ at $\mathsf{h}=H(X)$ can be expressed as the convex combination
$$\gamma \mathcal J(H(p_A^{(1)}))+(1-\gamma)\mathcal J(H(p_A^{(2)})),$$
for some $\gamma\in[0,1]$ and pmfs $p_A^{(1)}$ and $p_A^{(2)}$ in $\Delta(\mathcal{A})$, where $p_A^{(1)}$ and $p_A^{(2)}$ secure respective payoffs $\mathcal{J}(H(p_A^{(1)}))$ and $\mathcal{J}(H(p_A^{(2)}))$ in the one-shot game and the following equality is satisfied:
$$\gamma H(p_A^{(1)}) + (1-\gamma)H(p_A^{(2)})=H(X).$$
As a result, it suffices to show that for any $p_A^{(1)}$ and $p_A^{(2)}$ satisfying
$$\gamma H(p_A^{(1)}) + (1-\gamma)H(p_A^{(2)})< H(X),$$
Alice can use the source of randomness $X$ to secure payoffs arbitrarily close to
$$\gamma \min_{b\in \mathcal{B}}\E_{p_A^{(1)}}[u_{A,b}] + (1-\gamma) \min_{b\in \mathcal{B}}\E_{p_A^{(2)}}[u_{A,b}],$$
which is the weighted average of the payoffs that input distributions $p_A^{(1)}$ and $p_A^{(2)}$ secure.

Let the game be played for $T$ stages. Take some $T$ of the form $T=NL$, and divide the total $T$ stages into $N$ blocks of length $L$. Excluding the first block, we generate the action sequence of each block as a function of the random source observed during the previous block (we ignore the payoff of the first block throughout the discussion, since by taking the number of blocks $N$ large enough, the contribution of the first block in Alice's net average payoff becomes negligible). More specifically, excluding the first block, each block is further divided into two subblocks, where the first subblock takes up $\gamma$ fraction of the block as illustrated in Fig.~\ref{F:blocks}. Alice aims to use her observed random source during the previous block to play \emph{almost} i.i.d.\ according to $p_A^{(1)}$ during the first subblock and according to $p_A^{(2)}$ during the second subblock. In addition, Alice wants her action at any given stage to be also \emph{almost} independent of Bob's observations up to that stage. Observe that if Alice could produce actions that were \emph{perfectly} i.i.d.\ according to $p_A^{(1)}$ in the first subblock and \emph{perfectly} i.i.d.\ according to $p_A^{(2)}$ in the second subblock, both independent of Bob's observations, then in $\gamma$ fraction of the stages in a block, she would secure the average payoff of
$\min_{b\in \mathcal{B}}\E_{p_A^{(1)}}[u_{A,b}]$ per action, and in the remaining $1-\gamma$ fraction she would secure the payoff of $\min_{b\in \mathcal{B}}\E_{p_A^{(2)}}[u_{A,b}]$. This would give Alice a total payoff of
$$\gamma \min_{b\in \mathcal{B}}\E_{p_A^{(1)}}[u_{A,b}] + (1-\gamma) \min_{b\in \mathcal{B}}\E_{p_A^{(2)}}[u_{A,b}].$$
We will now show that regarding the total variation distance as the measure of accuracy, Alice can play almost i.i.d.\ according to $p_A^{(1)}$ during the first subblock of each block, and almost i.i.d.\ according to $p_A^{(2)}$ during the second subblock of each block. Therefore, Alice can secure payoffs arbitrarily close to $\gamma \min_{b\in \mathcal{B}}\E_{p_A^{(1)}}[u_{A,b}] + (1-\gamma) \min_{b\in \mathcal{B}}\E_{p_A^{(2)}}[u_{A,b}]$ and this will complete the proof.

\setlength{\unitlength}{1cm}
\begin{figure}
\centering
\scalebox{1}[1]{
\begin{picture}(15,4)
\put(0,0){\includegraphics[width=15cm]{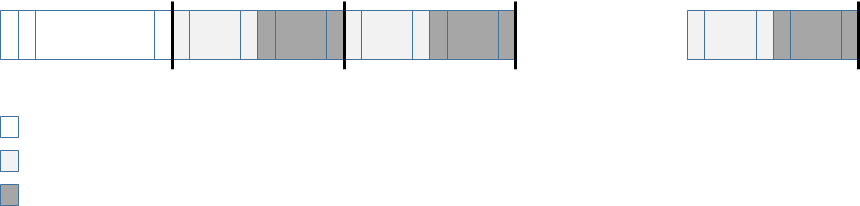}}

\small
\put(.7,0){{Time slots in which Alice draws a random action with law $p_A^{(2)}$}}
\put(.7,.61){Time slots in which Alice draws a random action with law $p_A^{(1)}$}
\put(.7,1.22){Time slots of the first block in which Alice plays the pure action $1\in\mathcal{A}$}
\normalsize

\put(1.5,2){$1$}
\put(4.5,2){$2$}
\put(7.5,2){$3$}
\put(10.5,2){$\dots$}
\put(13.5,2){$N$}

\tiny
\put(.1,3.5){$1$}
\put(.4,3.5){$2$}
\put(1.5,3.5){$\dots$}
\put(2.75,3.5){$L$}

\put(3.1,3.5){$1$}
\put(3.5,3.5){$\dots$}
\put(4,3.5){$\lceil \gamma L \rceil$}
\put(5,3.5){$\dots$}
\put(5.7,3.5){$L$}

\put(6.1,3.5){$1$}
\put(6.5,3.5){$\dots$}
\put(7,3.5){$\lceil \gamma L \rceil$}
\put(8,3.5){$\dots$}
\put(8.7,3.5){$L$}

\put(10.55,3.5){$\dots$}

\put(12.1,3.5){$1$}
\put(12.5,3.5){$\dots$}
\put(13,3.5){$\lceil \gamma L \rceil$}
\put(14,3.5){$\dots$}
\put(14.7,3.5){$L$}
\normalsize
\end{picture}
}
\caption{Illustration of the block Markov strategy}
\label{F:blocks}
\end{figure}

Remember that $\gamma H(p_A^{(1)}) + (1-\gamma)H(p_A^{(2)})< H(X).$ Intuitively speaking, since Alice's observation from a block has entropy $LH(X)$, which is larger than the entropy of her intended action in the next block $L\gamma H(p_A^{(1)}) + L(1-\gamma)H(p_A^{(2)})$, she should be able to find a proper mapping to produce her actions in the current block from her observations in the preceding block. \cite{Gossner2002} propose such a mapping and put effort to prove its correctness. This is where most of the effort is spent. The essential problem here is to simulate a source from another source (here the observations from one block to actions in the next block). This problem is solved in the information theory literature. As stated below in Lemma~\ref{L:han}, which is adopted from \cite[p. 110]{han}, the measure of relevance to compare between the two sources is \emph{inf-entropy} and \emph{sup-entropy}, rather than entropy. However, for i.i.d.\ sources (or concatenation of i.i.d.\ sources), inf-entropy and sup-entropy reduce to the normal Shannon entropy, as discussed in Remark~\ref{remark:iid} below.
\begin{definition}
Sup-entropy and inf-entropy of a random source $Z_1,Z_2,\ldots$ are denoted by $\underline{H}(Z)$ and $\overline{H}(Z)$, respectively, and defined as follows:
$$\underline{H}(Z)=\text{p-}\liminf_{L\to\infty}\frac{1}{L}\log\frac{1}{p_{Z^L}(Z^L)},$$
$$\overline{H}(Z)=\text{p-}\limsup_{L\to\infty}\frac{1}{L}\log\frac{1}{p_{Z^L}(Z^L)},$$
where for a random sequence $\{W_t\}$,
$$\text{p-}\liminf_{t\to\infty}W_t=\sup \left\{\beta|\lim_{t\to\infty}\Pr[W_t<\beta]=0\right\},$$
$$\text{p-}\limsup_{t\to\infty}W_t=\inf \left\{\beta|\lim_{t\to\infty}\Pr[W_t>\beta]=0\right\}.$$
\end{definition}
\begin{remark}\label{remark:iid}
If $Z^L=(Z_1,Z_2,\ldots,Z_L)$ is a concatenation of two sequences of i.i.d.\ random variables on $\mathcal{Z}$ with distribution
$$p_{Z^L}(z^L)=\prod_{i=1}^{\lceil \gamma L\rceil}p^{(1)}(z_i) \prod_{i=\lceil \gamma L\rceil+1}^L p^{(2)}(z_i),$$
where $\lceil a \rceil$ is the smallest integer greater than or equal to $a$, then, we have
\begin{align*}
\lim_{L\to \infty} \frac{1}{L}\log&\frac{1}{p_{Z^L}(Z^L)} = \lim_{L\to \infty} \frac{1}{L}\left(\sum_{i=1}^{\lceil \gamma L\rceil} \log \frac{1}{p^{(1)}(Z_i)}+\sum_{i=\lceil \gamma L\rceil+1}^L \log \frac{1}{p^{(2)}(Z_i)}\right) \\
&=\gamma \sum_{z\in \mathcal{Z}} p^{(1)}(z) \log \frac{1}{p^{(1)}(z)}+(1-\gamma) \sum_{z\in \mathcal{Z}}p^{(2)}(z) \log \frac{1}{p^{(2)}(z)} &\textrm{with probability 1}\\
&= \gamma H(p^{(1)})+(1-\gamma)H(p^{(2)}) &\textrm{with probability 1},
\end{align*}
where the first equality results from the independence of the random variables $Z_1,Z_2,\ldots$ and the second equality follows from the weak law of large numbers. Thus
$$\text{p-}\liminf_{L\to\infty}\frac{1}{L}\log\frac{1}{p_{Z^L}(Z^L)}=\gamma H(p^{(1)})+(1-\gamma)H(p^{(2)}),$$
$$\text{p-}\limsup_{L\to\infty}\frac{1}{L}\log\frac{1}{p_{Z^L}(Z^L)}=\gamma H(p^{(1)})+(1-\gamma)H(p^{(2)}),$$
hence $\underline{H}(Z)=\overline{H}(Z)=\gamma H(p^{(1)})+(1-\gamma)H(p^{(2)})=\lim_{L\to \infty}1/L H(Z^L)$.
\end{remark}
\begin{lemma}[Simulation of a source from another source]\label{L:han}
For each natural number $L$, consider an arbitrary distribution $p^{(L)}_{X^L}$ on sequences $(x_1, x_2, \ldots, x_L)\in\mathcal{X}^L$. Similarly, for each natural number $L$, consider an arbitrary distribution $q^{(L)}_{A^L}$ on sequences $(a_1, a_2, \ldots, a_L)\in\mathcal{A}^L$.
If $\underline{H}(X)>\overline{H}(A)$, then, for each natural number $L$, there exists a mapping $\varphi_L:\mathcal{X}^L\to \mathcal{A}^L$ such that the total variation distance between the distributions of $\varphi_L(X^L)$ and $A^L$ vanishes asymptotically, \emph{i.e.,} $$\lim_{L\to \infty}\| q^{(L)}_{A^L}-p^{(L)}_{\varphi_L(X^L)} \|_{TV}=0.$$
\end{lemma}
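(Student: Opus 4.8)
The plan is to reduce Lemma~\ref{L:han} to the intrinsic randomness / resolvability machinery of the information-spectrum method, using an intermediate uniform random variable as a ``currency''. Concretely, pick a rate $R$ strictly between $\overline H(A)$ and $\underline H(X)$, and let $M_L=2^{\lfloor RL\rfloor}$, so that $U_L$ denotes a random variable uniform on $\{1,\ldots,M_L\}$. The argument then splits into two halves: \textbf{(i) extraction}, showing that from $X^L$ one can produce (via a deterministic map, after the standard hashing/leftover-hash argument) a random variable whose distribution is within vanishing total variation of $U_L$, which is possible precisely because $R<\underline H(X)$; and \textbf{(ii) resolvability}, showing that from $U_L$ one can produce (via a deterministic map) a random variable whose distribution is within vanishing total variation of $q^{(L)}_{A^L}$, which is possible precisely because the ``sup-entropy'' of the target source is below $R$. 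Composing the two deterministic maps $\mathcal{X}^L\to\{1,\ldots,M_L\}\to\mathcal{A}^L$ gives $\varphi_L$, and the triangle inequality for total variation (Property~3 and Property~4 of Lemma~\ref{lemma:tv_p}) controls the error: $\|q^{(L)}_{A^L}-p_{\varphi_L(X^L)}\|_{TV}\le \|q^{(L)}_{A^L}-p_{\psi_L(U_L)}\|_{TV}+\|p_{\psi_L(U_L)}-p_{\psi_L(\text{extracted})}\|_{TV}$, and the second term is bounded by the extraction error via Property~4 (applying the deterministic map $\psi_L$ does not increase total variation).

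For step (i), I would invoke the randomness-extraction result reviewed in \ref{sec:randomness_extraction}: when $R<\underline H(X)=\text{p-}\liminf \frac1L\log\frac1{p_{X^L}(X^L)}$, the probability that $X^L$ lands in the ``low-probability'' set $\{x^L: p_{X^L}(x^L)>2^{-RL}\}$ — wait, more carefully, one conditions on the typical set where $p_{X^L}(X^L)\le 2^{-RL(1+o(1))}$, whose complement has vanishing probability by definition of p-$\liminf$ — and then a random (or explicitly constructed) hash function from $\mathcal X^L$ to $\{1,\ldots,M_L\}$ yields near-uniform output by the leftover hash lemma / smoothing argument. The key quantitative input is that the min-entropy of $X^L$ restricted to the typical set is at least $RL+\omega(1)$, which follows from $R<\underline H(X)$; this gives an extracted distribution within $o(1)$ of uniform.

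For step (ii), the relevant fact is the source-resolvability theorem (Han--Verdú type, see \cite[p.~110]{han}): a general source $\{A^L\}$ can be approximated in total variation by a deterministic image of a uniform random variable of rate $R$ whenever $R>\overline H(A)=\text{p-}\limsup\frac1L\log\frac1{q_{A^L}(A^L)}$. Here I would either cite this directly from \cite{han} or sketch the one-paragraph proof: partition $\mathcal A^L$ by the value of $q_{A^L}$, note the typical set $\{a^L: q_{A^L}(a^L)\ge 2^{-RL(1-o(1))}\}$ carries probability $1-o(1)$ and has at most $2^{RL(1-o(1))}\le M_L$ elements, so allocate the $M_L$ uniform atoms to elements of this set in proportion (rounding) to their target probabilities; the rounding error and the leftover mass outside the typical set are both $o(1)$ in total variation.

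The main obstacle is the bookkeeping in step (i): carefully choosing the auxiliary rate $R$ (and, if one wants to be fully rigorous, a sequence $R_L\to R$ absorbing the $o(1)$ slack in the p-$\liminf$/p-$\limsup$ definitions), handling the atypically-large-probability part of $X^L$ without destroying near-uniformity, and confirming that the hashing argument survives when $p^{(L)}_{X^L}$ is an \emph{arbitrary} (non-i.i.d., non-stationary) sequence of distributions — only the spectral quantities $\underline H$, $\overline H$ are assumed, so everything must go through the information-spectrum definitions rather than any law of large numbers. Once the rates are pinned down, steps (i) and (ii) are each essentially a citation plus a short smoothing estimate, and the composition is immediate from Lemma~\ref{lemma:tv_p}.
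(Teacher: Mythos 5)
Your proposal is correct, but it is worth noting that the paper never proves Lemma~\ref{L:han} itself: it is taken as a black box from \cite[p.~110]{han}, so there is no internal proof to compare against. The route you describe --- fix $\overline{H}(A)<R<\underline{H}(X)$, extract nearly uniform bits of rate $R$ from $X^L$, then resolve the target distribution $q^{(L)}_{A^L}$ from those uniform bits, compose the two deterministic maps and control the error by the triangle inequality together with Property~4 of Lemma~\ref{lemma:tv_p} --- is precisely the standard information-spectrum (intrinsic randomness plus resolvability) argument behind the cited theorem, and it mirrors the decomposition the paper itself performs one level up: in the proof of Proposition~\ref{Pro:mapping} the authors first hash $X_p^L$ down to an almost-uniform index (Lemma~\ref{L:yasayi}, built on Theorem~\ref{theorem:leftover} and Corollary~\ref{corollary:smooth}) and then feed that index into Lemma~\ref{L:han}. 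So your proof essentially makes the black box self-contained using tools already in \ref{sec:randomness_extraction}. The only bookkeeping I would insist on: (a) in step (i) introduce a second rate $R'$ with $R<R'<\underline{H}(X)$, take the set $T_L=\{x^L:p^{(L)}_{X^L}(x^L)\le 2^{-R'L}\}$, whose complement has vanishing probability by the very definition of $\underline{H}(X)$ (no law of large numbers is available or needed), and apply the smooth leftover-hash bound of Corollary~\ref{corollary:smooth} with $Y$ trivial --- this step is valid for an arbitrary sequence of distributions $p^{(L)}_{X^L}$ because only the smooth collision/min-entropy on $T_L$ enters, the i.i.d.\ structure being used in the paper only inside Remark~\ref{remark:iid_min}, which you replace by the spectral definition; (b) in step (ii) similarly pick $R''$ with $\overline{H}(A)<R''<R$, so that the high-probability set has at most $2^{R''L}$ elements and the proportional-allocation (rounding) error is of order $2^{-(R-R'')L}$ plus the atypical mass. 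With those rates pinned down, your composition and error estimate are exactly right.
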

The proof of Lemma~\ref{L:han} can be found in \cite[p. 110]{han}.
\begin{remark}
Observe that $\lim_{L\to \infty}\| p_{A^L}-p_{\varphi_L(X^L)} \|_{TV}=0$ in the above lemma implies that $\varphi_L(X^L)$ is almost statistically indistinguishable from $p_{A^L}$ for large values of $L$. In other words, there is \underline{no statistical test} that can distinguish between $p_{A^L}$ and $p_{\varphi_L(X^L)}$ with a non-negligible probability.
\end{remark}
Take some $\epsilon>0$. Since $\gamma H(p_A^{(1)})+(1-\gamma)H(p_A^{(2)})<H(X)$, Lemma~\ref{L:han} and Remark~\ref{remark:iid} imply that there exist mappings $\varphi_L:\mathcal{X}^L\to \mathcal{A}^L$ such that for large $L$, the pmf of $A^L=\varphi_L(X^L)$ is approximated (in total variation distance) as
\begin{align}\left\|p_{A^L}(a^L) - \prod_{t=1}^{\lceil \gamma L\rceil}p_A^{(1)}(a_t) \prod_{t=\lceil \gamma L\rceil+1}^{L}p_A^{(2)}(a_t)\right\|_{TV}\leq \epsilon.\label{eqn:neq1}\end{align}
Therefore, by dividing the $T$ stages into $N$ blocks of $L$ stages, we construct a strategy $\sigma$ for Alice as follows: in each block (excluding the first block) Alice chooses action sequence $A_c^L=\varphi_L(X_p^L)$, where $X_p^L$ is the randomness source observed during the {previous block}; \emph{i.e.,} $A_c^L$ is for the current block, but $X_p^L$ is for the previous block.

The ideal distribution $\prod_{t=1}^{\lceil \gamma L\rceil}p_A^{(1)}(a_t) \prod_{t=\lceil \gamma L\rceil+1}^{L}p_A^{(2)}(a_t)$ gives Alice a payoff of
$$ \lceil \gamma L\rceil \min_{b\in \mathcal{B}}\E_{p_A^{(1)}}[u_{A,b}] + (L-\lceil \gamma L\rceil ) \min_{b\in \mathcal{B}}\E_{p_A^{(2)}}[u_{A,b}].$$
Alice's actual distribution is within $\epsilon$ total variation distance of the ideal distribution. By relating the total variation distance to the payoff differences, we obtain that difference between the payoff under the actual distribution and the ideal one is at most $\epsilon$ times $2L\mathsf M$, where $\mathsf M$ is the maximum absolute value entry of the payoff table ($L\mathsf M$ is the maximum absolute value entry of the $L$ repetitions of the game). Thus, the distance between the average payoff of the block differs by at most $2\epsilon \mathsf M$ from the average payoff under the ideal distribution. This completes the proof.

\begin{remark}
We constructed the strategy $\sigma$ for the case $T=NL$. In general, for $T=NL+\delta$, where $\delta<L$, one can extend the first block to contain $L+\delta$ stages and choose $N$ large enough to diminish the effect of the first block on the average payoff.
\end{remark}

\subsection{Generalized model: leakage of the randomness source}\label{sec:5op}
We generalize the model of the repeated game of Section~\ref{subsec:gossner} and define the repeated game with leaked randomness source as follows: Let $X^T=(X_{1},X_2,\dots,X_{T})$ and $Y^T=(Y_{1},Y_{2},\dots,Y_{T})$ be two sequences of random variables such that $(X_t, Y_t)$, for $t\in\{1,2,\dots,T\}$, are drawn i.i.d.\ from a sample space $\mathcal X\times \mathcal Y$ with law $p_{XY}$. In every stage $t\in\{1,2,\dots,T\}$, Alice and Bob observe respective random sources $X_{t}$ and $Y_{t}$ privately and choose actions $A_t\in \mathcal{A}$ and $B_t\in\mathcal{B}$. Thus the only difference of this model with the model of Section~\ref{subsec:gossner} is that in every stage $t$, Bob observes $Y_t$ which is related to the observation of Alice $X_t$; hence, the history of observations of Bob is $\mathsf H_2^t=(Y_{1},A_1,B_1,\dots,Y_{t-1},A_{t-1},B_{t-1},Y_{t})$ and $B_t=\tau_t(\mathsf H_2^t)$, where $\tau_t:(\mathcal{A}\times\mathcal{B})^{t-1}\times \mathcal{Y}^t\to \mathcal{B}$. As before, the history of observations of Alice is $\mathsf H_1^t=(X_{1},A_1,B_1, \dots,X_{t-1},A_{t-1},B_{t-1},X_{t})$. Note that Alice does not have access to any private sources of randomness except $\mathsf H_1^t$ and hence, she has to use the deterministic function $\sigma_t(\cdot)$, while Bob can utilize the random function $\tau_t(\cdot)$. Alice (respectively Bob) wishes to maximize (respectively minimize) the expected average payoff $\lambda_T(\sigma,\tau)$ defined in Equation \eqref{eq:exp_avg}. Note that if we set $Y_t$ to be constant random variables, the above repeated zero-sum game reduces to the one considered in Section~\ref{subsec:gossner}.

The maxmin value of the repeated game with leaked randomness source is defined as in Definition~\ref{def:max-min value} and characterized as follows:
\begin{theorem}\label{T:maxmin_val}
The maxmin value of the repeated game with leaked randomness source defined in Section~\ref{sec:5op} is $ \mathcal J_{\text{cav}} (H(X|Y))$, where $\mathcal J_{\text{cav}} (.)$ is defined as in Theorem~\ref{T:maxmin_val_gossner}.
\end{theorem}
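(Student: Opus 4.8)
The plan is to prove the two halves of the characterization separately. \textbf{Alice secures $\mathcal J_{\text{cav}}(H(X|Y))$.} I would run the block-Markov scheme of Section~\ref{subsec:gossner_proof}: write $\mathcal J_{\text{cav}}(H(X|Y))=\gamma\mathcal J(H(p_A^{(1)}))+(1-\gamma)\mathcal J(H(p_A^{(2)}))$ and, after perturbing $(\gamma,p_A^{(1)},p_A^{(2)})$ at negligible payoff cost, assume $\gamma H(p_A^{(1)})+(1-\gamma)H(p_A^{(2)})<H(X|Y)$; divide $T=NL$ into $N$ length-$L$ blocks and, in each block after the first, have Alice play (almost) i.i.d.\ $p_A^{(1)}$ on its first $\lceil\gamma L\rceil$ stages and (almost) i.i.d.\ $p_A^{(2)}$ afterward, the whole block produced as a function of $X_p^L$, the source she observed during the preceding block. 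The one genuinely new point relative to Section~\ref{subsec:gossner_proof} is that this produced sequence must be almost independent not merely of Bob's past actions but of his leakage $Y_p^L$, which is now correlated with Alice's source $X_p^L$. I would achieve this by factoring Alice's per-block map into two steps: \emph{(i)} a \emph{conditional} randomness extractor / privacy-amplification step (see \ref{sec:randomness_extraction}) sending $X_p^L$ to roughly $LR$ bits, with $\gamma H(p_A^{(1)})+(1-\gamma)H(p_A^{(2)})<R<H(X|Y)$, whose joint distribution with $Y_p^L$ is within vanishing total variation of the product of the uniform distribution and $p_{Y_p^L}$, the governing quantity being the conditional spectral inf-entropy, which for i.i.d.\ sources equals $H(X|Y)$ exactly as in Remark~\ref{remark:iid}; followed by \emph{(ii)} a source-simulation step in the spirit of Lemma~\ref{L:han} mapping these near-uniform bits to $\prod_{t\le\lceil\gamma L\rceil}p_A^{(1)}(a_t)\prod_{t>\lceil\gamma L\rceil}p_A^{(2)}(a_t)$ up to vanishing total variation. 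Composing (i) and (ii) and propagating the errors with the triangle-inequality and data-processing properties of Lemma~\ref{lemma:tv_p} then yields, for $L$ large, that Alice's block action sequence is within $\epsilon$ in total variation of a distribution under which it is exactly i.i.d.\ ($p_A^{(1)}$ then $p_A^{(2)}$) and independent of everything Bob observes during the block other than Alice's own block actions. The block structure is what makes this work: since Alice's actions in block $i$ depend only on the source of block $i-1$, they reveal nothing about the fresh pair $(X,Y)$ of block $i$, so when block $i+1$ begins the only part of Bob's knowledge conditionally relevant to $X_p^L$ is $Y_p^L$, which is precisely what step (i) neutralizes. Finally, relating total variation to payoff as in Section~\ref{subsec:gossner_proof} (the block payoff changes by at most $2L\mathsf M$ times the distance) and observing that Bob's within-block play, hence all within-block stage payoffs, are downstream functions of Alice's block actions, the block's average payoff comes within $O(\epsilon\mathsf M)$ of $\gamma\min_b\E_{p_A^{(1)}}[u_{A,b}]+(1-\gamma)\min_b\E_{p_A^{(2)}}[u_{A,b}]$; letting $N\to\infty$ kills the ignored first block (this also handles $T=NL+\delta$), and then $L\to\infty$ and $\epsilon\to0$, taken along blocks of growing length within the infinite play, let Alice secure every value below $\mathcal J_{\text{cav}}(H(X|Y))$ and, with growing block lengths, the value itself.

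\textbf{Bob defends $\mathcal J_{\text{cav}}(H(X|Y))$.} Fix any strategy $\sigma$ of Alice; since Bob knows $\sigma$ and $p_{XY}$ he can play the deterministic stage-wise best response: at stage $t$ compute the conditional law $p_{A_t|\mathsf H_2^t}$ of Alice's action given his history (well-defined by a recursion, as his own past actions are deterministic functions of $(Y^{t-1},A^{t-1})$) and play a minimizer over $b\in\mathcal B$ of $\E[u_{A_t,b}|\mathsf H_2^t]$, with fixed tie-breaking. For every realization $h$ of $\mathsf H_2^t$ the conditional expected stage payoff is $\min_b\E_{p_{A_t|\mathsf H_2^t=h}}[u_{A_t,b}]$, which by the definition of $\mathcal J$ is at most $\mathcal J(H(A_t|\mathsf H_2^t=h))\le\mathcal J_{\text{cav}}(H(A_t|\mathsf H_2^t=h))$. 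Averaging over $h$, then over $t\in\{1,\dots,T\}$, and using concavity of $\mathcal J_{\text{cav}}$ (Jensen) twice gives
\begin{equation*}
\lambda_T(\sigma,\tau^*)\le\mathcal J_{\text{cav}}\left(\frac{1}{T}\sum_{t=1}^T H(A_t|\mathsf H_2^t)\right).
\end{equation*}
It remains to bound the entropy budget. Since Bob is deterministic, $B^{t-1}$ is a function of $(Y^{t-1},A^{t-1})$, so $H(A_t|\mathsf H_2^t)=H(A_t|Y^t,A^{t-1})$; and unrolling $A_s=\sigma_s(X^s,A^{s-1},B^{s-1})$ shows $A^t$ is a function of $(X^t,Y^{t-1})$, whence $(A^t,Y^t)$ is independent of $Y_{t+1}^T$ and $H(A_t|Y^t,A^{t-1})=H(A_t|Y^T,A^{t-1})$. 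Summing over $t$ and applying the chain rule, $\sum_{t=1}^T H(A_t|\mathsf H_2^t)=H(A^T|Y^T)\le H(X^T|Y^T)=\sum_{t=1}^T H(X_t|Y_t)=T\,H(X|Y)$, the inequality because $A^T$ is a function of $(X^T,Y^T)$ and the last two equalities by the i.i.d.\ structure of $(X_t,Y_t)$. As $\mathcal J_{\text{cav}}$ is non-decreasing, $\lambda_T(\sigma,\tau^*)\le\mathcal J_{\text{cav}}(H(X|Y))$ for every $T$, hence $\limsup_{T\to\infty}\lambda_T(\sigma,\tau^*)\le\mathcal J_{\text{cav}}(H(X|Y))$.

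\textbf{Expected main obstacle.} The ``Bob defends'' half is routine information-theoretic bookkeeping once the best response is in hand. The real work is the achievability part: showing that the extracted string is simultaneously near-uniform \emph{and} near-independent of Bob's whole view, wiring the conditional extractor into the source simulator, and carrying the total-variation error cleanly through the block-Markov recursion by means of Lemma~\ref{lemma:tv_p}. This is where the conditional spectral entropy $\underline H(X|Y)=H(X|Y)$ enters, and I expect it to carry the technical weight of the proof.
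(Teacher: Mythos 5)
Your proposal is correct and follows essentially the same route as the paper: the securing half is exactly the paper's block-Markov argument, with your extractor-plus-simulator factorization being precisely how the paper proves its Proposition~\ref{Pro:mapping} (Lemma~\ref{L:yasayi} for conditional randomness extraction, then Lemma~\ref{L:han} for simulation, with errors propagated via Lemma~\ref{lemma:tv_p}), and the defending half matches the paper's stage-wise best-response argument with the same Jensen and chain-rule steps leading to $H(A^T|Y^T)\le H(X^T|Y^T)=TH(X|Y)$.
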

\color{black}
To prove Theorem~\ref{T:maxmin_val}, in Section~\ref{S:secure}, we show that Alice can secure $ \mathcal J_{\text{cav}} (H(X|Y))$ and in Section~\ref{S:defend}, we show that Bob can defend $ \mathcal J_{\text{cav}} (H(X|Y))$. Therefore, by definition, $ \mathcal J_{\text{cav}} (H(X|Y))$ is the maxmin value of the repeated game. Before going to Sections \ref{S:secure} and \ref{S:defend}, we present a corollary followed by an example.
\color{black}
\begin{corollary}
Consider the case in which besides the sequence $X_1,X_2,\dots$, Alice also observes the sequence $Y_1,Y_2,\dots$; therefore, $X_t'=(X_t,Y_t)$ is the observation of Alice at stage $t$. Since $H(X'|Y)=H(X|Y)$, the maxmin value of the game is still $\mathcal{J}_{\text{cav}}(H(X|Y))$ which is equal to the maxmin value of the game in which Alice does not observe the sequence $Y_1,Y_2,\dots$. Therefore, Alice's access to $Y_1,Y_2,\dots$ does not help her. This parallels the classical result of \cite{slepian1973noiseless} in information theory.
\end{corollary}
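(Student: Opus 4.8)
The plan is to obtain the corollary as an immediate instance of Theorem~\ref{T:maxmin_val}, rather than rerunning the securing and defending arguments. The first step is to recognize the enlarged-observation game as itself a repeated game with leaked randomness source: set $X'_t := (X_t, Y_t)\in\mathcal X\times\mathcal Y$ to be Alice's source symbol and $Y'_t := Y_t\in\mathcal Y$ to be Bob's source symbol. The pairs $(X'_t,Y'_t)$ are i.i.d.\ from the law $p_{X'Y'}$ on $(\mathcal X\times\mathcal Y)\times\mathcal Y$ obtained by pushing $p_{XY}$ forward under the map $(x,y)\mapsto((x,y),y)$; the action sets, the payoff matrix, the imperfect-monitoring structure of Bob, and the information timing are all unchanged. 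Nothing in the statement of Theorem~\ref{T:maxmin_val} excludes a degenerate correlation in which a coordinate of Alice's source coincides with Bob's source, so the theorem applies verbatim to this primed instance and gives that its maxmin value equals $\mathcal J_{\text{cav}}(H(X'|Y'))$.

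The second step is the entropy identity. By the chain rule for conditional entropy, $H(X'|Y') = H(X,Y\mid Y) = H(Y\mid Y) + H(X\mid Y,Y) = H(X|Y)$, using $H(Y|Y)=0$ and the fact that conditioning on the pair $(Y,Y)$ is the same as conditioning on $Y$. Hence the maxmin value of the enlarged-observation game is $\mathcal J_{\text{cav}}(H(X|Y))$, which is exactly the maxmin value of the original game established in Theorem~\ref{T:maxmin_val}. This yields the equality claimed in the corollary and therefore the conclusion that Alice's access to $Y_1,Y_2,\dots$ does not improve her guaranteed payoff.

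I do not anticipate a genuine obstacle; the only point deserving a sentence of care is that the reduction in the first step is legitimate, i.e.\ that Theorem~\ref{T:maxmin_val} was proved for an arbitrary joint law $p_{XY}$, in particular for one whose structure embeds a deterministic copy of a coordinate. For a reader who prefers the two directions separately, one may add the remark that the ``Alice secures'' half is transparent — in the enlarged game Alice can simply discard the extra observation $Y_t$ and replay the block-Markov strategy of Section~\ref{subsec:gossner_proof} (in the form adapted to $H(X|Y)$ in Section~\ref{S:secure}) — so the only substantive content is that Bob can still defend $\mathcal J_{\text{cav}}(H(X|Y))$ against a \emph{more} informed Alice, which is precisely the defending half of Theorem~\ref{T:maxmin_val} applied to the primed game. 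The stated analogy with \cite{slepian1973noiseless} is then the informal observation that, just as a Slepian--Wolf encoder need not know the decoder's side information to achieve rate $H(X|Y)$, here the randomness-limited maximizer need not know the leakage $Y$ to achieve payoff $\mathcal J_{\text{cav}}(H(X|Y))$.
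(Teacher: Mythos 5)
Your proposal is correct and matches the paper's own argument: the corollary is obtained exactly by viewing the enlarged game as an instance of Theorem~\ref{T:maxmin_val} with source $X'=(X,Y)$ and leakage $Y$, and invoking $H(X'|Y)=H(X|Y)$. The extra remarks (applicability of the theorem to this degenerate joint law, and the separate securing/defending reading) are sound but not needed beyond what the paper states.
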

\begin{example} (Matching pennies)
Consider the matching pennies game with $\mathcal{A}=\mathcal{B}=\{0,1\}$ and payoffs:
$$u_{0,0}=u_{1,1}=1,\quad u_{0,1}=u_{1,0}=0.$$
Assume that Alice observes (symbol by symbol) an i.i.d.\ sequence of binary random variables $X_1, X_2, \ldots$, where $$p_{X_i}(0)=q,\quad p_{X_i}(1)=1-q.$$
Each $X_i$ is revealed to Bob with probability $\alpha\in[0,1]$. Thus, Bob observes (symbol by symbol) a sequence $Y_1, Y_2, \ldots$ such that with probability $\alpha$, $Y_i=X_i$ and with probability $1-\alpha$, $Y_i=\text{null}$.

Roughly speaking, $\alpha$ fraction of the bits of Alice are revealed to Bob. If Alice is made aware of which of her bits are compromised and leaked to Bob, she can drop the compromised bits and keep the $1-\alpha$ fraction of the secret bits. In this way, she can distill an average of $(1-\alpha)H(q)$ random bits per observation. It is known from \cite{Gossner2002} that for the matching pennies game, $\mathcal{J}_{\text{cav}}(\mathsf{h})=\mathsf{h}/2$. Thus, using $(1-\alpha)H(q)$ bits per stage, Alice can secure payoff $(1-\alpha)H(q)/2$. The above theorem says that Alice can secure the same payoff even without knowing which of her bits are compromised.
\end{example}

\subsubsection{Proof of Theorem \ref{T:maxmin_val}: Alice can secure $ \mathcal J_{\text{cav}} (H(X|Y))$}\label{S:secure}
To show that Alice can secure any payoff less than $\mathcal J_{\text{cav}} (H(X|Y))$, we extend and simplify the proof of \cite{Gossner2002}. As before, the upper concave envelope $\mathcal J_{\text{cav}}(\cdot)$ at $\mathsf{h}=H(X|Y)$ can be expressed as the convex combination
$$\gamma \mathcal J(H(p_A^{(1)}))+(1-\gamma)\mathcal J(H(p_A^{(2)}))$$
for some $\gamma\in[0,1]$ and pmfs $p_A^{(1)}$ and $p_A^{(2)}$ in $\Delta(\mathcal{A})$ where $p_A^{(1)}$ and $p_A^{(2)}$ secure respective payoffs $\mathcal{J}(H(p_A^{(1)}))$ and $\mathcal{J}(H(p_A^{(2)}))$ in the one-shot game, and the following equality is satisfied:
$$\gamma H(p_A^{(1)}) + (1-\gamma)H(p_A^{(2)})=H(X|Y).$$
As a result, it suffices to show that for any $p_A^{(1)}$ and $p_A^{(2)}$ satisfying
$$\gamma H(p_A^{(1)}) + (1-\gamma)H(p_A^{(2)})< H(X|Y),$$
Alice can secure payoffs arbitrarily close to
$$ \gamma \min_{b\in \mathcal{B}}\E_{p_A^{(1)}}[u_{A,b}] + (1-\gamma) \min_{b\in \mathcal{B}}\E_{p_A^{(2)}}[u_{A,b}].$$

Again, the idea is to utilize the block-Markov proof technique. Take some $T$ of the form $T=NL$, and divide the total $T$ stages into $N$ blocks of length $L$. Each block is divided into two subblocks as before. Excluding the first block,
Alice wants to play almost i.i.d.\ according to $p_A^{(1)}$ during the first subblock, and almost i.i.d.\ according to $p_A^{(2)}$ during the second subblock.

By symmetry, we only need to consider the payoff that Alice gets in one of the blocks. For notational simplicity, we denote the observations of Alice and Bob in the \emph{previous} block by $X_p^{L}$ and $Y_p^L$ respectively, and use $A_c^L$ and $B_c^L$ to denote their actions in the \emph{current} block. Instead of Lemma~\ref{L:han} in the previous section, the proof relies on the following proposition whose proof is given in Section~\ref{S:mapping}:
\begin{proposition}\label{Pro:mapping}
Let $(X_p^L, Y_p^L)$ be drawn i.i.d.\ from $p_{XY}(x_p,y_p)$, $\gamma\in [0,1]$ be an arbitrary real number, and $p_A^{(1)}$ and $p_A^{(2)}$ be arbitrary distributions on $\mathcal{A}$ such that
$$\gamma H(p_A^{(1)}) + (1-\gamma)H(p_A^{(2)}) < H(X|Y).$$
Then, for any $\epsilon>0$, there exists a natural number $\tilde L$ and mappings $\psi_L:\mathcal{X}^L\to \mathcal{A}^L$ such that for all $L\geq \tilde L$ and $A_c^L=\psi_L (X_p^L)$,
\begin{align}\left\|p_{A_c^L,Y_p^L}(a_c^L,y_p^L)-p_{Y_p^L}(y_p^L) \prod_{t=1}^{\lceil \gamma L\rceil}p_A^{(1)}(a_{c,t}) \prod_{t=\lceil \gamma L\rceil+1}^{L}p_A^{(2)}(a_{c,t}) \right\|_{TV} <\epsilon.\label{eqn:propo1}\end{align}
\end{proposition}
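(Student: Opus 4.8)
The plan is to realize $\psi_L$ as a composition $\psi_L=g_L\circ h$ of two deterministic maps, mirroring the two-step decomposition announced in the introduction: a \emph{privacy-amplification} (extraction) map $h:\mathcal X^L\to\{0,1\}^{k_L}$ that distils from $X_p^L$ a bit string which is nearly uniform \emph{and} nearly independent of Bob's leakage $Y_p^L$, followed by a \emph{source-simulation} map $g_L:\{0,1\}^{k_L}\to\mathcal A^L$ that turns those uniform bits into the target product law on $\mathcal A^L$. The bit count $k_L$ is chosen so that $k_L/L$ lies strictly between the two relevant entropy rates: since $\gamma H(p_A^{(1)})+(1-\gamma)H(p_A^{(2)})<H(X|Y)$, fix $\eta>0$ with $H(X|Y)-3\eta>\gamma H(p_A^{(1)})+(1-\gamma)H(p_A^{(2)})$ and set $k_L=\lceil L(H(X|Y)-2\eta)\rceil$. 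Throughout write $U_k$ for the uniform law on $\{0,1\}^k$, $p_{Y_p^L}=\prod_t p_Y(y_t)$ for the leakage marginal, and $q^{(L)}_{A^L}(a^L)=\prod_{t=1}^{\lceil\gamma L\rceil}p_A^{(1)}(a_t)\prod_{t=\lceil\gamma L\rceil+1}^{L}p_A^{(2)}(a_t)$ for the target on $\mathcal A^L$, so that \eqref{eqn:propo1} is exactly the claim $\|p_{A_c^L,Y_p^L}-p_{Y_p^L}\otimes q^{(L)}_{A^L}\|_{TV}<\epsilon$.

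\emph{Extraction step.} For fixed $y^L$ the conditional law $p_{X^L|Y^L}(\cdot\mid y^L)=\prod_t p_{X|Y}(\cdot\mid y_t)$ is a product, so by the weak law of large numbers $\xi_L:=\Pr\big[-\tfrac1L\log p_{X^L|Y^L}(X_p^L\mid Y_p^L)<H(X|Y)-\eta\big]\to0$; equivalently, the conditional distribution of $X_p^L$ given $Y_p^L$ has smooth conditional min-entropy at least $L(H(X|Y)-\eta)$ up to error $\xi_L$. Taking a $2$-universal family $\{h_s:\mathcal X^L\to\{0,1\}^{k_L}\}_s$ and invoking the conditional leftover-hash lemma (in the form ``extracted bits nearly uniform \emph{and} nearly independent of side information $Y_p^L$'' — reviewed in \ref{sec:randomness_extraction}), one gets $\E_s\|p_{h_s(X_p^L),Y_p^L}-U_{k_L}\otimes p_{Y_p^L}\|_{TV}\le\delta_L$ with $\delta_L\le\xi_L+\tfrac12\,2^{(k_L-L(H(X|Y)-\eta))/2}\le\xi_L+\tfrac12\,2^{(1-L\eta)/2}\to0$. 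Averaging over the seed, fix one $s^\star$ attaining at most $\delta_L$ and set $h:=h_{s^\star}$; note that $h$ depends on $X_p^L$ alone — the leakage $Y_p^L$ enters the analysis, never the construction, which is precisely why Alice need not observe $Y$.

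\emph{Simulation step and bookkeeping.} View $U_{k_L}$ as a source indexed by $L$ whose sup- and inf-entropies both equal $k_L/L\to H(X|Y)-2\eta$, and $q^{(L)}_{A^L}$ as a concatenated-i.i.d.\ source whose sup-entropy is $\gamma H(p_A^{(1)})+(1-\gamma)H(p_A^{(2)})$ by Remark~\ref{remark:iid}; since the former strictly exceeds the latter, Lemma~\ref{L:han} supplies deterministic maps $g_L:\{0,1\}^{k_L}\to\mathcal A^L$ with $\rho_L:=\|q^{(L)}_{A^L}-p_{g_L(U_{k_L})}\|_{TV}\to0$. (Lemma~\ref{L:han} is read here allowing the source alphabet to vary with $L$; alternatively $g_L$ is given by the elementary interval-partitioning construction of a product law from uniform bits.) Put $\psi_L:=g_L\circ h$, so $A_c^L=\psi_L(X_p^L)=g_L(h(X_p^L))$. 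Pushing the two laws of the extraction step through the deterministic map $(u,y)\mapsto(g_L(u),y)$ and using Property~4 of Lemma~\ref{lemma:tv_p} gives $\|p_{A_c^L,Y_p^L}-p_{g_L(U_{k_L})}\otimes p_{Y_p^L}\|_{TV}\le\delta_L$; Property~5 of Lemma~\ref{lemma:tv_p} gives $\|p_{g_L(U_{k_L})}\otimes p_{Y_p^L}-q^{(L)}_{A^L}\otimes p_{Y_p^L}\|_{TV}=\rho_L$; and the triangle inequality yields $\|p_{A_c^L,Y_p^L}-p_{Y_p^L}\otimes q^{(L)}_{A^L}\|_{TV}\le\delta_L+\rho_L\to0$. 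Choosing $\tilde L$ with $\delta_L+\rho_L<\epsilon$ for all $L\ge\tilde L$ finishes the argument.

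\emph{Main obstacle.} The delicate point is the extraction step. The hypothesis is phrased with the Shannon quantity $H(X|Y)$, whereas hashing requires a \emph{conditional smooth min-entropy} lower bound for $X_p^L$ given $Y_p^L$; moreover the leftover-hash lemma must be applied in the conditional form that controls joint closeness of $(h(X_p^L),Y_p^L)$ to $U_{k_L}\otimes p_{Y_p^L}$, since mere marginal uniformity of the extracted bits would be useless — we genuinely need the simulated actions to be nearly independent of Bob's leakage. The conditional AEP bridges the entropy gap, but one must verify that the smoothing error $\xi_L$ and the hashing loss both vanish at the chosen rate $k_L$, which is exactly what forces the two-fold slack ($\eta$ for the AEP, another $\eta$ for the hashing loss, and a third so that $k_L/L$ still strictly dominates $\gamma H(p_A^{(1)})+(1-\gamma)H(p_A^{(2)})$). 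Everything else — existence of a good seed by averaging, the source simulation via Lemma~\ref{L:han} and Remark~\ref{remark:iid}, and the total-variation bookkeeping via Lemma~\ref{lemma:tv_p} — is routine.
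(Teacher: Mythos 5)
Your proposal is correct and follows essentially the same route as the paper: the paper's proof composes a hashing-based extractor (its Lemma~\ref{L:yasayi}, itself proved via the leftover-hash bound with smoothed collision entropy and a conditional typicality argument, exactly your extraction step) with the source-simulation map of Lemma~\ref{L:han} applied to a uniform source of rate strictly between $\gamma H(p_A^{(1)})+(1-\gamma)H(p_A^{(2)})$ and $H(X|Y)$, and then concludes with the same total-variation bookkeeping. The only cosmetic difference is that you re-derive the extraction lemma inline (phrasing it with smooth min-entropy rather than the paper's smooth collision entropy) instead of citing it.
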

Note that $p_{Y_p^L}(y_p^L) \prod_{t=1}^{\lceil \gamma L\rceil}p_A^{(1)}(a_{c,t}) \prod_{t=\lceil \gamma L\rceil+1}^{L}p_A^{(2)}(a_{c,t})$ is the distribution of the ideal independent actions desired by Alice, whereas $p_{A_c^L,Y_p^L}(a_c^L,y_p^L)$ is the real joint distribution.

Let Alice use the actions $A_c^L=(A_{c,1}, A_{c,2}, \ldots, A_{c,L})$ in the current block, where $A_c^L=\psi_L (X_p^L)$ and $\psi_L$ is the mapping of Proposition~\ref{Pro:mapping}. Because the $X$-source is i.i.d.,\ the $X$-source for different blocks are independent, and random variable $A_c^L$ (a function of $X_p^L$) is independent of Alice's action in all the previous blocks. As a result, even though Bob has access to the entire past history of the game and his $Y$-source observations, he obtains information about $X_p^L$ only through his source $Y_p^L$ and Alice's prior actions in the current block. In other words, $B_{c,t}$, Bob's action at the $t$-th stage of the current block, is conditionally independent of $X_p^L$ given $Y_p^L, A_c^{t-1}, B_c^{t-1}$. Since $A_c^L=\psi_L (X_p^L)$, $B_{c,t}$ is also conditionally independent of $A_{c,t}, A_{c,t+1}, \cdots, A_{c,L}$ given $Y_p^L, A_c^{t-1}, B_c^{t-1}$. Thus,
$$p(b_c^L|y_p^L, a_c^L)=\prod_{t=1}^L p(b_{c,t}|y_p^L, a_c^{t-1}, b_c^{t-1}).$$
Then, utilizing the first property of total variation in Lemma~\ref{lemma:tv_p} for random variables $E=(A_c^L,Y_p^L)$ and $F=B_c^L$ we conclude from \eqref{eqn:propo1} that
\begin{align}\Big\|p&_{A_c^L,Y_p^L}(a_c^L,y_p^L)\prod_{t=1}^Lp(b_{c,t}|y_p^L, a_c^{t-1}, b_c^{t-1})\label{eqn:propo2}
\\&-p_{Y_p^L}(y_p^L) \prod_{t=1}^{\lceil \gamma L\rceil}p_A^{(1)}(a_{c,t})p(b_{c,t}|y_p^L, a_c^{t-1}, b_c^{t-1}) \prod_{t=\lceil \gamma L\rceil+1}^{L}p_A^{(2)}(a_{c,t})p(b_{c,t}|y_p^L, a_c^{t-1}, b_c^{t-1}) \Big\|_{TV} <\epsilon.\nonumber\end{align}
Utilizing the second property of total variation in Lemma~\ref{lemma:tv_p} for random variables $E=(A_c^L,B_c^L)$ and $F=Y_p^L$, we conclude that the distance between the actual actions $p_{A_c^L,B_c^L}(a_c^L,b_c^L)$ and the ideal one is less than or equal to $\epsilon$, \emph{i.e.,}
\begin{align}\Big\|p_{A_c^L,B_c^L}(a_c^L,b_c^L)-\prod_{t=1}^{\lceil \gamma L\rceil}p_A^{(1)}(a_{c,t})p(b_{c,t}|a_c^{t-1}, b_c^{t-1}) \prod_{t=\lceil \gamma L\rceil+1}^{L}p_A^{(2)}(a_{c,t})p(b_{c,t}|a_c^{t-1}, b_c^{t-1}) \Big\|_{TV} <\epsilon.\label{eqn:propo3}\end{align}
As before, by relating the total variation distance to the payoff differences, we obtain that the average payoff differs by at most $2\epsilon \mathsf{M}$ from the average payoff under the ideal distribution. This completes the proof.

\subsubsection{Proof of Theorem \ref{T:maxmin_val}: Bob can defend $ \mathcal J_{\text{cav}} (H(X|Y))$} \label{S:defend}
This is an extension of the proof given in \cite{Neyman2000, Gossner2002}. Let $\sigma$ be an arbitrary strategy for Alice and generate strategy $\tau$ for Bob as follows: given $h_2^t$, an arbitrary history of observations of Bob until stage $t$, $\tau_t(h_2^t)$ is the best choice of Bob that minimizes the expected payoff at stage $t$, \emph{i.e.,}
$$ \tau_t(h_2^t) \in \underset{b\in \mathcal{B}}{\arg \min} \E_{\sigma} \left[u_{A_t,b}|\mathsf{H}_2^t=h_2^t\right],$$
where $\E_{\sigma}$ denotes the expectation with respect to the probability distribution induced by $\sigma$ and i.i.d.\ repetitions of $p_{X,Y}$. Note that conditional on the observation $h_2^t$ of Bob, Alice's action $A_t$ has entropy $H(A_t|\mathsf{H}_2^t=h_2^t)$, thus,
\begin{equation} \label{E:max_payoff}
\E_{\sigma,\tau}\left[u_{A_t,B_t}|\mathsf{H}_2^t=h_2^t\right] \leq \mathcal{J}\left(H(A_t|\mathsf{H}_2^t=h_2^t)\right)\leq \mathcal{J}_{\text{cav}}\left(H(A_t|\mathsf{H}_2^t=h_2^t)\right).
\end{equation}
Therefore, we have
\begin{align}
\E_{\sigma,\tau}[u_{A_t,B_t}]
& = \sum_{h_2^t\in (\mathcal{A}\times\mathcal{B})^{t-1}\times \mathcal{Y}^t} \Pr\left[\mathsf{H}_2^t=h_2^t\right]\E_{\sigma,\tau}\left[u_{A_t,B_t}|\mathsf{H}_2^t=h_2^t\right] \notag \\
&\leq \sum_{h_2^t\in (\mathcal{A}\times\mathcal{B})^{t-1}\times \mathcal{Y}^t} \Pr\left[\mathsf{H}_2^t=h_2^t\right]\mathcal{J}_{\text{cav}}\left(H(A_t|\mathsf{H}_2^t=h_2^t)\right) \notag \\
& \leq \mathcal{J}_{\text{cav}}\left(\sum_{h_2^t\in (\mathcal{A}\times\mathcal{B})^{t-1}\times \mathcal{Y}^t} \Pr\left[\mathsf{H}_2^t=h_2^t\right] H(A_t|\mathsf{H}_2^t=h_2^t)\right) \notag \\
&= \mathcal{J}_{\text{cav}}\left(H(A_t|\mathsf{H}_2^t)\right), \label{E:jensen1}
\end{align}
where the second inequality is implied by applying Jensen's inequality to concave function $\mathcal{J}_{\text{cav}}(\cdot)$. By definition of $\lambda_T(\sigma,\tau)$ and using \eqref{E:jensen1} we have
\begin{align}
\lambda_T(\sigma,\tau)
& \leq \sum_{t=1}^T \frac{1}{T} \mathcal{J}_{\text{cav}}\left(H(A_t|\mathsf{H}_2^t)\right) \notag \\
& \leq \mathcal{J}_{\text{cav}}\left(\sum_{t=1}^T \frac{1}{T} H(A_t|\mathsf{H}_2^t)\right) \label{second_inequality} \\
& =\mathcal{J}_{\text{cav}}\left(\sum_{t=1}^T \frac{1}{T} H(A_t|Y^t,A^{t-1},B^{t-1})\right) \notag \\
& =\mathcal{J}_{\text{cav}}\left(\sum_{t=1}^T \frac{1}{T} H(A_t|Y^T,A^{t-1})\right) \label{second_equality} \\
& =\mathcal{J}_{\text{cav}}\left(\frac{1}{T} H(A^T|Y^T)\right) \notag \\
& \leq \mathcal{J}_{\text{cav}}\left(\frac{1}{T} H(X^T|Y^T)\right) \label{final_inequality} \\
& = \mathcal{J}_{\text{cav}}\left( H(X|Y)\right), \notag
\end{align}
where \eqref{second_inequality} follows from applying Jensen's inequality to concave function $\mathcal{J}_{\text{cav}}$. Note that given the strategy $\tau$ for Bob, $B^{t-1}$ is a deterministic function of $Y^t$ and $A^{t-1}$, thus $ H(A_t|Y^t,A^{t-1},B^{t-1})= H(A_t|Y^t,A^{t-1})$; furthermore, $A_t$ and $B_t$ are causally generated from i.i.d.\ sequences $X^T$ and $Y^T$, so $(A_t,A^{t-1},Y^t)$ is independent of $(Y_{t+1},Y_{t+2},\dots,Y_{T})$ and hence $ H(A_t|Y^t,A^{t-1})= H(A_t|Y^T,A^{t-1})$, which implies \eqref{second_equality}. To prove \eqref{final_inequality}, consider that given strategies of the players and the sequence $Y^T$, $A^T$ is a deterministic function of $X^T$ and hence $H(X^T|Y^T)\geq H(A^T|Y^T)$; therefore, since $\mathcal{J}_{\text{cav}}(\cdot)$ is an increasing function, \eqref{final_inequality} follows.

Thus, for all strategy $\sigma$ of Alice there exists a strategy $\tau$ for Bob such that $\limsup_{T\to \infty} \lambda_T(\sigma,\tau) \leq \mathcal{J}_{\text{cav}}\left( H(X|Y)\right)$. Hence, Bob can defend $\mathcal{J}_{\text{cav}}\left( H(X|Y)\right)$.

\section{Secret correlation in repeated games with imperfect monitoring}\label{sec:imperfect_monitoring}
In this section, we revisit the repeated game with imperfect monitoring studied by \cite{Gossner_Tomala}. The main contribution of this part is to simplify the proof of \cite{Gossner_Tomala} and generalize their results.
\subsection{Problem definition}
Consider a $T$ stage repeated zero-sum game between team $A$ and player $B$, where team $A$ consists of $m$ players $\{1,2,\ldots,m\}$ with respective finite actions sets $\mathcal{A}_1, \mathcal{A}_2,\ldots,\mathcal{A}_m$. Let $\mathcal{A}=\mathcal{A}_1\times \mathcal{A}_2\times\cdots\times \mathcal{A}_m$ and $\mathcal{B}$ denote the finite actions sets of team $A$ and player $B$, respectively. In every stage $t\in\{1,2,\dots,T\}$, all players choose an action from their corresponding sets of actions. Let $\bold{A}_t$ denote the action profile of team $A$ and $B_t$ denote the action of player $B$ at stage $t$. At the end of stage $t$, team $A$ gets payoff $u_{\bold{A}_t,B_t}$ from player $B$ and all players of team $A$ observe the chosen actions $\bold{A}_t$ and $B_t$, while player $B$ observes $B_t$ and $S_t\in \mathcal S$, where $S_t$ is a noisy version of $\bold{A}_t$ with conditional distribution $\Pr[S_t=s_t|\bold{A}_t=\bold a_t]=p_{S|\bold{A}}(s_t|\bold a_t)$. In order to choose the actions of stage $t$, players make use of the history of their observations until stage $t$, which is denoted by $\mathsf H_1^t=(\bold{A}^{t-1},B^{t-1})$ for players of team $A$ and $\mathsf H_2^t=(B^{t-1},S^{t-1})$ for player $B$. Let $\sigma_{i,t}:(\mathcal A\times \mathcal B)^{t-1}\to \mathcal{A}_i$ and $\tau_t:(\mathcal B \times \mathcal S)^{t-1}\to \mathcal B$ be the \emph{random functions} mapping the history of observations of arbitrary player $i$ in team $A$ and player $B$ to their actions at stage $t$, thus, $\bold{A}_t=(\sigma_{1,t}(\mathsf H_1^t),\sigma_{2,t}(\mathsf H_1^t),\dots,\sigma_{m,t}(\mathsf H_1^t))$ and $B_t=\tau_t(\mathsf H_2^t)$. Let $\sigma_t=(\sigma_{1,t},\sigma_{2,t},\dots,\sigma_{m,t})$. We call the $T$-tuples $\sigma=(\sigma_1,\sigma_2,\dots,\sigma_T)$ and $\tau=(\tau_1,\tau_2,\dots,\tau_T)$ the strategies of team $A$ and player $B$, respectively. The expected average payoff for team $A$, $\lambda_T(\sigma,\tau)$, and the maxmin value of the game are defined in a similar way as in Section~\ref{subsec:gossner}.
\begin{remark}
In the definition of the repeated game with imperfect monitoring, \cite{Gossner_Tomala} assumed that the signals $S_t$ were also observed by the players of team $A$, whereas we assume that the signals $S_t$ are \underline{not} observed by the players of team $A$. We generalize the result of \cite{Gossner_Tomala} by showing that the maxmin value of the game remains the same with this change in assumption.
\end{remark}
\begin{definition} \label{def:Q} Let $\Pi$ be a subset of $\Delta(\mathcal A)$ containing the distributions of independent random actions on $\mathcal{A}=\mathcal{A}_1\times \mathcal{A}_2\times\cdots\times \mathcal{A}_m$, \emph{i.e.,}
$$\Pi=\left\{p_{\bold A}=p_{A_1}p_{A_2}\dots p_{A_m}|p_{A_i}\in \Delta(\mathcal A_i), \forall i=1,\dots,m\right\}.$$
Furthermore, given random variables $R\in \mathcal R$ and $\bold{A}\in \mathcal{A}$ with joint distribution $p_{\bold AR}$, \color{black} functions $\pi(\bold{A}|R=r)$ (the security level of $\bold{A}$ given $R=r$) and $\pi(\bold{A}|R)$ (the security level of $\bold{A}$ given $R$) are defined as follows:
$$\pi(\bold{A}|R=r)=\min_{b\in\mathcal B}\E[u_{\bold A,b}|R=r]$$
and $$\pi(\bold A|R)=\sum_{r\in\mathcal R}p_R(r)\pi(\bold{A}|R=r).$$
Observe that $\pi(\bold{A}|R)$ is in terms of an average over values of $R$ and we have chosen the notation $\pi(\bold{A}|R)$ as its definition resembles the way conditional entropy is defined in information theory.
\end{definition}
\color{black}

\begin{theorem}\label{theorem:secret_corr}
The maxmin value of the repeated game with imperfect monitoring is
$$w=\max \pi(\bold{A}|R),$$
where the maximization is over all random variables $\bold{A}\in \mathcal A$, $S \in \mathcal S$, $R\in \mathcal R =\{0,1\}$ and $Q\in \mathcal Q=\{1,2,3,\cdots, 2|\mathcal{A}|\}$, satisfying
\begin{align}
p_{S\bold ARQ}(s,\bold a,r,q)&=p_{RQ}(r,q)p_{\bold A|Q}(\bold a|q)p_{S|\bold A}(s|\bold a),\label{eqnE1}\\
p_{\bold A|Q}(\bold a|q)&\in\Pi,\label{eqnE2}\\
H(Q\bold A|SR)&\geq H(Q|R).\label{eqnE3}
\end{align}
The set $\Pi$ and the function $\pi(\bold{A}|R)$ are defined in Definition~\ref{def:Q} and $p_{S|\bold A}$ is the fixed conditional distribution that generates the monitoring signals $S_t$ of the repeated game with imperfect monitoring.
\end{theorem}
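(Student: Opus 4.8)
As for Theorems~\ref{T:maxmin_val_gossner} and~\ref{T:maxmin_val}, I would establish that $w$ is the maxmin value by proving separately that team $A$ can secure every payoff below $w$ and that player $B$ can defend $w$. The way I read the data of the characterization: $R\in\{0,1\}$ is a \emph{public} time–sharing device (it may be assumed known to $B$), $Q$ is the \emph{secret} coordination signal that team $A$ distils from the monitoring gap $p_{S|\bold A}$, and \eqref{eqnE2} records that once $Q$ is fixed team $A$ can realise $p_{\bold A|Q}$ using purely \emph{private} randomisation, while \eqref{eqnE3} is the ``secret–randomness ledger'': the amount of secret coordination consumed, $H(Q|R)$, cannot exceed the amount available, $H(Q\bold A|SR)$. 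The genuinely new feature relative to Sections~\ref{sec:rep_0}--\ref{sec:5op} is that team $A$'s randomness source is not exogenous but is the stream of its own played profiles $\bold A_1,\bold A_2,\dots$, which regenerates itself block by block.

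\textbf{Team $A$ secures $w$.} Fix $(\bold A,S,R,Q)$ attaining the maximum; by a perturbation/continuity argument assume first the strict inequality $H(Q\bold A|SR)>H(Q|R)$ (the boundary case is recovered by a limit, noting that if no product distribution has $H(\bold A|S)>0$ then $\bold A$ is a function of $S$, \eqref{eqnE3} forces $Q\perp\bold A\mid R$, and $w$ collapses to the value secured with no shared randomness, which is trivially achievable). Now apply the block–Markov scheme of Section~\ref{subsec:gossner_proof}: split $T=Nn$ into $N$ blocks of length $n$, with $N,n\to\infty$ and $n$ growing fast enough relative to $N$ that the errors accumulated across blocks still vanish. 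Inside a block, team $A$ fixes the sequence $R^n$ \emph{deterministically} by a two–subblock schedule whose empirical frequencies equal $p_R$ (exactly the $\gamma$–split of Fig.~\ref{F:blocks}), so no randomness is spent on $R$ and $B$ may know it; then, from the secret bits carried over from the previous block, team $A$ simulates $Q^n$ with $Q_t\sim p_{Q|R_t}$ independently, which costs $\approx nH(Q|R)$ bits by source simulation from uniform bits (Lemma~\ref{L:han}, Remark~\ref{remark:iid}); each player $i$ then plays $A_{i,t}\sim p_{A_i|Q}(\cdot|Q_t)$ with private coins, which is possible because $p_{\bold A|Q}\in\Pi$ is a product distribution (this is where \eqref{eqnE2} is used); and nature draws $S_t$ through $p_{S|\bold A}$. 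At the end of the block the triples $(Q_t,\bold A_t,S_t)$ are i.i.d.\ within each subblock, so by randomness extraction against side information (see \ref{sec:randomness_extraction}) team $A$ can extract from $(Q^n,\bold A^n)$ about $nH(Q\bold A|SR)$ fresh bits that are, up to vanishing total variation, uniform and independent of $B$'s \emph{entire} view (his monitoring signals and the public $R$'s, across all blocks); by the strict form of \eqref{eqnE3} these suffice to drive the next block and absorb every overhead. For the payoff: since $Q^n$ is essentially i.i.d.\ given $R^n$ and is built from bits independent of $B$'s view, $Q_t$ is essentially independent of $B$'s information given $R_t$, so whatever $B$ plays at a stage with $R_t=r$ yields expected stage payoff at least $\min_b\E[u_{\bold A,b}|R=r]=\pi(\bold A|R=r)$; averaging over the schedule and relating total variation to payoff as in Section~\ref{subsec:gossner_proof} gives $\pi(\bold A|R)=w$ up to an $O(\epsilon\mathsf M)$ slack. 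A negligible initialisation phase of $o(T)$ stages (playing a product distribution with $H(\bold A|S)>0$ and harvesting its gap) supplies the first seed.

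\textbf{Player $B$ defends $w$.} Given an arbitrary strategy $\sigma$ of team $A$, let $B$ best–respond myopically: $\tau_t(h_2^t)\in\arg\min_{b\in\mathcal B}\E[u_{\bold A_t,b}\mid\mathsf H_2^t=h_2^t]$ (computed along the recursively defined $(\sigma,\tau)$), so that $\E_{\sigma,\tau}[u_{\bold A_t,B_t}\mid\mathsf H_2^t]=\min_b\E[u_{\bold A_t,b}\mid\mathsf H_2^t]$. Introduce $K\sim\mathrm{Unif}\{1,\dots,T\}$ independent of the play and put
\[\bold A=\bold A_K,\qquad S=S_K,\qquad Q=(K,\mathsf H_1^{K})=(K,\bold A^{K-1},B^{K-1}),\qquad R=(K,\mathsf H_2^{K})=(K,B^{K-1},S^{K-1}).\]
Then $\lambda_T(\sigma,\tau)=\pi(\bold A|R)$. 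Constraint \eqref{eqnE1} holds because $S_K$ depends on $\bold A_K$ only through the fixed channel and that noise is fresh (independent of $K$, of $\mathsf H_1^{K}$ and of $\mathsf H_2^{K}$), and because $R$ is a function of $Q$ together with independent noise; \eqref{eqnE2} holds because, conditioned on $(K,\mathsf H_1^{K})$, the profile $\bold A_K=(\sigma_{i,K}(\mathsf H_1^K))_i$ is a product over the independent private randomisations $\sigma_{i,K}$. For \eqref{eqnE3}, a short computation gives $H(Q\mid R)=\tfrac1T\sum_{t=1}^T H(\bold A^{t-1}\mid B^{t-1}S^{t-1})$ and $H(Q\bold A\mid SR)=\tfrac1T\sum_{t=1}^T H(\bold A^{t}\mid B^{t-1}S^{t})$; the inequality then follows from a telescoping shift — dropping $B_{t-1}$ from the conditioning in the first sum only increases entropy, which realigns it with a one–step shift of the second sum, whose extra term $H(\bold A^{T}\mid B^{T-1}S^{T})\ge 0$ closes the gap. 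Hence $\lambda_T(\sigma,\tau)=\pi(\bold A|R)$ for a tuple satisfying \eqref{eqnE1}--\eqref{eqnE3} (but with large cardinalities), so $\limsup_T\lambda_T(\sigma,\tau)$ is at most the unrestricted–cardinality value; it then remains to show that restricting to $|\mathcal R|\le 2$ and $|\mathcal Q|\le 2|\mathcal A|$ loses nothing, by a Fenchel--Eggleston/Carathéodory argument (for each value of $R$ one preserves $p_{\bold A\mid R}$, of dimension $|\mathcal A|-1$, together with the two scalars $\pi(\bold A\mid R)$ and $H(Q\bold A\mid S,R)-H(Q\mid R)$, costing $\le|\mathcal A|+1$ atoms of $Q$, and the essentially one–dimensional, $\mathcal J_{\text{cav}}$–type trade–off makes a binary $R$ enough).

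\textbf{Main obstacle.} I expect the two delicate points to be: (i) on the achievability side, the bookkeeping of vanishing errors through the self–regenerating source — the extracted bits must be simultaneously near–uniform \emph{and} near–independent of $B$'s whole past, and this independence, together with the near–i.i.d.\ structure needed both for the extraction rate and for the per–stage payoff, has to be propagated inductively from block to block; and (ii) on the converse side, carrying out the cardinality reduction with two coupled auxiliary variables while keeping all three constraints \eqref{eqnE1}--\eqref{eqnE3} intact and $\pi(\bold A|R)$ non–decreasing. Everything else is a reassembly of the source–simulation and randomness–extraction tools already recalled in the paper. Finally, dropping the assumption that team $A$ observes the signals $S_t$ is free in both halves: $S_t$ never enters $\mathsf H_1^t$, so the achievability construction never used it, and the converse variables above are built only from $(\mathsf H_1^{K},\mathsf H_2^{K})$.
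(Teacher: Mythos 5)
Your proposal follows essentially the same route as the paper: block--Markov achievability in which the team's own (imperfectly monitored) action stream is recycled via randomness extraction against player $B$'s side information followed by source simulation (the paper packages the two steps as Proposition~\ref{Pro:mapping2} and propagates the total-variation error inductively with the $3^{i}\epsilon$ bookkeeping you anticipate), an initial block playing a product distribution with $H(\bold A^{(0)}|S^{(0)})>0$ as the seed, and a converse built from a myopic best response, a uniform time index, an entropy telescoping bound, and a Carath\'eodory-type cardinality reduction. Two of your local variants are fine and even slightly cleaner: your verification of \eqref{eqnE3} by a one-step shift plus ``conditioning reduces entropy'' avoids the exact identity the paper uses, and defining $Q=(K,\bold A^{K-1},B^{K-1})$ without $S^{K-1}$ is legitimate for the theorem as stated (the paper keeps $S^{I-1}$ inside $Q$ because it proves the stronger converse in which team $A$ also observes the signals; there $R$ is literally a function of $Q$, whereas you need the observation that $A_K\perp S^{K-1}$ given $(K,\bold A^{K-1},B^{K-1})$, which you correctly invoke).

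The one place your sketch does not reach the stated result is the cardinality reduction. Preserving $p_{\bold A|R=r}$ together with the scalar $H(Q\bold A|SR)-H(Q|R)$ as linear functionals of $p_{Q|R=r}$ costs up to $|\mathcal A|+1$ atoms of $Q$ per value of $R$ (note $\pi(\bold A|R=r)$ is already determined by $p_{\bold A|R=r}$, so listing it separately is redundant), i.e.\ $2|\mathcal A|+2$ in total, which overshoots the bound $|\mathcal Q|\le 2|\mathcal A|$ in the statement; since achievability holds for arbitrary finite alphabets, the exact constant is what makes the stated maximum equal to the maxmin value. The paper's fix (Proposition~\ref{pro:cardinality}) is to preserve \emph{only} $p_{\bold A|R=r}$ --- which automatically fixes $\pi(\bold A|R=r)$ and $H(S|R=r)$ --- and then pick a vertex of that polytope \emph{maximizing} $H(\bold AS|Q,R=r)$, so that $H(Q\bold A|SR)-H(Q|R)=H(\bold AS|QR)-H(S|R)$ can only increase and \eqref{eqnE3} survives with at most $|\mathcal A|$ atoms per $r$; binary $R$ comes from the analogous vertex argument with the single linear constraint given by the entropy difference while maximizing $\pi$. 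Relatedly, your boundary case $H(Q\bold A|SR)=H(Q|R)$ needs a concrete device rather than a bare ``limit'': the paper (Remark~\ref{rem:boundary_case}) inserts a small ``banking'' subblock in every block where the team plays $p_{\bold A}^{(0)}$, creating a strict randomness surplus at an $O(\epsilon')$ payoff cost, and then lets $\epsilon'\to 0$; your $o(T)$ initialization phase seeds the scheme but does not by itself sustain the randomness budget when the ledger is exactly balanced.
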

\begin{remark}
The statement of the above theorem has a different (but equivalent) form than the one given in \cite{Gossner_Tomala}. In particular, \cite{Gossner_Tomala} have expressed the solution as an optimization problem over the set of ``distributions of distributions". \color{black} The computational aspects of the solution given by \cite{Gossner_Tomala} has been studied by \cite{Gossner_Tomala2}. \color{black}In our characterization, because all the variables have finite alphabet sets, the above form is {computable} and is expressed as a maximization problem over a compact and bounded set of probability distribution.
\end{remark}

In Section~\ref{sec:monitoring_secure}, we show that team $A$ can secure $w$, and in Section~\ref{sec:defend_w}, we show that player $B$ can defend $w$ even when players of team $A$, in addition to $\bold A_t$ and $B_t$, observe the signals $S_t$. Thus, $w$ is the maxmin value of the repeated game with imperfect monitoring regardless of whether team $A$ observe $S_t$ or not.

\subsection{Proof of Theorem \ref{theorem:secret_corr}: Payoff secured by team $A$} \label{sec:monitoring_secure}
\color{black}We begin with investigating the special case in which player $B$ perfectly monitors the actions of team $A$; \emph{i.e.,} there exists a deterministic function $f:\mathcal S\to \mathcal A$ such that for all stages $t$, we have $\bold A_t=f(S_t)$. Let $\bold{A},S,R,Q$ be arbitrary random variables with joint pmf $p_{RQ\bold AS}$ in the feasible set of the maximization in the statement of the theorem. Considering that $\bold A=f(S)$, we have:
$$H(Q\bold A|RS)-H(Q|R)=H(Q|RS)-H(Q|R)\geq 0,$$
where the inequality follows from \eqref{eqnE3}. On the other hand, we have $H(Q|RS)\leq H(Q|R)$. Hence, $H(Q|RS)= H(Q|R)$. Since $H(Q|RS)-H(Q|R)=H(S|RQ)-H(S|R)$, we obtain that $H(S|RQ)=H(S|R)$. It is known that $H(S|RQ)\leq H(S|R)$ (conditioning reduces entropy), and equality $H(S|RQ)=H(S|R)$ holds only if for all $r\in \mathcal R$ and $q\in \mathcal Q$ such that $p_{RQ}(r,q)>0$, we have $p_{S|R=r,Q=q}=p_{S|R=r}$. 
 Since $\bold A$ is a deterministic function of $S$, we also conclude that for all $r\in \mathcal R$ and $q\in \mathcal Q$ such that $p_{RQ}(r,q)>0$, we have $p_{\bold A|R=r,Q=q}=p_{\bold A|R=r}$. On the other hand, Equations \eqref{eqnE1} and \eqref{eqnE2} imply that $p_{\bold A|R=r,Q=q}=p_{\bold A|Q=q}\in \Pi$; thus, for all $r$ with positive probability, we have $p_{\bold A|R=r}\in \Pi$.

Let $T$ be the total number of the stages of the game. Construct a strategy $\sigma$ for team $A$ as follows: in the first $\lceil p_R(0)T\rceil$ stages, the team players play i.i.d.\ according to $p_{\bold A|R=0}$, and in the remaining $T-\lceil p_R(0)T\rceil$ stages, they play i.i.d.\ according to $p_{\bold A|R=1}$. Note that since for all $r$ with positive probability, $p_{\bold A|R=r}$ belongs to $\Pi$, the team players can implement the above strategy distributively without the need for shared randomness. When $T$ tends to infinity, the expected average payoff secured by $\sigma$ converges to $\pi(\bold A|R)$, and hence, they can secure $\pi(\bold A|R)$.

Next, we assume that player $B$ monitors the actions of the team players imperfectly, \emph{i.e.,} Bob is unable to compute $\bold{A}$ as a deterministic function of $S$. \color{black} Let $\bold{A}\in \mathcal A$, $S \in \mathcal S$, $R\in \mathcal R =\{0,1\}$ and $Q\in \mathcal Q=\{1,2,3,\cdots, 2|\mathcal{A}|\}$ be arbitrary random variables with joint pmf $p_{RQ\bold AS}$ satisfying \eqref{eqnE1}, \eqref{eqnE2} and the strict form of \eqref{eqnE3}, \emph{i.e.,}
\begin{equation} \label{eq:cond1}
H(Q\bold A|RS)>H(Q|R).
\end{equation}
The boundary case, $H(Q\bold A|RS)=H(Q|R)$, will be addressed in Remark~\ref{rem:boundary_case}. We now show that team $A$ can secure payoff $\pi(\bold{A}|R)$. Since
$$\pi(\bold A|R) = p_R(0) \min_{b\in \mathcal B} \E[u_{\bold A,b}|R=0] + p_R(1) \min_{b\in \mathcal B} \E [u_{\bold A,b}|R=1],$$
 it suffices to show that players of team $A$ can choose their actions almost i.i.d.\ according to $p_{\bold A|R}(\bold a|0)$ in $\gamma = p_R(0)$ portion of stages and almost i.i.d\ according to $p_{\bold A|R}(\bold a|1)$ in $1-\gamma = p_R(1)$ portion of stages, while the action of each stage is almost independent of the history of observations of player $B$ until that stage.

In the rest of the proof, the ideal joint distribution $p_{RQ\bold AS}$ is assumed to be given and fixed. Random variable $\bold A$ should not be confused with $\bold A_1$ or $\bold A^T=(\bold A_1, \bold A_2, \cdots, \bold A_T)$ which denote the action profile of players of team $A$ at the first stage and the $T$ time instances, respectively.

\textbf{The block-Markov technique:}
As in the previous section, team $A$ utilizes the block-Markov strategy. We assume that the game is played over
one block of length $k L$ (the first block) followed by $N$ blocks of length $L$. Therefore, the total number of repetitions of the game is $T=k L + NL$. The first block is of length $kL$, which will be specified later; this block is sufficiently long block to provide enough randomness to initialize the block-Markov strategy for the $N$ blocks of length $L$. Excluding the first block, each block is divided into two subblocks of length $\lceil \gamma L\rceil$ and $L-\lceil \gamma L\rceil$. Team $A$ aims to play almost i.i.d.\ according to $p_{\bold A|R}(\bold a|0)$ in the first subblock and according to $p_{\bold A|R}(\bold a|1)$ in the second subblock, while the action of each stage is almost independent of the history of observations of player $B$ until that stage. To do so, in each block, the players of team $A$ extract a sequence of shared randomness, which is almost independent of the observations of player $B$, to correlate their actions in the next block. For arbitrary block $i\geq3$ (the \emph{current} block), let $Q_c^L \in \mathcal Q^L$ denote the sequence of shared randomness extracted in previous block and $\bold A_c^L$ denote the sequence of actions of team $A$ played in the current block. Players of team $A$ produce their actions in the current block, $\bold A_{c}^L$, only from the shared randomness $Q_c^L$ and their private randomness in the manner that is described in details below. With team $A$ playing $\bold A_{c,t}$ in the $t$-th stage of the current block, player $B$ gets the signal $S_{c,t}$ constructed from $\bold A_{c,t}$ with the conditional distribution $p_{S|\bold A}$.

For simplicity of the notation, the sequence of shared randomness, the actions of team $A$, the actions of player $B$ and the signals observed by player $B$ are denoted by $Q_c^L,\bold A_c^L,B_c^L,S_c^L$ for the current block ($i$-th block) and by $Q_p^L,\bold A_p^L,B_p^L,S_p^L$ for the previous block ($(i-1)$-th block), where $i\geq 3$. Similarly, the sequence of shared randomness, the actions of team $A$, the actions of player $B$ and the signals observed by player $B$ in the second block are denoted by $Q_s^L,\bold A_s^L,B_s^L,S_s^L$. For the first block we use $\bold A_f^{k L},B_f^{k L}$ and $S_f^{k L}$ to denote these random variables.

\textbf{First block:}
In the first block, players of team $A$ start off without any shared randomness. They choose their actions independently and i.i.d.\ according to some $p_{\bold A}^{(0)}\in \Pi$ satisfying $H(\bold A^{(0)}|S^{(0)})>0$, where $(\bold A^{(0)},S^{(0)})$ are some random variables with joint pmf $p_{\bold AS}^{(0)}(\bold a,s)=p_{\bold A}^{(0)}(\bold a)p_{S|\bold A}(s|\bold a)$. Note that the distribution $p_{\bold A}^{(0)}$ with the above specifications exists because player $B$ does not have a perfect monitoring of the actions of the team players. Furthermore, since $p_{\bold A}^{(0)} \in \Pi$ and the players of team $A$ have access to private randomness, they can implement it distributively. Thus,
$$p_{\bold A_f^{kL},S_f^{kL}}(\bold a_f^{kL},s_f^{kL})=\prod_{t=1}^{kL} p_{\bold A}^{(0)}(\bold a_{f,t})p_{S|A}(s_{f,t}|\bold a_{f,t}).$$

The length of the first block is $kL$, where $L$ will be specified later; here $k$ is a natural number satisfying
\begin{equation}\label{eq:cond2}
k>\frac{H(Q|R)}{H(\bold A^{(0)}|S^{(0)})}.
\end{equation}

\textbf{Second block:} Equation \eqref{eq:cond2} implies that $kH(\bold A^{(0)}|S^{(0)})>H(Q|R)$. Therefore, according to Proposition~\ref{Pro:mapping}, for arbitrary $\epsilon>0$, there exist mappings $\varphi_L:\mathcal A^{kL}\to \mathcal{Q}^L$ such that if we take $Q_s^L=\varphi_L(\bold A_f^{kL})$, then, for sufficiently large $L$ we have:
\begin{equation}\label{eq:second_block}
\left|\left|p_{Q_s^L,S_f^{kL}}(q_s^L,s_f^{kL})-p_{S_f^{kL}}(s_f^{kL})\prod_{t=1}^{\lceil\gamma L \rceil} p_{Q|R}(q_{s,t}|0)\prod_{t=\lceil\gamma L \rceil+1}^L p_{Q|R}(q_{s,t}|1)\right|\right|_{TV}\leq\epsilon.
\end{equation}
This implies that each of the players of team $A$ can extract shared randomness $Q_s^{L}$ which is almost independent of the observation of player B in the first block.
Let $\mathsf H_{2,s}=(S_f^{kL},B_f^{kl})$ denote the history of observations of player $B$ before starting the second block. Since $B_f^{kL}$ is produced locally by player B from $S_f^{kL}$, random variable $B_f^{kL}$ is conditionally independent of $Q_s^L$ given $S_f^{kL}$. Thus using the first property of total variation in Lemma~\ref{lemma:tv_p} for random variables $E=(S_f^{kL},Q_s^L)$ and $F=B_f^{kL}$ we have
\begin{equation}\label{eq:init}
\left|\left|p_{Q_s^L,\mathsf H_{2,s}}(q_s^L,h_{2,s})-p_{\mathsf H_{2,s}}(h_{2,s})\prod_{t=1}^{\lceil\gamma L \rceil} p_{Q|R}(q_{s,t}|0)\prod_{t=\lceil\gamma L \rceil+1}^L p_{Q|R}(q_{s,t}|1)\right|\right|_{TV}\leq\epsilon.
\end{equation}
In the second block, the players of team $A$ (all of whom know $Q_s^L$) choose $\bold A_{s,t}$ with distribution $p_{\bold A_{s,t}|Q_{s,t}}=p_{\bold A|Q}$ from $\mathcal A$, that is
$$p_{\bold A_s^L|Q_s^L,\mathsf H_{2,s}}(\bold a_s^L|q_s^L,h_{2,s})=\prod_{t=1}^Lp_{\bold A|Q}(\bold a_{s,t}|q_{s,t}).$$
 Because $p_{\bold A|Q}\in \Pi$, actions of players of team $A$ are mutually independent given $Q_s^L$ and they can realize it using private randomness.
Player $B$ gets signal $S_{s,t}$ constructed from $\bold A_{s,t}$ with the conditional distribution $p_{S|\bold A}$, \emph{i.e.,}
$$p_{S_s^L|\bold A_s^L,Q_s^L,\mathsf H_{2,s}}(s_s^L|\bold a_s^L,q_s^L,h_{2,s})=\prod_{t=1}^Lp_{S|\bold A}(s_{s,t}|\bold a_{s,t}).$$

\textbf{In the $i$-th block for $i\geq 3$:} Let $Q_p^L,\bold A_p^L,B_p^L,S_p^L$ be the variables of the previous block, \emph{i.e.,} the $(i-1)$-th block. Note that $Q_p^L,\bold A_p^L$ is available to all players of team $A$. The idea is to construct the current block's shared randomness $Q_c^L$ from $(Q_p^L,\bold A_p^L)$ of the previous block. To do this, we need to find a suitable mapping $\psi_L: \mathcal Q^L\times \mathcal A^L \to \mathcal Q^L$ and set $Q_c^L=\psi_L(Q_p^L,\bold A_p^L)$. Once the shared randomness $Q_c^L$ is constructed in the current block, players of team $A$ construct their actions in the current block solely based on $Q_c^L$ using the conditional distribution
$p_{\bold A_{c,t}|Q_{c,t}}=p_{\bold A|Q}$
as follows:
$$ p_{\bold A_c^L|Q_c^L}(\bold a_c^L|q_c^L)=\prod_{t=1}^L p_{\bold A|Q}(\bold a_{c,t}|q_{c,t}).$$
In other words, while players of team $A$ observe the entire past history of the actions, their actions in stage $t$ of the current block depends only on $Q_{c,t}$. As before, because $p_{\bold A|Q}\in\Pi$, distributed implementation of this conditional distribution is feasible with the private randomness of the team players.

Considering \eqref{eq:init} as the induction basis, suppose that in the previous block ($(i-1)$-th block) we have
\begin{equation}\label{eq:asmp}
\left|\left|p_{Q_p^L,\mathsf H_{2,p}}(q_p^L,h_{2,p})-p_{\mathsf H_{2,p}}(h_{2,p})\prod_{t=1}^{\lceil\gamma L \rceil} p_{Q|R}(q_{p,t}|0)\prod_{t=\lceil\gamma L \rceil+1}^L p_{Q|R}(q_{p,t}|1)\right|\right|_{TV}\leq3^{(i-3)}\epsilon,
\end{equation}
where $\mathsf H_{2,p}$ is the observation of player $B$ before starting the previous block. Our first goal is to identify an appropriate mapping $\psi_L: \mathcal Q^L\times \mathcal A^L \to \mathcal Q^L$ to construct the shared randomness of current block $Q_c^L$ such that
\begin{equation}\label{eq:asmpN}
\left|\left|p_{Q_c^L,\mathsf H_{2,c}}(q_c^L,h_{2,c})-p_{\mathsf H_{2,c}}(h_{2,c})\prod_{t=1}^{\lceil\gamma L \rceil} p_{Q|R}(q_{c,t}|0)\prod_{t=\lceil\gamma L \rceil+1}^L p_{Q|R}(q_{c,t}|1)\right|\right|_{TV}\leq3^{(i-2)}\epsilon,
\end{equation}
where $\mathsf H_{2,c}=(\mathsf H_{2,p},S_p^L,B_p^L)$ is the observation of player $B$ before starting the current block.

Random variables $\mathbf A_p^L$ and $S_p^L$ are produced in the previous block as follows:
$$p_{\bold A_p^L,S_p^L|Q_p^L}(\bold a_p^L, s_p^L|q_p^L)=\prod_{t=1}^Lp_{\bold A|Q}(\bold a_{p,t}|q_{p,t})p_{S|\bold A}(s_{p,t}|\bold a_{p,t}).$$
Using the first property of total variation in Lemma~\ref{lemma:tv_p} for $E=(Q_p^L,\mathsf H_{2,p})$ and $F=(\mathbf A_p^L,S_p^L)$ along with Equation \eqref{eq:asmp} we obtain
\begin{align}\label{eq:asmp2}
\Bigg\|p_{Q_p^L,\mathbf A_p^L,S_p^L,\mathsf H_{2,p}}&(q_p^L,\mathbf a_p^L,s_p^L,h_{2,p})-p_{\mathsf H_{2,p}}(h_{2,p})\times\notag\\
&\prod_{t=1}^{\lceil\gamma L \rceil} p_{Q\bold AS|R}(q_{p,t},\mathbf a_{p,t},s_{p,t}|0)\prod_{t=\lceil\gamma L \rceil+1}^L p_{Q\bold AS|R}(q_{p,t},\mathbf a_{p,t},s_{p,t}|1)\Bigg\|_{TV}\leq3^{(i-3)}\epsilon.
\end{align}

We use the following proposition which is a generalization of Proposition~\ref{Pro:mapping}.
\begin{proposition}\label{Pro:mapping2}
Let $(U,X,Y)\in \{0,1\}\times\mathcal X \times \mathcal Y$ and $(V,Z)\in \{0,1\}\times\mathcal Z$ be arbitrary random variables with respective distributions $p_{UXY}$ and $p_{VZ}$ with finite supports such that
$$H(X|YU)>H(Z|V).$$
We also define random variables $(X_1, \cdots, X_L, Y_1, \cdots, Y_L, W)$ with the joint distribution $p_{X^LY^LW}$ such that
$$\left|\left|p_{X^LY^LW}(x^L,y^L,w)-p_W(w)\prod_{t=1}^{\lceil p_U(0)L\rceil} p_{XY|U}(x_t,y_t|0)\prod_{t=\lceil p_U(0) L\rceil+1}^{L} p_{XY|U}(x_t,y_t|1)\right|\right|_{TV}\leq \delta_1.$$
Then, for arbitrary $\delta_2>0$, there exist mappings $\psi_L:\mathcal X^L\to \mathcal Z^L$ and natural number $\bar{L}$ such that for all $L\geq \bar{L}$ and $Z^L=\psi_L(X^L)$ we have
$$\left|\left| p_{Z^LY^LW}(z^L,y^L,w)-p_W(w)p_{Y^L}(y^L)\prod_{t=1}^{\lceil p_V(0) L \rceil}p_{Z|V}(z_t|0)\prod_{t=\lceil p_V(0) L \rceil+1}^{L}p_{Z|V}(z_t|1)\right|\right|_{TV}\leq2\delta_1+\delta_2.$$
\end{proposition}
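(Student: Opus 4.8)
The plan is to carry out the two–stage scheme advertised in the introduction: first extract from $X^L$ a nearly uniform bit string that is nearly independent of the side information $(Y^L,W)$, and then simulate the desired $Z$-source from those bits via Lemma~\ref{L:han}. Write $\tilde q(x^L,y^L,w)=p_W(w)\prod_{t=1}^{\lceil p_U(0)L\rceil}p_{XY|U}(x_t,y_t|0)\prod_{t=\lceil p_U(0)L\rceil+1}^{L}p_{XY|U}(x_t,y_t|1)$ for the ideal joint distribution, so that $\|p_{X^LY^LW}-\tilde q\|_{TV}\le\delta_1$ by hypothesis, and write $q_{Z}^{(L)}(z^L)=\prod_{t=1}^{\lceil p_V(0)L\rceil}p_{Z|V}(z_t|0)\prod_{t=\lceil p_V(0)L\rceil+1}^{L}p_{Z|V}(z_t|1)$ for the target $Z$-source. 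Since applying the deterministic map $(x^L,y^L,w)\mapsto(\psi_L(x^L),y^L,w)$ cannot increase total variation (Property 4 of Lemma~\ref{lemma:tv_p}), it suffices to construct $\psi_L$ for which $\|\tilde q_{\psi_L(X^L),Y^L,W}-q_{Z}^{(L)}\cdot\tilde q_{Y^L,W}\|_{TV}\to0$; then the triangle inequality, together with the facts that under $\tilde q$ we have $W\perp(X^L,Y^L)$ and $\tilde q_W=p_W$ (so that $\|q_{Z}^{(L)}\tilde q_{Y^L,W}-q_{Z}^{(L)}p_{Y^L}p_W\|_{TV}=\|\tilde q_{Y^L}-p_{Y^L}\|_{TV}\le\delta_1$ by Properties 4 and 5), accounts for precisely the claimed $2\delta_1$, with the remaining $\delta_2$ being the vanishing error of the construction under $\tilde q$.

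\textbf{Step 1: conditional extraction.} Under $\tilde q$ the string $X^L$, conditioned on $(Y^L,W)$, is a concatenation of two blocks of i.i.d.\ variables with $\tilde q(X^L\mid Y^L,W)=\tilde q(X^L\mid Y^L)$, and by the law of large numbers (exactly the computation in Remark~\ref{remark:iid}) $\tfrac1L\log\tfrac1{\tilde q(X^L\mid Y^L)}\to p_U(0)H(X\mid Y,U{=}0)+p_U(1)H(X\mid Y,U{=}1)=H(X|YU)$ almost surely; i.e.\ the conditional spectral inf-entropy rate of $X^L$ given $(Y^L,W)$ equals $H(X|YU)$, irrespective of the alphabet of $W$. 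Fix $\rho$ with $H(Z|V)<\rho<H(X|YU)$. By the information-spectrum form of randomness extraction with side information (leftover hashing), reviewed in \ref{sec:randomness_extraction}, there exist deterministic hash functions $g_L:\mathcal X^L\to\{0,1\}^{\lfloor\rho L\rfloor}$, obtained by fixing a good member of a two-universal family, such that $\|\tilde q_{g_L(X^L),Y^L,W}-\mathsf U_{\lfloor\rho L\rfloor}\cdot\tilde q_{Y^L,W}\|_{TV}\to0$, where $\mathsf U_r$ is the uniform distribution on $\{0,1\}^r$.

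\textbf{Step 2: simulation and conclusion.} Viewing $\mathsf U_{\lfloor\rho L\rfloor}$ as a length-$L$ binary source ($\lfloor\rho L\rfloor$ uniform coordinates, the remaining ones fixed), its inf- and sup-entropy rates are both $\rho$, while by Remark~\ref{remark:iid} the sup-entropy rate of $q_{Z}^{(L)}$ is $p_V(0)H(Z\mid V{=}0)+p_V(1)H(Z\mid V{=}1)=H(Z|V)<\rho$. Hence Lemma~\ref{L:han} provides deterministic maps $h_L:\{0,1\}^{\lfloor\rho L\rfloor}\to\mathcal Z^L$ with $\|p_{h_L(\mathsf U_{\lfloor\rho L\rfloor})}-q_{Z}^{(L)}\|_{TV}\to0$. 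Put $\psi_L:=h_L\circ g_L$, which is deterministic as required. Applying the map $(k,y,w)\mapsto(h_L(k),y,w)$ to the estimate of Step 1 (Property 4) and then replacing $p_{h_L(\mathsf U_{\lfloor\rho L\rfloor})}$ by $q_{Z}^{(L)}$ in the first coordinate (Property 5) yields $\|\tilde q_{\psi_L(X^L),Y^L,W}-q_{Z}^{(L)}\cdot\tilde q_{Y^L,W}\|_{TV}\le(\text{extraction error})+(\text{simulation error})\to0$. Choosing $\bar L$ so that this quantity is $\le\delta_2$ for all $L\ge\bar L$, and combining with the bookkeeping of the first paragraph, gives the asserted bound $2\delta_1+\delta_2$.

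\textbf{Main difficulty.} The crux is Step 1: one must extract close to $L\,H(X|YU)$ nearly uniform bits that are nearly independent of $(Y^L,W)$, from a source that is only piecewise i.i.d.\ and against side information $W$ whose alphabet may grow with $L$. This is exactly where the method of types of \cite{Gossner_Tomala} is insufficient and random hashing is needed; what makes it go through are (i) the independence $W\perp(X^L,Y^L)$ under $\tilde q$, so conditioning on $W$ is harmless, and (ii) the identification of the relevant conditional spectral entropy rate with the law-of-large-numbers limit $H(X|YU)$. The rest — passing from the ideal $\tilde q$ to the actual distribution, and composing extraction with simulation — is routine manipulation of total variation via Lemma~\ref{lemma:tv_p}.
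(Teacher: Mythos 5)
Your proposal is correct and follows the same two--stage architecture as the paper's proof: leftover--hash extraction of nearly uniform bits that are nearly independent of the side information, followed by simulation of the target $Z$-source via Lemma~\ref{L:han}, with the ideal--to--actual transfer accounting for exactly $2\delta_1$ (your bookkeeping of the three triangle-inequality terms matches the paper's). The one genuine difference is how the extraction step is organized. The paper chooses two rates $R_1<H(X|Y,U{=}0)$ and $R_2<H(X|Y,U{=}1)$ with $p_U(0)R_1+p_U(1)R_2>H(Z|V)$, applies Lemma~\ref{L:yasayi} separately to the two i.i.d.\ subblocks $(\hat X^{\hat L},\hat Y^{\hat L})$ and $(\tilde X^{\tilde L},\tilde Y^{\tilde L})$, and concatenates the two hash outputs; this keeps every invocation of the extraction lemma literally within its stated i.i.d.\ hypothesis. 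You instead hash the whole concatenated source in one shot at a single rate $\rho\in(H(Z|V),H(X|YU))$, conditioning on $(Y^L,W)$ jointly. That works, but note that Lemma~\ref{L:yasayi} and Remark~\ref{remark:iid_min} are stated only for i.i.d.\ pairs, so your Step~1 implicitly requires extending the smooth--collision--entropy lower bound of Remark~\ref{remark:iid_min} to the piecewise--i.i.d.\ source (with the independent $W$ appended via Property~5 of Lemma~\ref{lemma:tv_p}); the extension is immediate --- the typical-set truncation argument goes through verbatim with the law-of-large-numbers limit $H(X|YU)$ you compute, and Corollary~\ref{corollary:smooth} already holds for arbitrary joint distributions --- but it should be stated explicitly rather than cited as is. What your variant buys is one extraction lemma invocation instead of two and a slightly cleaner rate condition; what the paper's variant buys is that no extension of the stated lemmas is needed. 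Either way the conclusion and the $2\delta_1+\delta_2$ bound are the same.
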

\begin{proof}
The proof of Proposition~\ref{Pro:mapping2} is provided in Section~\ref{S:mapping2}.
\end{proof}

We use Proposition~\ref{Pro:mapping2} with the choice of $X=(Q,\bold A)$, $Y=S$, $U=R$, $Z=Q$, $V=R$, $X^L=(Q_p^L,\bold A_p^L)$, $Y^L=S_{p}^L$, $Z^L=Q_c^L$ and $W=\mathsf H_{2,p}$. The assumptions of Proposition~\ref{Pro:mapping2} are satisfied by Equations \eqref{eq:cond1} and \eqref{eq:asmp2}. Therefore, we obtain that for sufficiently large $L$, there exists a mapping $\psi_L: \mathcal Q^L\times \mathcal A^L \to \mathcal Q^L$ such that for $Q_c^L=\psi_L(Q_p^L,\bold A_p^L)$ we have
\begin{align}\label{eq:current_block}
&\left\|p_{Q_c^L,S_p^L,\mathsf H_{2,p}}(q_c^L,s_p^L,h_{2,p})-p_{S_p^L}(s_p^L)p_{\mathsf H_{2,p}}(h_{2,p})\prod_{t=1}^{\lceil\gamma L \rceil} p_{Q|R}(q_{c,t}|0)\prod_{t=\lceil\gamma L \rceil+1}^L p_{Q|R}(q_{c,t}|1)\right\|_{TV}\notag\\
&\qquad\qquad\qquad\qquad\qquad\qquad\qquad\qquad\qquad\qquad\leq2\times 3^{(i-3)}\epsilon+\epsilon \leq 3^{(i-2)}\epsilon.
\end{align}
Since $B_p^L$ is conditionally independent of $Q_c^L$ given $(\mathsf H_{2,p},S_p^L)$, using the first property of total variation in Lemma~\ref{lemma:tv_p} for $E=(Q_c^L,S_p^L,\mathsf H_{2,p})$ and $F=B_p^L$ we have
\begin{equation}\label{eq:sentence}
\left|\left|p_{Q_c^L,\mathsf H_{2,c}}(q_c^L,h_{2,c})-p_{\mathsf H_{2,c}}(h_{2,c})\prod_{t=1}^{\lceil\gamma L \rceil} p_{Q|R}(q_{c,t}|0)\prod_{t=\lceil\gamma L \rceil+1}^L p_{Q|R}(q_{c,t}|1)\right|\right|_{TV}\leq3^{(i-2)}\epsilon,
\end{equation}
where we utilized the notation $\mathsf H_{2,c}=(\mathsf H_{2,p},S_p^L,B_p^L)$. Therefore, by induction, Equation \eqref{eq:sentence} holds for arbitrary $i\geq3$.

\textbf{Calculation of the payoff of the $i$-th block for $i\geq 2$:}
Actions of team $A$ and the signals $S_c^L$ were produced according to
$$ p_{\bold A_c^L,S_c^L|\mathsf H_{2,c},Q_c^L}(\bold a_c^L|h_{2,c},q_c^L)=\prod_{t=1}^L p_{\bold A|Q}(\bold a_{c,t}|q_{c,t})p_{S|\bold A}(s_{c,t}|\bold a_{c,t}).$$
Player $B$ constructs his action $B_{c,t}$ from the observations available to him at stage $t$ of the current block, that is
$$p_{B_c^L|\mathsf H_{2,c},Q_c^L,\bold A_c^L,S_c^L}(b_c^L|h_{2,c},q_c^L,\bold a_c^L,s_c^L)=\prod_{t=1}^Lp_{B_{c,t}|\mathsf H_{2,c},S_c^{t-1},B_c^{t-1}}(b_{c,t}|h_{2,c}, s_c^{t-1}, b_c^{t-1}).$$
Therefore, using the first property of total variation in Lemma~\ref{lemma:tv_p} for random variables $E=(\mathsf H_{2,c},Q_c^L)$ and $F=(\bold A_c^L,B_c^L,S_c^L)$ along with Equation \eqref{eq:sentence}, we obtain
\begin{align*}
\Bigg\|&p_{Q_c^L,\bold A_c^L,S_c^L,\mathsf H_{2,c},B_c^L}(q_c^L,\bold a_c^L,s_c^L,h_{2,c},b_c^L)-p_{\mathsf H_{2,c}}(h_{2,c})\times \\
&\qquad\quad\prod_{t=1}^{\lceil\gamma L \rceil} p_{Q|R}(q_{c,t}|0)p_{\bold A|Q}(\bold a_{c,t}|q_{c,t})p_{S|\bold A}(s_{c,t}|\bold a_{c,t})p_{B_{c,t}|\mathsf H_{2,c},S_c^{t-1},B_c^{t-1}}(b_{c,t}|h_{2,c},s_c^{t-1},b_c^{t-1})\times\\
&\qquad\prod_{t=\lceil\gamma L \rceil+1}^L p_{Q|R}(q_{c,t}|1)p_{\bold A|Q}(a_{c,t}|q_{c,t})p_{S|\bold A}(s_{c,t}|\bold a_{c,t})p_{B_{c,t}|\mathsf H_{2,c},S_c^{t-1},B_c^{t-1}}(b_{c,t}|h_{2,c},s_c^{t-1},b_c^{t-1})\Bigg\|_{TV}\\
&\qquad\qquad\qquad\qquad\qquad\qquad\qquad\qquad\qquad\qquad\qquad\qquad\qquad\qquad\qquad\leq3^{(i-2)}\epsilon.
\end{align*}
Then, by using the second property of total variation in Lemma~\ref{lemma:tv_p} for random variables $E=(\bold A_c^L,B_c^L)$ and $F=(Q_c^L,\mathsf H_{2,c},S_c^L)$ and considering that $i-2\leq N$ we have
\begin{align} \label{eq:actions_dist}
\Bigg\|p_{\bold A_c^L,B_c^L}(\bold a_c^L,b_c^L)- \prod_{t=1}^{\lceil\gamma L \rceil} &p_{\bold A|R}(\bold a_{c,t}|0)p_{B_{c,t}|\bold A_c^{t-1},B_c^{t-1}}(b_{c,t}|\bold a_c^{t-1},b_c^{t-1})\times \notag \\ &\prod_{t=\lceil\gamma L \rceil+1}^L p_{\bold A|R}(\bold a_{c,t}|1)p_{B_{c,t}|\bold A_c^{t-1},B_c^{t-1}}(b_{c,t}|\bold a_c^{t-1},b_c^{t-1})\Bigg\|_{TV} \leq 3^N\epsilon.
\end{align}

Note that by repeating the above arguments, \eqref{eq:asmpN} concludes that the inequality \eqref{eq:actions_dist} holds for the second block as well, \emph{i.e.,} substituting the subscripts $`c`$ with $`s`$, the inequality \eqref{eq:actions_dist} still holds. Let $\mathsf{M}$ be the maximum absolute value of the payoff table. By relating the total variation distance to the payoff, Equation \eqref{eq:actions_dist} implies that the payoff of team $A$ at the second and the current block (in the $i$-th block for any $i\geq 2$) is at most in $2\mathsf{M}3^N\epsilon$ distance of the payoff of team $A$ when they played i.i.d.\ according to $p_{\bold A|R}(\bold a|0)$ in $\gamma$ portion of stages and according to $p_{\bold A|R}(\bold a|1)$ in the remaining $1-\gamma$ portion of stages. Thus, it suffices to take $N$ large enough so that the effect of the first block in the average payoff of total stages is diminished, and then take $\epsilon$ small enough (consequently take $L$ large enough such that Equations \eqref{eq:second_block} and \eqref{eq:current_block} are satisfied) to make $2\mathsf{M}3^N\epsilon$ as small as desired.

\begin{remark}
We have shown that fixing some $k$ and $N$, for sufficiently large $L$, we can achieve the desired payoff in $T=L(k+N)$ stages. If $T$ is not divisible by $k+N$, we can make the first block slightly longer but because $k+N\ll T$ for sufficiently large $T$, this has negligible effect on the achieved payoff.
\end{remark}
\color{black}\begin{remark} \label{rem:boundary_case}
When player $B$ monitors the actions of the team players imperfectly, we showed that the team players can secure $\pi(\bold A|R)$ for $p_{RQ\bold AS}$ satisfying \eqref{eqnE1}, \eqref{eqnE2} and the strict form of \eqref{eqnE3}. For the boundary case when \eqref{eqnE3} holds with equality, namely when $H(Q\bold A|RS)=H(Q|R)$, the proof given above needs a slight modification: let $\epsilon'>0$ be an arbitrarily small real number. We add to each block (other than the first block) another subblock of length $\lceil \epsilon'L \rceil$ which we call ``the shared randomness banking subblock''. In this subblock, the team players play i.i.d.\ according to $p_{\bold A}^{(0)}$, as they did in the first block. The remaining stages of the block is divided into two subblocks of lengths $\lceil (1-\epsilon')\gamma L\rceil$ and $L-\lceil \epsilon'L \rceil-\lceil (1-\epsilon')\gamma L\rceil$, where the team players play i.i.d.\ according to $p_{\bold A|R}(\bold a|0)$ and $p_{\bold A|R}(\bold a|1)$, respectively. To do so, the team players need $(1-\epsilon')H(Q|R)$ per stage shared randomness, while they can distill $\epsilon' H(\bold A^{(0)}|S^{(0)})+(1-\epsilon')H(\bold AQ|SR)$ per stage shared randomness. Note that since $H(Q\bold A|RS)=H(Q|R)$ and $H(\bold A^{(0)}|S^{(0)})>0$, the distilled randomness is strictly more than the consumed randomness. Therefore, using a similar argument as we utilized for the strict case of $H(Q\bold A|RS)>H(Q|R)$, we can prove that the team players can secure payoffs arbitrarily close to $-\epsilon' \mathsf{M}+(1-\epsilon')\pi(\bold A|R)$. Since $\epsilon'$ is arbitrary, we conclude that the team players can secure $\pi(\bold A|R)$.
\end{remark}\color{black}
\subsection{Proof of Theorem \ref{theorem:secret_corr}: Player $B$ can defend $w$}\label{sec:defend_w}
Since we have given the solution expression in a different form than the one given by \cite{Gossner_Tomala}, we adapt the proof of \cite{Gossner_Tomala} to our solution form.

In this subsection we assume that at the end of each stage $t$, the players of team $A$ in addition to $\mathbf{A}_t$ and $B_t$, observe the signal $S_t$. We show that even with the more information available to players of team $A$, player $B$ can still defend $w$. Let $\sigma$ be an arbitrary strategy for team $A$. We generate strategy $\tau$ for player $B$ as follows: given $h_2^t$, an arbitrary history of observations of player $B$ until stage $t$, $\tau_t(h_2^t)$ is the best choice of player $B$ that minimizes the expected payoff at stage $t$, \emph{i.e.},
$$ \tau_t(h_2^t) \in \underset{b\in \mathcal{B}}{\arg \min} \E_{\sigma} \left[u_{\mathbf{A}_t,b}|\mathsf{H}_2^t=h_2^t\right],$$
where $\E_{\sigma}$ denotes the expectation with respect to the probability distribution induced by $\sigma$. Let team $A$ and player $B$ play with respective strategies $\sigma$ and $\tau$ and let $\mathbf{A}^T$, $B^T$ and $S^T$ denote the sequence of actions and signals generated during the $T$ stages of the game. Define:
$$R=(I,B^{I-1},S^{I-1}),\quad Q=(I,\bold A^{I-1},B^{I-1},S^{I-1}), \quad \tilde{\mathbf A}=\mathbf A_I \quad \textrm{and}\quad \tilde S=S_I,$$
where $I$ is a uniformly distributed random variable on $\mathcal I=\{1,2,\ldots,T\}$ and independent of $(\mathbf A^T, B^T, S^T)$. Random variable $I$ is the so-called time sharing random variable. Note that $R$ is a function of $Q$. Therefore,
$$p_{\tilde S,\tilde{\bold A},R,Q}(s,\bold a,r,q)=p_{RQ}(r,q)p_{\tilde{\bold A}|Q}(\bold a|q)p_{S|\bold A}(s|\bold a).$$
Since the action profile $\bold A_t$ is implemented distributively by conditioning on $(\bold A^{t-1},B^{t-1},S^{t-1})$, the conditional distribution $p_{\tilde{\bold A}|Q}(\bold a|q)$ belongs to $\Pi$, \emph{i.e.}, $p_{\tilde{\bold A}|Q}(\bold a|q)\in \Pi.$ Next, we need to show that $H(Q\tilde{\bold A}|\tilde SR)\geq H(Q|R)$.
Note that
\begin{align}
H(Q\tilde{\bold A}|\tilde SR)-H(Q|R)=H(\tilde{\bold A} \tilde S|QR)-H(\tilde S|R).\label{eqn:equivalent-form}
\end{align}
The inequality $H(\tilde{\bold A} \tilde S|QR)-H(\tilde S|R)\geq 0$ follows from
\begin{align}
H(\tilde{\bold A} \tilde S|QR)-&H(\tilde S|R) =
H(\mathbf A_I S_I|I,\bold A^{I-1},B^{I-1},S^{I-1})-H(S_I|I,B^{I-1},S^{I-1}) \nonumber
\\&=
\sum_{t=1}^T \frac 1T\left(H(\bold A_t,S_t|\bold A^{t-1},B^{t-1},S^{t-1}) - H(S_t|B^{t-1},S^{t-1})\right)\label{eqn:t1}\\
&=\sum_{t=1}^T \frac 1T \left(H(\bold A^t|B^{t},S^{t})-H(\bold A^{t-1}|B^{t-1},S^{t-1})\right)\label{eqn:t2}\\
&=\frac 1T H(\bold A^T|B^T,S^T)\geq 0,\nonumber
\end{align}
where \eqref{eqn:t1} follows from the fact that $I$ is uniform and independent of $(\mathbf A^T, B^T, S^T)$, and \eqref{eqn:t2} follows from
\begin{align}
H(&\bold A^t|B^t,S^t)-H(\bold A^{t-1}|B^{t-1},S^{t-1})\nonumber\\
&=H(\bold A^{t-1}|B^t,S^t)+H(\bold A_t|B^t,S^t,\bold A^{t-1})-H(\bold A^{t-1}|B^{t-1},S^{t-1})\nonumber\\
&=H(\bold A^{t-1},B_t,S_t|B^{t-1},S^{t-1})-H(B_t,S_t|B^{t-1},S^{t-1})+H(\bold A_t,B_t,S_t|\bold A^{t-1},B^{t-1},S^{t-1})\nonumber\\ &\qquad \qquad\qquad-H(B_t,S_t|\bold A^{t-1},B^{t-1},S^{t-1})-H(\bold A^{t-1}|B^{t-1},S^{t-1})\nonumber\\
&=H(\bold A_t,B_t,S_t|\bold A^{t-1},B^{t-1},S^{t-1})-H(B_t,S_t|B^{t-1},S^{t-1})\nonumber\\
&=H(\bold A_t,S_t|\bold A^{t-1},B^{t-1},S^{t-1})- H(S_t|B^{t-1},S^{t-1})+H(B_t|\bold A^{t},B^{t-1},S^{t})-H(B_t|B^{t-1},S^{t})\nonumber\\
&=H(\bold A_t,S_t|\bold A^{t-1},B^{t-1},S^{t-1})- H(S_t|B^{t-1},S^{t-1}),\label{eqn:t3}
\end{align}
where \eqref{eqn:t3} holds because the random variable $B_t$ is conditionally independent of $(\bold A^t,S_t)$ given $(B^{t-1},S^{t-1})$.

Now we relate the payoff of $\sigma$ and $\tau$ to $R$ and $\bold{\tilde A}:$
\begin{align}\label{eq:utility_pi}
\lambda_T(\sigma,\tau)&=\sum_{t=1}^T \frac 1T \sum_{b^{t-1},s^{t-1}}p_{B^{t-1},S^{t-1}}(b^{t-1},s^{t-1})\min_{b\in \mathcal B}\E\left[u_{\bold A_t,b}|B^{t-1}=b^{t-1},S^{t-1}=s^{t-1}\right]\notag \\
&=\sum_{r} p_R(r)\min_{b\in \mathcal B}\E\left[u_{\tilde{\bold A},b}|R=r\right] = \pi (\tilde{\bold A}|R).
\end{align}
We have identified random variables $(R, Q, \tilde{\bold A}, \tilde{S})$ satisfying the constraints of the problem, except for the cardinality bounds on $R$ and $Q$. Cardinality of $R$ and $Q$ can be reduced using the standard arguments such as the support lemma of \cite[Appendix C]{elgamal} or the Fenchel-Bunt extension to the Caratheodory's theorem. \color{black} We leave the argument on the reduction of the cardinality of $R$ and $Q$ to Proposition~\ref{pro:cardinality}. According to Proposition~\ref{pro:cardinality}, given the random variables $(R, Q, \tilde{\bold A}, \tilde{S})$ satisfying the constraints of the problem except for the cardinality bounds on $R$ and $Q$, we can identify other random variables $(R', Q', \bold A', S')$ such that they satisfy all the constraints of the problem.
\begin{proposition} \label{pro:cardinality}
Let $\bold{A}\in \mathcal A$, $S \in \mathcal S$, $R\in \mathcal R$ and $Q\in \mathcal Q$ have a joint distribution satisfying Equations~\eqref{eqnE1}-\eqref{eqnE3} along with
\begin{equation}\label{eq:pi}
\pi(\bold{A}|R) \geq \beta,
\end{equation}
where $\mathcal R$ and $\mathcal Q$ are finite sets with arbitrary cardinalities and $\beta$ is an arbitrary real number. There exist random variables $\bold{A}'\in \mathcal A$, $S' \in \mathcal S$, $R'\in \mathcal R'$ and $Q'\in \mathcal Q'$ such that $|\mathcal R'|=2$, $|\mathcal Q'|=2|\mathcal A|$ and the joint distribution on $(\bold{A}',S',R',Q')$ satisfies Equations~\eqref{eqnE1}-\eqref{eqnE3} and \eqref{eq:pi}.
\end{proposition}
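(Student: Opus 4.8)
The plan is to apply a support-lemma argument twice: first to shrink the alphabet of $Q$ to size $|\mathcal A|$ \emph{for each fixed value of $R$}, and then to shrink the alphabet of $R$ to size two. Before doing either I would rewrite the entropy constraint \eqref{eqnE3} in a more convenient form. Since \eqref{eqnE1} makes $R-Q-(\bold{A},S)$ a Markov chain, so that $(\bold{A},S)$ is conditionally independent of $R$ given $Q$, the chain rule gives $H(Q\bold{A}|SR)-H(Q|R)=H(\bold{A}S|QR)-H(S|R)=H(\bold{A}S|Q)-H(S|R)$, so \eqref{eqnE3} is equivalent to $H(\bold{A}S\,|\,Q)\ge H(S\,|\,R)$. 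The advantage of this form is that, together with \eqref{eq:pi}, it only involves ``single-letter'' functionals of conditional action distributions: writing $g(\mu)$ for the joint entropy of $(\bold{A},S)$ when $\bold{A}\sim\mu$ and $S$ is passed through the fixed channel $p_{S|\bold{A}}$, $\mathsf h(\mu)$ for the entropy of the resulting $S$-marginal, and $\pi(\mu)=\min_{b}\sum_{\bold{a}}\mu(\bold{a})u_{\bold{a},b}$ (all continuous on $\Delta(\mathcal A)$), one has
$$H(\bold{A}S|Q)=\sum_q p_Q(q)\,g\bigl(p_{\bold{A}|Q}(\cdot|q)\bigr),\qquad H(S|R)=\sum_r p_R(r)\,\mathsf h\bigl(p_{\bold{A}|R}(\cdot|r)\bigr),\qquad \pi(\bold{A}|R)=\sum_r p_R(r)\,\pi\bigl(p_{\bold{A}|R}(\cdot|r)\bigr).$$

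\emph{Reducing the alphabet of $Q$.} Fix $r$ with $p_R(r)>0$ and write $\mu_q=p_{\bold{A}|Q}(\cdot|q)\in\Pi$ (this uses \eqref{eqnE2}). The vector $\bigl(p_{\bold{A}|R}(\cdot|r),\,\sum_q p_{Q|R}(q|r)\,g(\mu_q)\bigr)$ lies in the convex hull of $\{(\mu,g(\mu)):\mu\in\Pi\}\subseteq\Delta(\mathcal A)\times\mathbb R$. This set is connected, being the continuous image of the connected set $\prod_i\Delta(\mathcal A_i)$, and it sits in $|\mathcal A|$ real dimensions ($|\mathcal A|-1$ for the probability vector, one more for $g$), so by the Fenchel--Bunt strengthening of Carath\'eodory's theorem the vector is a convex combination of at most $|\mathcal A|$ points $(\nu_j,g(\nu_j))$ with $\nu_j\in\Pi$. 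I would then replace $Q$ conditioned on $R=r$ by a variable on $|\mathcal A|$ fresh symbols carrying these weights, with $p_{\bold{A}|Q=j}=\nu_j$; relabelling the fresh symbols so that they are disjoint across the different values of $r$ keeps $p_{\bold{A}|Q}$ a function of $Q$ alone, so \eqref{eqnE1} and \eqref{eqnE2} survive (and $R$ becomes a deterministic function of $Q$). For every $r$, both $p_{\bold{A}|R}(\cdot|r)$ and $\sum_q p_{Q|R}(q|r)g(\mu_q)$ are unchanged, hence $\pi(\bold{A}|R)$, $H(S|R)$ and $H(\bold{A}S|Q)$ are unchanged, and \eqref{eqnE3} and \eqref{eq:pi} are preserved. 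At this point $|\mathcal Q|\le|\mathcal R|\cdot|\mathcal A|$.

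\emph{Reducing the alphabet of $R$.} Attach to each $r$ the planar point $\Phi(r)=\bigl(\pi(p_{\bold{A}|R}(\cdot|r)),\,\sum_q p_{Q|R}(q|r)g(\mu_q)-\mathsf h(p_{\bold{A}|R}(\cdot|r))\bigr)\in\mathbb R^2$. Because the $Q$-symbols are now disjoint across the values of $r$, if we keep the conditional laws $p_{Q\bold{A}S|R=r}$ fixed and only alter the weights $p_R$, then $\pi(\bold{A}|R)$ equals the first coordinate of $\mathbb E[\Phi(R)]$ and $H(\bold{A}S|Q)-H(S|R)$ equals its second coordinate; so \eqref{eqnE3} and \eqref{eq:pi} say precisely that $\mathbb E_{p_R}[\Phi(R)]$ lies in the closed convex region $\mathcal K=\{(\psi,\phi):\psi\ge\beta,\ \phi\ge0\}$. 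If some $\Phi(r)$ already lies in $\mathcal K$, take $R'$ equal to that $r$ deterministically. Otherwise Carath\'eodory in the plane expresses $\mathbb E_{p_R}[\Phi(R)]$ as a strictly positive convex combination of one, two, or three of the points $\Phi(r)$; in the first two cases we are done, and in the last case the three points are affinely independent and $\mathbb E_{p_R}[\Phi(R)]$ lies in the interior of the triangle $\triangle$ they span, with no vertex of $\triangle$ in $\mathcal K$. A short planar fact --- the polygon $\triangle\cap\mathcal K$ is two-dimensional and each of its vertices is either the corner $(\beta,0)$ of $\mathcal K$ or lies on an edge of $\triangle$ --- then forces some edge of $\triangle$ to meet $\mathcal K$, i.e.\ there are two values $r_1,r_2$ and a weight $\alpha\in[0,1]$ with $\alpha\Phi(r_1)+(1-\alpha)\Phi(r_2)\in\mathcal K$. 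Taking $R'$ supported on $\{r_1,r_2\}$ with weights $\alpha,1-\alpha$ and leaving $p_{Q\bold{A}S|R}$ unchanged gives random variables obeying \eqref{eqnE1}--\eqref{eqnE3} and \eqref{eq:pi} with $|\mathcal R'|\le2$ and $|\mathcal Q'|\le2|\mathcal A|$; padding the two alphabets with zero-probability symbols delivers the stated equalities.

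The substance is modest; the points that need care are obtaining the equivalent form $H(\bold{A}S|Q)\ge H(S|R)$ together with the conditional-independence facts behind it, carrying out the relabelling in the $Q$-reduction so that shrinking $Q$ separately for each value of $R$ does not secretly make $p_{\bold{A}|Q}$ depend on $R$ and thereby break \eqref{eqnE1}, and justifying that two \emph{inequality} constraints can be met with support size two rather than the three a plain Carath\'eodory count would suggest. This last point --- handled by the small planar lemma about $\triangle$ and $\mathcal K$ --- is the one I expect to be the main obstacle to write down carefully.
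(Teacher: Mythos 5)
Your proposal is correct, but it takes a genuinely different route from the paper's proof. The paper reduces $R$ \emph{first} and then $Q$, and both steps are pure linear-programming support-lemma arguments: it keeps the conditional law $p_{Q\bold A S|R}$ fixed and varies only the marginal $p_R$ inside the polytope cut out by the simplex constraints together with a single linear equality preserving the value of $H(Q\bold A|SR)-H(Q|R)$, then maximizes the linear functional $\pi(\bold A|R)$ over this polytope; an optimal vertex must activate at least $|\mathcal R|-2$ nonnegativity constraints, giving support two for $R$. The $Q$-step is analogous: for each $r$ it fixes $p_{\bold A S|RQ}$, preserves $p_{\bold A|R=r}$ through $|\mathcal A|-1$ linear equalities (which freezes both $\pi(\bold A|R)$ and $H(S|R)$), and maximizes $H(\bold A S|Q,R=r)$, so that the identity $H(Q\bold A|RS)-H(Q|R)=H(\bold A S|RQ)-H(S|R)$ --- the same identity underlying your rewriting of \eqref{eqnE3} --- certifies the constraint at a vertex with at most $|\mathcal A|$ atoms per $r$. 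You instead reduce $Q$ first by Fenchel--Bunt applied to the connected set $\{(\mu,g(\mu)):\mu\in\Pi\}$, and then reduce $R$ by treating \eqref{eqnE3} and \eqref{eq:pi} symmetrically as two planar inequality constraints: Carath\'eodory gives three atoms, and your triangle-versus-quadrant observation brings this down to two. That planar step is correct and is even easier than you fear: from any point of $\triangle\cap\mathcal K$ the ray in direction $(1,1)$ stays in the quadrant $\mathcal K$ and, since $\triangle$ is bounded, must exit through an edge of $\triangle$, which is exactly the two-point combination you need. The trade-off: the paper's argument needs no topology (no connectedness of $\Pi$, no planar geometry) and obtains \eqref{eqnE3} by freezing one side and maximizing the other, whereas yours avoids the maximization trick, handles both constraints as genuine inequalities, and makes the origin of the support sizes (at most $|\mathcal A|$ atoms of $Q$ per value of $r$, two values of $r$) more transparent; the disjoint relabelling of the $Q$-alphabet across values of $r$, which you rightly emphasize, is what makes the final count $2|\mathcal A|$ (after padding with null symbols) go through.
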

The proof of Proposition~\ref{pro:cardinality} is provided in \ref{sec:cardinality}. \color{black}

\section{Computation of min-entropy function}\label{S:prob_state}
Consider a one-shot zero-sum game between players Alice (maximizer) and Bob (minimizer) and let $\mathcal J(\mathtt h)$ be the maximum expected payoff that Alice can secure (regardless of what Bob plays) by playing mixed actions of entropy at most $\mathtt h$ (as defined in Equation \eqref{eq:j}). As stated in Theorems~\ref{T:maxmin_val_gossner} and \ref{T:maxmin_val}, $\mathcal J(\mathtt h)$ characterizes the maxmin values of the repeated game with bounded entropy (Section~\ref{subsec:gossner}) and the repeated game with leaked randomness (Section~\ref{sec:5op}). The main goal of this section is to study the computational aspects of $\mathcal J(\mathtt h)$.

Let $F(w)$ denote the inverse function of $\mathcal J(\mathtt h)$. $F(w)$ is the minimum entropy that Alice needs to secure payoff $w$; thus, we call it the \emph{min-entropy function}. To compute $F(w)$, one has to minimize the concave function of entropy over the polytope of mixed actions that secure payoff $w$ for Alice. The problem of entropy minimization over a polytope is a standard problem and is known to be NP-hard (see \cite{Kovacevic2}). Thus, the computation of $F(w)$ (or $\mathcal J(\mathtt h)$) is NP-hard. In this section, we study $\mathcal J(\mathtt h)$ thorough its inverse, $F(w)$, and provide some computationally efficient upper and lower bounds for it.

\subsection{Problem statement}
Consider a zero-sum game between players Alice ($A$) and Bob ($B$) with respective pure strategies sets $\mathcal{A}=\{1,\dots,n\}$ and $\mathcal{B}=\{1,\dots,n'\}$, where $n$ and $n'$ are natural numbers. The payoff matrix is denoted by $\mathtt{U}=[u_{i,j}]$, where $u_{i,j}$ is the real valued payoff that player $A$ gets from player $B$ when $i\in \mathcal{A}$ and $j\in \mathcal{B}$ are played. Player $A$ (player $B$) wishes to maximize (minimize) the expected payoff. The set of all randomized strategies of players $A$ and $B$ are denoted by $\Delta (\mathcal{A})$ and $\Delta (\mathcal{B})$ respectively, which are the probability simplexes on sample spaces $\mathcal{A}$ and $\mathcal{B}$ respectively. Thus, Alice's strategy corresponds to a pmf $\mathbf{p}=(p_{1},p_{2},\dots ,p_{n})$, which can be also illustrated as a column vector with non-negative entries that add up to one.

Assume that player $A$ uses randomized strategy $\mathbf{p}$. Then, the payoff of Alice if Bob plays $j\in\mathcal{B}$ is $\sum_{i}p_iu_{i,j}$. We say that Alice secures payoff $w$ with randomized strategy $\mathbf{p}$ (regardless of the action of player $B$) if $\sum_{i}p_iu_{i,j}\geq w$ for all $j\in\mathcal{B}$. Thus, the set of all distributions that guarantee payoff $ w $ for player $A$ can be expressed as
\begin{equation}\label{E:P_set}
\mathcal{P}_{\mathtt U}(w)=\{\mathbf{p}\in \Delta(\mathcal{A}): {\mathbf{p}^T\mathtt U}\geq w \mathbf{1}^T\},\end{equation}
where $\mathbf{p}^T$ is the transpose of the column vector $\mathbf{p}$, $\mathbf{1}$ is a column vector of all ones and $\mathbf{v}_1\geq \mathbf{v}_2$ means any element of $\mathbf{v}_1$ is greater than or equal to the corresponding element at $\mathbf{v}_2$.

We define
\begin{equation}\label{E:min_entropy2}
F_{\mathtt U}(w) \triangleq \underset{\mathbf{p}\in \mathcal{P}_{\mathtt U}( w )}{\min} H(\mathbf{p}).
\end{equation}
If the set $\mathcal{P}_{\mathtt U}( w )$ is empty, we set $F_{\mathtt U}(w)=+\infty$.

\textbf{A remark on notation:}
Two-player zero-sum games are completely characterized by their payoff matrix. Hence, for the sake of simplicity, we will call two-player zero-sum games with their payoff matrix. Thus, game $\mathtt U$ refers to a game with payoff table $\mathtt U$.

\begin{definition}\label{D:parameters} Given a game $\mathtt U=[u_{i,j}]$, parameters $\underbar{m}$, $\overline{m}$, ${v}$ and $ w ^*$ are defined as:
\begin{itemize}
\item
$\underline{m}$ ($\overline{m}$) is the minimum (maximum) element of matrix $\mathtt U$ : $\underbar{m}=\min_{i,j}u_{i,j}$ ($\overline{m}=\max_{i,j}u_{i,j}$).
\item
${v}$ is the maximum payoff secured by pure actions (\emph{pure-strategy security level}): ${v}=\max_i \min_j u_{i,j}$.
\item
$ w ^*$ is the value of the game $\mathtt U$, which is the maximum guaranteed payoff with unlimited access to random sources:
\begin{equation}\label{E:Nash_val}
w ^*=\underset{w:~\mathcal{P}_{\mathtt U}( w )\neq \emptyset}{\max} w.
\end{equation}
\end{itemize}
Note that by definition, $\underbar{m}\leq v\leq w^*\leq \overline{m}$.
\end{definition}

According to Definition~\ref{D:parameters}, ${v}$ is the payoff that is guaranteed without consumption of any randomness, whereas $ w ^*$ is the maximum guaranteed payoff when unlimited randomness is available. Thus, it is interesting to consider the min-entropy function $F_{\mathtt U}( w )$ in the domain ${v}\leq w \leq w ^*$. If $ w \leq {v}$, then $F_{\mathtt U}( w )=0$; if $ w > w ^*$ the feasible set of optimization problem in~\ref{E:min_entropy2} is empty and $F_{\mathtt U}( w )=+\infty$. When ${v}\leq w \leq w ^*$, the function $F_{\mathtt U}( w )$ is not necessarily convex or concave as a function of $w$: it is strictly
increasing and piecewise concave \cite[p. 241]{Neyman2000}. \footnote{This property is stated in \cite{Neyman2000} in terms of the function $J(\cdot)$.}

\subsection{On the set $\mathcal{P}_{\mathtt U}(w)$}\label{sec:Pu}
The function $F_{\mathtt U}(w)$ is defined in \eqref{E:min_entropy2} using $\mathcal{P}_{\mathtt U}(w)$, the set of all distributions that guarantee a security level $ w $ for player $A$. Observe that
\begin{equation}\mathcal{P}_{\mathtt U}(w)=\{\mathbf{p}\in \Delta(\mathcal{A}): {\mathbf{p}^T\mathtt U}\geq w \mathbf{1}^T\}\label{eqn:PUdef}\end{equation}
is a polytope defined via some linear constraints. As the matrix $\mathtt U$ is completely arbitrary, with a change of variables, one can convert it to many different equivalent polytopes.

\setlength{\unitlength}{1cm}
\begin{figure}
\centering
\begin{picture}(6,6)
\put(0,.5){\includegraphics[width=6cm]{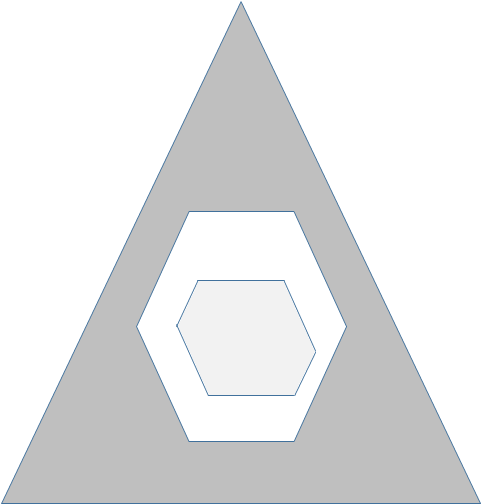}}
\put(2.7,0){$\Delta(\mathcal{A})$}
\put(2.5,2.5){$\mathcal{P}_{\mathtt U}(w_1)$}
\put(4,1.2){$\mathcal{P}_{\mathtt U}^{c}(w_2)$}
\end{picture}
\caption{Illustration of $\mathcal{P}_{\mathtt U}(w_1)$, $\mathcal{P}_{\mathtt U}^{c}(w_2)$ and $\Delta(\mathcal{A})$}
\label{F:simplex}
\end{figure}

We only need to study $\mathcal{P}_{\mathtt U}(w)$ for ${v} \leq w \leq w ^*$. It is immediate from the definition of $\mathcal{P}_{\mathtt U}(w)$ that this set is decreasing in $w$, \emph{i.e.,} for any $w_1\geq w_2$,
\begin{align}\label{eqn:P1P2com}\mathcal{P}_{\mathtt U}(w_1)\subseteq \mathcal{P}_{\mathtt U}(w_2).\end{align}
We are interested to see if the inclusion in \eqref{eqn:P1P2com} is strict, and if yes, quantify to what extent it is. To do this, we look at the distance between the set $\mathcal{P}_{\mathtt U}(w_1)$ and the compliment of $\mathcal{P}_{\mathtt U}(w_2)$ (that is $\mathcal{P}^c_{\mathtt U}(w_2)=\Delta(\mathcal A)-\mathcal{P}_{\mathtt U}(w_2)$). The sets $\mathcal{P}_{\mathtt U}(w_1)$, $\mathcal{P}_{\mathtt U}^{c}(w_2)$ and $\Delta(\mathcal{A})$ are illustrated in Figure~\ref{F:simplex}. The distance between any two sets can be defined as
$$d(\mathcal S_1, \mathcal S_2)\triangleq\inf_{\mathbf{p}\in \mathcal S_1, \mathbf{q}\in \mathcal S_2}d(\mathbf{p},\mathbf{q}),$$
where $d(\mathbf{p},\mathbf{q})$ can be any arbitrary distance measure. The standard option is the total variation distance
$$d_1(\mathbf{p},\mathbf{q})=\frac 12 \sum_{i}|p_i-q_i|.$$
With this choice of the distance, we have
\begin{theorem}\label{thm:TVw1w2}For any $w_1$ and $w_2$ satisfying ${v} \leq w_2\leq w_1 \leq w ^*$, we have
$$d_1(\mathcal{P}^c_{\mathtt U}(w_2), \mathcal{P}_{\mathtt U}(w_1))\geq\frac{|w_1-w_2|}{|\overline{m}-\underbar{m}|}.$$
\end{theorem}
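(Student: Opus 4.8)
The plan is to prove the set-distance bound by working with arbitrary representatives: fix an arbitrary $\mathbf{p}\in\mathcal{P}_{\mathtt U}(w_1)$ and an arbitrary $\mathbf{q}\in\mathcal{P}^c_{\mathtt U}(w_2)$, show directly that $d_1(\mathbf{p},\mathbf{q})\geq (w_1-w_2)/(\overline m-\underline m)$, and then take the infimum over all such pairs. We may assume $w_1>w_2$, since otherwise the asserted bound is the trivial $d_1\geq 0$; we may also assume $\mathcal{P}^c_{\mathtt U}(w_2)\neq\emptyset$, since otherwise the left-hand side is $+\infty$ by convention. Note $\mathcal{P}_{\mathtt U}(w_1)\neq\emptyset$ because $w_1\leq w^*$ together with the monotonicity \eqref{eqn:P1P2com}, and $\overline m>\underline m$ because $\underline m\leq v\leq w_2<w_1\leq\overline m$, so the right-hand side is well defined.

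First I would use membership of $\mathbf{q}$ in the \emph{complement}: since $\mathbf{q}\notin\mathcal{P}_{\mathtt U}(w_2)$, by the definition \eqref{eqn:PUdef} at least one of the defining linear inequalities fails, i.e.\ there is a column $j\in\mathcal{B}$ with $\sum_i q_i u_{i,j}<w_2$. On the other hand, $\mathbf{p}\in\mathcal{P}_{\mathtt U}(w_1)$ forces $\sum_i p_i u_{i,j}\geq w_1$ for \emph{every} column, in particular for this distinguished $j$. Subtracting the two estimates gives $\sum_i (p_i-q_i)u_{i,j}>w_1-w_2$.

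Next I would bound this same quantity from above in terms of $d_1(\mathbf{p},\mathbf{q})$. Set $r_i=p_i-q_i$ and split $r_i=r_i^+-r_i^-$ into nonnegative and nonpositive parts. Because $\mathbf{p}$ and $\mathbf{q}$ are both probability vectors, $\sum_i r_i=0$, hence $\sum_i r_i^+=\sum_i r_i^-=\tfrac12\sum_i|r_i|=d_1(\mathbf{p},\mathbf{q})$. Using $\underline m\leq u_{i,j}\leq\overline m$ together with $r_i^+,r_i^-\geq 0$,
$$\sum_i r_i u_{i,j}=\sum_i r_i^+ u_{i,j}-\sum_i r_i^- u_{i,j}\leq\overline m\sum_i r_i^+-\underline m\sum_i r_i^-=(\overline m-\underline m)\,d_1(\mathbf{p},\mathbf{q}).$$
Chaining this with the lower bound from the previous paragraph yields $(\overline m-\underline m)\,d_1(\mathbf{p},\mathbf{q})>w_1-w_2$, i.e.\ $d_1(\mathbf{p},\mathbf{q})>(w_1-w_2)/(\overline m-\underline m)$, and taking the infimum over $\mathbf{p}\in\mathcal{P}_{\mathtt U}(w_1)$ and $\mathbf{q}\in\mathcal{P}^c_{\mathtt U}(w_2)$ proves the theorem.

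I do not expect a real obstacle here; the argument is elementary. The one point needing care is the upper bound on $\sum_i r_i u_{i,j}$: a crude estimate $|u_{i,j}|\leq\max(|\overline m|,|\underline m|)$ would only give the weaker constant $2\max(|\overline m|,|\underline m|)$, and it is precisely the mean-zero property $\sum_i r_i=0$ that lets one replace it by the \emph{range} $\overline m-\underline m$ of the payoff entries. The only remaining subtlety is the bookkeeping of the degenerate cases ($w_1=w_2$, $\mathcal{P}^c_{\mathtt U}(w_2)=\emptyset$, $\overline m=\underline m$) flagged above.
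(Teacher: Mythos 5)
Your proof is correct and follows essentially the same route as the paper's: both arguments fix representatives, isolate the violated column $j$ to obtain $\sum_i (p_i-q_i)u_{i,j}\geq w_1-w_2$, and then exploit $\sum_i(p_i-q_i)=0$ to replace a crude bound by the range $\overline m-\underline m$. The only cosmetic difference is that you use the positive/negative-part decomposition of $p_i-q_i$, whereas the paper centers the payoffs at $(\overline{m}+\underline{m})/2$ and applies a H\"older-type estimate with $\sum_i|p_i-q_i|=2d_1(\mathbf{p},\mathbf{q})$ --- equivalent devices yielding the same constant.
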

Observe that the quantities $\underline{m}$ and $\overline{m}$ can be simply computed from $\mathtt{U}$. The idea of the proof of Theorem~\ref{thm:TVw1w2} is standard (e.g., see \cite[eq. (4)]{Gossner2002} for a similar derivation), but is included in Section~\ref{proof:mainw1w2} for completeness.

In this paper, we propose the use of the R\'enyi divergence of order two between $\mathbf{p}$ and $\mathbf{q}$ to quantify the distance between two distributions:\footnote{In this definition, we set $p_i^2/q_i$ to be zero if $p_i=q_i=0$, and infinity if $p_i>0$ while $q_i=0$. We have that $d_2(\mathbf{p},\mathbf{q})\geq 0$, and $d_2(\mathbf{p},\mathbf{q})=0$ if and only if $\mathbf{p}=\mathbf{q}$.}
$$d_2(\mathbf{p},\mathbf{q})\triangleq \log(\sum_i\frac{p_i^2}{q_i})=\log(1+\chi^2(\mathbf{p}, \mathbf{q})).$$

Our first result gives the following bound:
\begin{theorem}\label{thm:mainw1w2}For any $w_1, w_2$ satisfying ${v} \leq w_2\leq w_1 \leq w ^*$, we have
$$d_2(\mathcal{P}^c_{\mathtt U}(w_2), \mathcal{P}_{\mathtt U}(w_1))\geq\log\left(1+\frac{( w_1 -w_2)^2}{( w_1 -\underbar{m})(\overline{m}- w_1 )}\right).$$
\end{theorem}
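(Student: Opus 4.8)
The plan is to follow the same geometric strategy as in Theorem~\ref{thm:TVw1w2}, but with the $\chi^2$-divergence in place of the total variation distance. Fix $\mathbf{p}\in\mathcal{P}_{\mathtt U}(w_1)$ and $\mathbf{q}\in\mathcal{P}^c_{\mathtt U}(w_2)$. Since $\mathbf{q}\notin\mathcal{P}_{\mathtt U}(w_2)$, there is some column $j\in\mathcal{B}$ with $\mathbf{q}^T\mathtt U\mathbf{e}_j<w_2$, while $\mathbf{p}^T\mathtt U\mathbf{e}_j\ge w_1$ because $\mathbf{p}\in\mathcal{P}_{\mathtt U}(w_1)$. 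Writing $u_i=u_{i,j}$ for that fixed $j$, we therefore have
\[
\sum_i p_i u_i - \sum_i q_i u_i > w_1 - w_2,
\]
and moreover $\underbar{m}\le u_i\le\overline{m}$ for every $i$. So it suffices to lower bound $d_2(\mathbf p,\mathbf q)=\log\bigl(1+\chi^2(\mathbf p,\mathbf q)\bigr)$ in terms of the gap $\sum_i(p_i-q_i)u_i$, using only that $\mathbf p$ also satisfies $\sum_i p_i u_i\ge w_1$.

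The key step is the following variational/Cauchy--Schwarz estimate. For any vector $\mathbf u$ with entries in $[\underbar m,\overline m]$ and any two pmfs $\mathbf p,\mathbf q$, I would bound
\[
\Bigl(\sum_i (p_i-q_i)u_i\Bigr)^2
=\Bigl(\sum_i q_i\cdot\tfrac{p_i-q_i}{q_i}\cdot u_i\Bigr)^2,
\]
but the cleaner route is to center: replace $u_i$ by $u_i-c$ for a constant $c$, which does not change $\sum_i(p_i-q_i)u_i$ since $\mathbf p,\mathbf q$ are probability vectors. Then by Cauchy--Schwarz with weights $q_i$,
\[
\Bigl(\sum_i (p_i-q_i)(u_i-c)\Bigr)^2
\le \Bigl(\sum_i \tfrac{(p_i-q_i)^2}{q_i}\Bigr)\Bigl(\sum_i q_i (u_i-c)^2\Bigr)
=\chi^2(\mathbf p,\mathbf q)\cdot\sum_i q_i(u_i-c)^2.
\]
Now the factor $\sum_i q_i(u_i-c)^2$ is a variance-type quantity that I want to bound by $(w_1-\underbar m)(\overline m-w_1)$ using the extra information $\sum_i q_i u_i<w_2\le w_1$ together with $\underbar m\le u_i\le\overline m$. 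The right choice is to take $c$ depending on $w_1$: indeed, for a random variable $U$ supported in $[\underbar m,\overline m]$ with mean $\mu$, one has $\mathbb E[(U-c)^2]=\Var(U)+(\mu-c)^2\le(\mu-\underbar m)(\overline m-\mu)+(\mu-c)^2$, and choosing $c$ so as to make the combined bound come out to $(w_1-\underbar m)(\overline m-w_1)$ requires matching the mean constraint $\mu<w_1$; one checks that $\mathbb E[(U-w_1)^2]$ or a suitable affine shift is monotone in $\mu$ over $[\underbar m,w_1]$ and hence maximized as $\mu\to w_1^-$, giving exactly $(w_1-\underbar m)(\overline m-w_1)$ as the supremum.

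Putting these together yields
\[
\chi^2(\mathbf p,\mathbf q)\ge \frac{\bigl(\sum_i(p_i-q_i)u_i\bigr)^2}{(w_1-\underbar m)(\overline m-w_1)}>\frac{(w_1-w_2)^2}{(w_1-\underbar m)(\overline m-w_1)},
\]
so that $d_2(\mathbf p,\mathbf q)=\log(1+\chi^2(\mathbf p,\mathbf q))\ge\log\bigl(1+(w_1-w_2)^2/((w_1-\underbar m)(\overline m-w_1))\bigr)$; taking the infimum over $\mathbf p\in\mathcal{P}_{\mathtt U}(w_1)$ and $\mathbf q\in\mathcal{P}^c_{\mathtt U}(w_2)$ gives the claim. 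I expect the main obstacle to be the bookkeeping in the second step: pinning down the correct centering constant $c$ and verifying rigorously that $\sup\{\mathbb E[(U-c)^2]: U\in[\underbar m,\overline m],\ \mathbb E[U]\le w_1\}=(w_1-\underbar m)(\overline m-w_1)$ — this is an elementary but slightly delicate one-dimensional optimization (the extremal $U$ is two-point, supported on $\{\underbar m,\overline m\}$, with mean pushed up to $w_1$), and one must make sure the direction of the inequality $\sum_i q_iu_i<w_2\le w_1$ is exploited correctly rather than $w_2$ appearing in the denominator. The rest is Cauchy--Schwarz and monotonicity of $\log(1+x)$.
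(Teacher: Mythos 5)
Your overall plan (pick the column $j$ witnessing $\mathbf q\notin\mathcal P_{\mathtt U}(w_2)$, apply Cauchy--Schwarz to split the payoff gap into a $\chi^2$ factor and a variance-type factor, then solve a one-dimensional moment problem for bounded random variables) is the same strategy the paper uses, but the way you have oriented the divergence breaks the argument at exactly the step you flagged as delicate. Two intertwined problems. First, $d_2(\mathbf p,\mathbf q)=\log\sum_i p_i^2/q_i$ is asymmetric, and the theorem bounds $d_2(\mathcal P^c_{\mathtt U}(w_2),\mathcal P_{\mathtt U}(w_1))$, so in your labelling the pair to control is $d_2(\mathbf q,\mathbf p)$, i.e.\ the $\chi^2$ whose denominator weights come from $\mathbf p\in\mathcal P_{\mathtt U}(w_1)$ (this orientation is also what the downstream bound $G^{(2)}_{\mathtt U}$ needs, via $\log_2(1/p_i)=d_2(\mathbf e_i,\mathbf p)$); you instead bound $\chi^2(\mathbf p,\mathbf q)$, whose denominator weights come from the distribution in $\mathcal P^c_{\mathtt U}(w_2)$. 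Second, with your orientation the Cauchy--Schwarz denominator is $\sum_i q_i(u_i-c)^2$, a second moment under a distribution about which you only know an \emph{upper} bound on its mean ($\sum_i q_iu_i<w_2\le w_1$), and the optimization fact you invoke, $\sup\{\mathbb E[(U-c)^2]:U\in[\underline m,\overline m],\ \mathbb E[U]\le w_1\}=(w_1-\underline m)(\overline m-w_1)$ for a suitable constant $c$, is false: take $\underline m=0$, $\overline m=1$, $w_1=0.9$; the point mass at $0$ forces $c\le 0.3$, but then the two-point law on $\{0,1\}$ with mean $0.9$ gives $\mathbb E[(U-c)^2]\ge 0.9(1-c)^2\ge 0.441>0.09$. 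An upper bound on the mean does not keep mass away from $\underline m$, so no centering constant works once $w_1$ exceeds $(\underline m+\overline m)/2$. If you carry your orientation out consistently (center at $c=\mathbb E_q[U]$, use $\mathsf{Var}_q[U]\le(\mu-\underline m)(\overline m-\mu)$, optimize over $\mu\le w_2$), you obtain at best $\log\bigl(1+(w_1-w_2)^2/((w_2-\underline m)(\overline m-w_2))\bigr)$ --- precisely the ``$w_2$ in the denominator'' outcome you were worried about --- not the stated bound.

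The fix is the single swap that turns your sketch into the paper's proof: bound $\chi^2(\mathbf q,\mathbf p)$ with denominator $\mathbf p\in\mathcal P_{\mathtt U}(w_1)$. Cauchy--Schwarz with weights $p_i$ and centering $c=\sum_i p_iu_i$ gives $\chi^2(\mathbf q,\mathbf p)\ge(\mathbb E_p[U]-\mathbb E_q[U])^2/\mathsf{Var}_p[U]$, which is the Chapman--Robbins bound (the paper's Lemma~\ref{lemmaVar}). Now the distribution carrying the variance has its mean bounded \emph{below}: $\mu=\mathbb E_p[U]\ge w_1$, so $\mathsf{Var}_p[U]\le(\mu-\underline m)(\overline m-\mu)$, and since $\mathbb E_q[U]<w_2$ the ratio is at least $(\mu-w_2)^2/((\mu-\underline m)(\overline m-\mu))$, which is increasing in $\mu$ for $\mu\ge w_2$ (the paper's Lemma~\ref{lemmaVar2}); evaluating at $\mu=w_1$ yields exactly $(w_1-w_2)^2/((w_1-\underline m)(\overline m-w_1))$ and hence the theorem.
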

The proof can be found in Section~\ref{proof:mainw1w2}.
The above result is derived by a probabilistic approach, which we believe is novel in the context of linear programming.

Our second result is less crucial, but still useful. It gives a compact formula for the supporting hyperplanes of $\mathcal{P}_{\mathtt U}(w)$ in terms of the Nash equilibrium of a game. We need a definition:

\begin{definition}
$\Val(\mathtt{U})$ denotes the value of a two-player zero-sum game with payoff table $\mathtt{U}=[u_{i,j}]$.
\end{definition}
$\Val(\mathtt{U})$ is the maximum value that Alice can guarantee using arbitrary mixed strategies. It is known that in a zero-sum game, while the game might have multiple Nash equilibria, the value of Alice in all of the equilibria is the same (See e.g., \cite{Narahari}, p.145).

Given values $a_1, a_2, \ldots, a_n\in \mathbb{R}$, let us construct a new table whose $(i,j)$ entry is $\tilde{u}_{ij}=u_{i,j}+a_i$. In other words, we add $a_i$ to the entries in the $i$-th row of $\mathtt{U}$. The new table can be expressed as
$\tilde{\mathtt{U}}=\mathtt{U}+\mathbf{a}\cdot\mathbf{1}^T$, where $\mathbf{a}$ is a column vector whose entries are $a_1, a_2, \ldots, a_n$ and $\mathbf{1}^T$ is a row vector of all ones.
Observe that the table $\tilde{\mathtt{U}}$ can be intuitively understood as giving an additional incentive $a_i$ to Alice for playing her $i$-th action (it is actually a disincentive or ``tax" if $a_i<0$).
\begin{theorem}\label{theorem-sec4-1}The set $\mathcal{P}_{\mathtt U}(w)$ can be characterized as follows:
$$\mathcal{P}_{\mathtt U}(w)=\left\{\mathbf{p}\in\Delta(\mathcal{A})\Big\vert \sum_{i}a_ip_i\leq \Val(\mathtt U+\mathbf{a}\mathbf{1}^T)-w, \quad\forall \mathbf{a}\right\}.$$
\end{theorem}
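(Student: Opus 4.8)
The plan is to prove the two set inclusions separately, using only the defining formula $\Val(\mathtt U+\mathbf a\mathbf 1^T)=\max_{\mathbf q\in\Delta(\mathcal A)}\min_{j\in\mathcal B}\sum_i q_i(u_{i,j}+a_i)$ for the value of a zero-sum game in which Alice is the maximizer; here the inner minimum over Bob's mixed strategies is attained at a pure action, so minimizing over $j\in\mathcal B$ is legitimate, and the minimax theorem guarantees the value is well defined for every incentive vector $\mathbf a$.

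\emph{Inclusion $\mathcal{P}_{\mathtt U}(w)\subseteq$ (right-hand side).} Fix $\mathbf p\in\mathcal{P}_{\mathtt U}(w)$ and an arbitrary $\mathbf a$. Evaluating the maximum defining $\Val(\mathtt U+\mathbf a\mathbf 1^T)$ at the particular strategy $\mathbf q=\mathbf p$ gives
\[
\Val(\mathtt U+\mathbf a\mathbf 1^T)\ \ge\ \min_j\sum_i p_i(u_{i,j}+a_i)\ =\ \sum_i a_i p_i+\min_j\sum_i p_i u_{i,j}\ \ge\ \sum_i a_i p_i+w ,
\]
where the last inequality is $\mathbf p^T\mathtt U\ge w\mathbf 1^T$. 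Rearranging gives $\sum_i a_i p_i\le\Val(\mathtt U+\mathbf a\mathbf 1^T)-w$, which is exactly the condition in the right-hand side, and $\mathbf a$ was arbitrary.

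\emph{The reverse inclusion, by contraposition.} Suppose $\mathbf p\in\Delta(\mathcal A)$ but $\mathbf p\notin\mathcal{P}_{\mathtt U}(w)$; then some column $j^*$ satisfies $\sum_i p_i u_{i,j^*}<w$. Choose $\mathbf a$ with $a_i=-u_{i,j^*}$ for all $i$, so that the $j^*$-th column of $\mathtt U+\mathbf a\mathbf 1^T$ is identically zero. Then for every $\mathbf q\in\Delta(\mathcal A)$ we have $\min_j\sum_i q_i(u_{i,j}+a_i)\le\sum_i q_i(u_{i,j^*}+a_i)=0$, hence $\Val(\mathtt U+\mathbf a\mathbf 1^T)\le 0$; meanwhile $\sum_i a_i p_i=-\sum_i p_i u_{i,j^*}>-w\ge\Val(\mathtt U+\mathbf a\mathbf 1^T)-w$. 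So $\mathbf p$ fails the defining inequality of the right-hand side for this $\mathbf a$, which means every point of the right-hand side lies in $\mathcal{P}_{\mathtt U}(w)$.

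There is no real obstacle; the only things to watch are the $\max$--$\min$ conventions (Alice maximizes, and the inner minimum can be taken over Bob's pure actions) and the choice $\mathbf a=-(\text{column }j^*)$, which converts the abstract ``for all $\mathbf a$'' condition into a concrete witness. One may also read the identity conceptually: $\mathbf a\mapsto\Val(\mathtt U+\mathbf a\mathbf 1^T)=\max_{\mathbf q\in\Delta(\mathcal A)}\big(\sum_i a_i q_i+\min_j(\mathbf q^T\mathtt U)_j\big)$ is the concave conjugate of the (concave, closed) security-level function $\mathbf q\mapsto\min_j(\mathbf q^T\mathtt U)_j$ on $\Delta(\mathcal A)$, and ``$\sum_i a_i p_i\le\Val(\mathtt U+\mathbf a\mathbf 1^T)-w$ for all $\mathbf a$'' is precisely ``$w\le$ the biconjugate at $\mathbf p$'' $=$ ``$w\le\min_j(\mathbf p^T\mathtt U)_j$'' by Fenchel--Moreau; I would keep the elementary two-inclusion argument as the proof and mention this duality only as a remark.
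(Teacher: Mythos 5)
Your proof is correct, but it takes a different route from the paper. The paper derives this theorem as the second part of Theorem~\ref{theorem-sec4-2}: it first shows that $L(\mathbf a)=\Val(\mathtt U+\mathbf a\mathbf 1^T)$ is the convex conjugate of the extended security-level function $K(\mathbf p)=\min_j\sum_i p_i u_{i,j}$ (set to $-\infty$ off the simplex), and then invokes Fenchel's duality theorem to turn ``$\sum_i a_i p_i\leq L(\mathbf a)-w$ for all $\mathbf a$'' into $K(\mathbf p)\geq w$, i.e.\ $\mathbf p\in\mathcal P_{\mathtt U}(w)$ --- exactly the biconjugation reading you relegate to a closing remark. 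You instead argue both inclusions elementarily: the forward inclusion is the evaluation of the maxmin formula for $\Val(\mathtt U+\mathbf a\mathbf 1^T)$ at $\mathbf q=\mathbf p$ (weak duality, matching the easy half of the paper's argument), and for the reverse inclusion you replace the appeal to Fenchel--Moreau by the explicit witness $a_i=-u_{i,j^*}$ built from a violated column $j^*$, which zeroes out that column, forces $\Val(\mathtt U+\mathbf a\mathbf 1^T)\leq 0$, and makes the defining inequality fail; the arithmetic checks out. Your version is shorter, self-contained, and constructive in the hard direction, needing only the minimax value formula; the paper's conjugate-duality framing costs more machinery but buys the surrounding geometric picture of Theorem~\ref{theorem-sec4-2}, in particular the identification of Alice's Nash strategies with subgradients (supporting hyperplanes) of $\mathbf a\mapsto L(\mathbf a)$, which your ad hoc witness does not provide.
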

\begin{remark}
Note that $\max_{\mathbf{p}\in\mathcal{P}_{\mathtt U}(w)}\left(\sum_{i}a_ip_i\right)$ is simply a linear program. The Equivalence of linear program and Nash equilibria is known in the literature (\cite{Dantzig51,Adler}). However, our construction of the game $\tilde{\mathtt{U}}$ based on incentive or tax is new to best of our knowledge. \end{remark}
In Section~\ref{proof-theorem-sec4-1}, we give the proof of Theorem~\ref{theorem-sec4-1} as well as a geometric picture of the Nash equilibrium strategies of Alice.

\subsubsection{On the min-entropy function}\label{S:results}

We begin with a property of the min-entropy function. To state the property, we need the following definition:

\begin{definition}
Consider two games with payoff matrixes $\mathtt{U}_1$ and $\mathtt{U}_2$. Let $\mathtt{U}_3=\mathtt{U}_1\oplus \mathtt{U}_2$ be the direct-sum of $\mathtt{U}_1$ and $\mathtt{U}_2$. $\mathtt{U}_3$ defines a new game in which players simultaneously play one instance of $\mathtt{U}_1$ and one instance of $\mathtt{U}_2$ and the resulting payoff is the sum of payoffs from $\mathtt{U}_1$ and $\mathtt{U}_2$.
\end{definition}

\begin{theorem}\label{T:direct_sum} For every game $\mathtt{U}$, $F_{\mathtt U\oplus \mathtt U}( w )=F_{\mathtt U}( w/2)$. Similarly, for every natural number $k$, $F_{\oplus^k \mathtt U}( w )=F_{\mathtt U}( w/k)$, where $\oplus^k\mathtt U$ is $k$ times direct sum of $\mathtt{U}$.
\end{theorem}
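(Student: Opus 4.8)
The plan is to prove the general identity $F_{\oplus^k\mathtt U}(w)=F_{\mathtt U}(w/k)$ directly by a matching upper and lower bound (the case $k=2$ being a special case), using the $k$-fold \emph{fully correlated} (diagonal) coupling of Alice's mixed action rather than the product coupling. Throughout, I regard $\oplus^k\mathtt U$ as the game with Alice's action set $\mathcal A^k$, Bob's action set $\mathcal B^k$, and payoff $\sum_{\ell=1}^k u_{i_\ell,j_\ell}$ for the profiles $(i_1,\dots,i_k)$ and $(j_1,\dots,j_k)$, so that $\mathcal P_{\oplus^k\mathtt U}(w)$ and $F_{\oplus^k\mathtt U}(w)$ are defined exactly as in \eqref{eqn:PUdef} and \eqref{E:min_entropy2}.

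For the upper bound $F_{\oplus^k\mathtt U}(w)\le F_{\mathtt U}(w/k)$, I would first dispose of the trivial case $\mathcal P_{\mathtt U}(w/k)=\emptyset$, and otherwise pick $\mathbf p^*\in\mathcal P_{\mathtt U}(w/k)$ attaining $H(\mathbf p^*)=F_{\mathtt U}(w/k)$ (the minimum is attained because $\mathcal P_{\mathtt U}(w/k)$ is a compact polytope and $H$ is continuous). Let Alice draw a single action $I\sim\mathbf p^*$ and play the profile $(I,I,\dots,I)\in\mathcal A^k$. The induced distribution $\mathbf q$ on $\mathcal A^k$ is supported on the diagonal with $q(i,\dots,i)=p^*_i$, so $H(\mathbf q)=H(\mathbf p^*)=F_{\mathtt U}(w/k)$; and against any pure Bob profile $(j_1,\dots,j_k)$ the expected payoff is $\sum_{\ell=1}^k\sum_i p^*_i u_{i,j_\ell}\ge k\cdot(w/k)=w$, so $\mathbf q\in\mathcal P_{\oplus^k\mathtt U}(w)$, which gives the bound.

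For the lower bound $F_{\oplus^k\mathtt U}(w)\ge F_{\mathtt U}(w/k)$, take any $\mathbf q\in\mathcal P_{\oplus^k\mathtt U}(w)$ with one-dimensional marginals $\mathbf q_1,\dots,\mathbf q_k$ on $\mathcal A$, and let $w_\ell=\min_{j\in\mathcal B}\sum_i q_{\ell,i}u_{i,j}$ be the security level of $\mathbf q_\ell$ in $\mathtt U$. Evaluating the payoff of $\mathbf q$ against the Bob profile whose $\ell$-th coordinate is a minimizer for $\mathbf q_\ell$ gives $\sum_{\ell=1}^k w_\ell\ge w$, hence $w_{\ell_0}\ge w/k$ for some $\ell_0$. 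Then $\mathbf q_{\ell_0}\in\mathcal P_{\mathtt U}(w_{\ell_0})\subseteq\mathcal P_{\mathtt U}(w/k)$ by \eqref{eqn:P1P2com}, so $H(\mathbf q_{\ell_0})\ge F_{\mathtt U}(w/k)$, and the entropy property $H(Q_1,\dots,Q_k)\ge H(Q_{\ell_0})$ yields $H(\mathbf q)\ge F_{\mathtt U}(w/k)$; taking the infimum over $\mathbf q$ finishes the argument. The same accounting covers the $+\infty$ case: if $w>kw^*$ then $\sum_\ell w_\ell\ge w$ would force some $w_{\ell_0}>w^*$, contradicting the definition of $w^*$, so $\mathcal P_{\oplus^k\mathtt U}(w)=\emptyset$ and both sides equal $+\infty$.

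There is no real obstacle here; the one point to get right is the choice of coupling. The natural guess, the product distribution $(\mathbf p^*)^{\otimes k}$, is also feasible but has entropy $kH(\mathbf p^*)$ and would only prove $F_{\oplus^k\mathtt U}(w)\le kF_{\mathtt U}(w/k)$; it is the diagonal coupling that attains the value, and the marginal-based lower bound certifies its optimality.
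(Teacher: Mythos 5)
Your proposal is correct and follows essentially the same route as the paper's proof: the diagonal (fully correlated) coupling of a minimum-entropy strategy for $\mathtt U$ gives the upper bound $F_{\oplus^k\mathtt U}(w)\le F_{\mathtt U}(w/k)$, and the marginal argument (some one-dimensional marginal of a $w$-securing distribution must secure $w/k$, with $H$ of the joint dominating $H$ of that marginal) gives the matching lower bound. Your treatment of the empty/$+\infty$ case and the direct $k$-fold statement are minor tidy-ups of the same argument, not a different approach.
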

Proof of Theorem~\ref{T:direct_sum} is provided in Section~\ref{S:direct_sum}.

An application of the above theorem is that given an expression $G_{\mathtt U}(w)$ that bounds $F_{\mathtt U}(w)$ from below for all $w$ and $\mathtt{U}$, we can conclude that $G_{\oplus^k \mathtt U}( kw )\leq F_{\oplus^k \mathtt U}(kw)=F_{\mathtt U}( w )$. Thus, $\max_kG_{\oplus^k \mathtt U}( kw )$ is also a (potentially better) lower bound to $F_{\mathtt U}(w)$. As a result, we expect that a ``good" lower (or upper bound) on $F_{\mathtt U}(w)$ should have the correct scaling behavior as we simultaneously play more and more copies of the game.

\subsection{Lower and upper bounds on the min-entropy function}
The min-entropy function $F_{\mathtt U}(w)$ in \eqref{E:min_entropy2} is the minimum of a concave function on a polytope $\mathcal{P}_{\mathtt U}(w)$. This minimum occurs at a vertex of $\mathcal{P}_{\mathtt U}(w)$. This leads to a search in the exponentially large set of vertexes of the polytope $\mathcal{P}_{\mathtt U}(w)$, which is computationally hard. We desire to find bounds on $F_{\mathtt U}(w)$ that are either explicit, or else can be computed in polynomial time. Observe that
\begin{equation}\label{E:min_entropy4}
F_{\mathtt U}(w) = \underset{\mathbf{p}\in \mathcal{P}_{\mathtt U}( w )}{\min} H(\mathbf{p}) = \log(|\mathcal A|)-
\underset{\mathbf{p}\in \mathcal{P}_{\mathtt U}( w )}{\max} D(\mathbf{p}\|\mathbf{p}^{\mathsf u}),
\end{equation}
where $\mathbf{p}^{\mathsf u}$ is the uniform distribution over $\mathcal A$, and $D(\mathbf{p}\|\mathbf{q})=\sum_i p_i\log({p_i}/{q_i})$ is the Kullback--Leibler (KL) divergence. Thus, we are interested in finding the vertex of $\mathcal{P}_{\mathtt U}(w)$ which has maximum distance from the uniform distribution (with respect to KL divergence).

\textbf{Lower bound:} To prove lower bounds for
\begin{equation}
F_{\mathtt U}(w) = \underset{\mathbf{p}\in \mathcal{P}_{\mathtt U}( w )}{\min} H(\mathbf{p}),\label{eqn:Fw-eq}
\end{equation}
one idea is to replace the entropy function with a smaller function and compute the minimum over $\mathcal{P}_{\mathtt U}( w )$. The second idea is to minimize the min-entropy function $F_{\mathtt U}(w)$ over all payoff tables with given properties such as $\underline{m}$, $\overline{m}$ and $v$. Another idea is to relax the set of distributions $\mathcal{P}_{\mathtt U}( w )$ and replace it with a potentially bigger set. We proceed with the first idea, then elaborate on the second idea and finally comment on the third idea.

Using the fact that the R\'enyi entropy is decreasing in its order, we obtain that for any $\alpha>1$,
\begin{equation}\label{E:min_entropy3}
F_{\mathtt U}(w) = \underset{\mathbf{p}\in \mathcal{P}_{\mathtt U}( w )}{\min} H(\mathbf{p}) \geq \underset{\mathbf{p}\in \mathcal{P}_{\mathtt U}( w )}{\min} H_\alpha(\mathbf{p}),
\end{equation}
where $H_\alpha(\mathbf{p})$ is the R\'enyi entropy of order $\alpha$:
$$H_\alpha(\mathbf{p}) = \frac{1}{1-\alpha}\log_2\Bigg(\sum_i p_i^\alpha\Bigg).$$
The case of $\alpha=2$ is related to the Euclidean norm and results in an optimization problem similar to the one given in \eqref{E:min_entropy4} for the Euclidean norm instead of the KL divergence, which is still not tractable. However, the case $\alpha=\infty$ relates to the maximum norm and results in the following lower bound:
$$G_{\mathtt U}^{(1)}(w)=-\log_2\left(\max_{i\in\mathcal A}\underset{\mathbf{p}\in \mathcal{P}_{\mathtt U}( w )}{\max} p_i\right).$$
For each $i$, the problem of finding the maximum of $p_i$ over $\mathbf{p}\in \mathcal{P}_{\mathtt U}( w )$ is a linear program.
From Theorem~\ref{theorem-sec4-1}, we can find an upper bound on the value of this linear program, yielding
\begin{align}G_{\mathtt U}^{(1)}(w)\geq -\log_2\max_{i}
\left(\Val(\mathtt U+\mathbf{e}_i\mathbf{1}^T)-w
\right),\label{eqn:relaxG}\end{align}
where $\mathbf{e}_i$ is a vector of length $|\mathcal{A}|=n$ whose $i$-th coordinate is one, and all its other coordinates are zero. The lower bound $G_{\mathtt U}^{(1)}(w)$ or its relaxed version in \eqref{eqn:relaxG} can be found in polynomial time, even though they are not in explicit forms.

To obtain an explicit lower bound, observe that
$\log_2(1/p_i)=d_2(\mathbf{e}_i,\mathbf{p}).$ Note that the vector $\mathbf{e}_i$ is a probability vector associated to a \emph{deterministic} random variable that chooses $i$ with probability one. Take some $\epsilon>0$. By definition, deterministic strategies cannot secure a payoff of more than $v$. Hence, $\mathbf{e}_i\in\mathcal{P}^c_{\mathtt{U}}(v+\epsilon)$, so we have
\begin{align}G_{\mathtt U}^{(1)}(w)&=-\log_2\left(\underset{\mathbf{p}\in \mathcal{P}_{\mathtt U}( w )}{\max} \max_{i\in\mathcal A}p_i\right)\nonumber\\
&=\underset{\mathbf{p}\in \mathcal{P}_{\mathtt U}( w )}{\min} \min_{i\in\mathcal A}\log_2(\frac{1}{p_i})\nonumber\\
&=\underset{\mathbf{p}\in \mathcal{P}_{\mathtt U}( w )}{\min} \min_{i\in\mathcal A}d_2(\mathbf{e}_i,\mathbf{p})\nonumber
\\&\geq \underset{\mathbf{p}\in \mathcal{P}_{\mathtt U}( w )}{\min} \min_{\mathbf q\in\mathcal \mathcal{P}^c_{\mathtt{U}}(v+\epsilon)}d_2(\mathbf{q},\mathbf{p})\nonumber
\\&\geq\log_2\left(1+\frac{( w -v-\epsilon)^2}{( w -\underline{m})(\overline{m}- w )}\right)\label{eqn:def1},
\end{align}
where \eqref{eqn:def1} follows from Theorem~\ref{thm:mainw1w2}. Letting $\epsilon\rightarrow 0$, we obtain
$$F_{\mathtt U}(w)\geq G_{\mathtt U}^{(2)}(w)\triangleq\log_2\left(1+\frac{( w -{v})^2}{( w -\underline{m})(\overline{m}- w )}\right) \quad \forall w : {v} \leq w \leq w ^*.$$

With a similar argument and using Theorem~\ref{thm:TVw1w2} along with the fact that $p_i=1-d_1(\mathbf{e}_i,\mathbf{p})$, we obtain the following lower bound:
$$F_{\mathtt U}(w)\geq G_{\mathtt U}^{(3)}(w)\triangleq-\log_2\left(1-\frac{w-v}{\overline{m}-\underline{m}}\right) \quad \forall w : {v} \leq w \leq w ^*.$$

Observe that when $v=\underline{m}$, $G^{(2)}_{\mathtt U}(w)$ equals $G^{(3)}_{\mathtt U}(w)$. When $v\neq \underline{m}$, a simple calculation shows that $G^{(2)}_{\mathtt U}(w)\geq G^{(3)}_{\mathtt U}(w)$ if and only if $w\geq (\overline{m}+v)/2$. \begin{example}\label{ex:g2_g3}
Consider a game with payoff matrix:
$$\mathtt U=\left[\begin{matrix}-1&1&1\\1&0.5&1\\1&1&0.5\end{matrix}\right].$$
From the payoff matrix we have $v=.5$, $\underline{m}=-1$, $\overline{m}=1$ and $w^*=0.778$. Therefore, for $w\geq 0.75$, $G^{(2)}_{\mathtt U}(w)$ gives a better lower bound than $G^{(3)}_{\mathtt U}(w)$ on $F_{\mathtt U}(w)$.
\end{example}

\begin{remark} One can inspect that just like the min-entropy function, the explicit lower bounds $G_{\mathtt U}^{(2)}(w)$ and $G_{\mathtt U}^{(3)}(w)$ satisfy
$$G_{\mathtt U\oplus \mathtt U}^{(i)}( w )=G_{\mathtt U}^{(i)}( w/2),~~~i=2,3,$$
thus, have the correct scaling behavior. Additionally, by replacing entropy with R\'enyi entropy in the proof of Theorem~\ref{T:direct_sum}, one also obtains that
$$G_{\mathtt U\oplus \mathtt U}^{(1)}( w )=G_{\mathtt U}^{(1)}( w/2).$$

The function $G_{\mathtt U}^{(2)}(w)$ has second derivative for all $ {v} \leq w \leq w ^*$, while the second derivative of the piecewise concave function $F_{\mathtt U}(w)$ is defined everywhere except for a finite number of kink points. The second derivative of the function $G_{\mathtt U}^{(2)}(w)$ may be positive or negative, while $F_{\mathtt U}(w)$ is piecewise concave. On the other hand, the function $G^{(1)}_{\mathtt U}(w)$ is piecewise convex. The reason is that if $\max_{i\in\mathcal A} \max_{\mathbf{p}\in \mathcal{P}_{\mathtt U}( w )}p_i$ is attained by $i^*$ and a particular vertex of $\mathcal{P}_{\mathtt U}( w )$ for $w\in [w_1,w_2]$, in this interval $\max_{i\in\mathcal A} \max_{\mathbf{p}\in \mathcal{P}_{\mathtt U}( w )}p_{i}$ varies linearly in $w$. Then, convexity of $-\log_2(\cdot)$ results in convexity of $G^{(1)}_{\mathtt U}(w)$ in the interval $[w_1,w_2]$.
\end{remark}
\begin{remark}
Inequality \eqref{eqn:def1} shows that
\begin{equation}\label{LP_bound}
\underset{\mathbf{p}\in \mathcal{P}_{\mathtt U}( w )}{\max} p_i \leq \left(1+\frac{( w -{v})^2}{( w -\underline{m})(\overline{m}- w )}\right)^{-1} \quad \forall w : {v} \leq w \leq w ^* \text{and } \forall i=1,\dots,n,
\end{equation}
which gives an upper bound for the linear programming of $\max_{\mathbf{p}\in \mathcal{P}_{\mathtt U}( w )}p_i$. Note that $\mathcal{P}_{\mathtt U}( w )$ is a very generic polytope, parameterized by a variable $w$. By a change of variables (scaling and shifting), one can convert $\max_{\mathbf{p}\in \mathcal{P}_{\mathtt U}( w )}p_i$ to a wide class of linear programs (with no immediate connection to the probability simplex), and then use the bound given in \eqref{LP_bound}.
\end{remark}

Let $\underline m$, $\overline m$ and $v$ be the minimum entry, maximum entry and pure strategy security level of the payoff table $\mathtt U$. The lower bounds $G_{\mathtt U}^{(2)}(w)$ and $G_{\mathtt U}^{(3)}(w)$, just rely on the parameters $\underline{m}$, $\overline{m}$ and $v$. We seek to answer the following question: given that we just know the parameters $\underline{m}$, $\overline{m}$ and $v$ from payoff table $\mathtt U$, what is the tightest lower bound for the min-entropy function? To compute the tightest lower bound it suffices to minimize the min-entropy function over all payoff tables (of arbitrary size) with parameters $\underline m$, $\overline m$ and $v$. Let denote this lower bound by $G_{\mathtt U}^{(4)}(w)$, then,
\begin{equation}\label{eq:second_idea1}
G_{\mathtt U}^{(4)}(w)= \min_{\mathtt U'} \min_{\mathbf{p}\in \mathcal P_{\mathtt U'}(w)} H(\mathbf p),
\end{equation}
where the first minimization is computed over all payoff tables $\mathtt U'$ (of arbitrary size) with minimum entry $\underline m$, maximum entry $\overline m$ and pure-strategy security level $v$.
We simplify the expression of Equation \eqref{eq:second_idea1} in the following theorem:
\begin{theorem}\label{theorem:second_idea}
Let $\underline m$, $\overline m$,$v$ and $w$ be some known real numbers such that $\underline m\leq v \leq w\leq \overline m$. We have
\begin{align*}
G_{\mathtt U}^{(4)}(w)=\min_{\mathtt U'} \min_{\mathbf{p}\in \mathcal P_{\mathtt U'}(w)} H(\mathbf p) = -\left\lfloor\frac{\overline m-v}{\overline m-w}\right\rfloor &\frac{\overline m-w}{\overline m-v}\log \left(\frac{\overline m-w}{\overline m-v}\right)- \\
& \left(1-\left\lfloor\frac{\overline m-v}{\overline m-w}\right\rfloor \frac{\overline m-w}{\overline m-v}\right)\log\left(1-\left\lfloor\frac{\overline m-v}{\overline m-w}\right\rfloor \frac{\overline m-w}{\overline m-v}\right),
\end{align*}
where the first minimization is computed over all payoff tables $\mathtt U'$ (of arbitrary size) with minimum entry $\underline m$, maximum entry $\overline m$ and pure-strategy security level $v$, and $\lfloor a \rfloor$ is the greatest integer smaller than or equal to $a$.
\end{theorem}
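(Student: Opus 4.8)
Write $h$ for the claimed closed form, and set $\beta=\frac{\overline m-w}{\overline m-v}$ and $N=\lfloor\frac{\overline m-v}{\overline m-w}\rfloor=\lfloor 1/\beta\rfloor$, so that $h=-N\beta\log\beta-(1-N\beta)\log(1-N\beta)$ is exactly the entropy of the probability vector having $N$ coordinates equal to $\beta$ and one coordinate equal to $1-N\beta$. I would prove the theorem in two halves: (i) $H(\mathbf p)\ge h$ for every payoff table with all entries in $[\underline m,\overline m]$ and pure-strategy security level $v$ and every $\mathbf p\in\mathcal P_{\mathtt{U}'}(w)$; and (ii) a single such table carrying a $\mathbf p$ with $H(\mathbf p)=h$. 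Together these force $G_{\mathtt{U}}^{(4)}(w)=h$. Throughout I treat the non-degenerate range $\underline m<\overline m$, $v\le w<\overline m$; the remaining boundary cases make both sides equal $0$ or $+\infty$ and are checked directly.

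\textbf{Lower bound.} Fix $\mathtt{U}'$ and $\mathbf p\in\mathcal P_{\mathtt{U}'}(w)$. For each row $i$ in the support of $\mathbf p$ choose a column $j(i)$ at which row $i$ attains its minimum payoff; since the pure-strategy security level is $v$, the entry $u'_{i,j(i)}$ is at most $v$. Partition the support of $\mathbf p$ according to the value of $j(i)$, and let $P_j$ be the total $\mathbf p$-mass of rows sent to column $j$, and $J$ the set of columns that occur. In column $j\in J$ the rows of its block contribute at most $v$ each and all remaining rows at most $\overline m$, so the $j$-th entry of $\mathbf p^T\mathtt{U}'\ge w\mathbf{1}^T$ reads $w\le vP_j+\overline m(1-P_j)$, i.e.\ $P_j\le\beta$. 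As the blocks partition the support, $\sum_{j\in J}P_j=1$, whence $|J|\ge 1/\beta$, hence $|J|\ge\lceil 1/\beta\rceil$. Because $\mathbf p$ refines this partition, the chain rule gives $H(\mathbf p)\ge H\big((P_j)_{j\in J}\big)$. It then remains to prove the elementary lemma that the minimum Shannon entropy among probability distributions whose point masses are all at most $\beta$ equals $h$: entropy is concave, so its minimum over a polytope $\{\mathbf q\ge 0:\ \sum_i q_i=1,\ q_i\le\beta\}$ is attained at a vertex, and an inspection of the active constraints shows that every vertex has, up to relabeling, exactly $N$ coordinates equal to $\beta$, one coordinate equal to $1-N\beta$, and the rest $0$, hence entropy $h$.

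\textbf{Achievability.} I would take a ``generalized matching pennies'' table: $N+1$ main rows and $N+1$ columns with $u'_{i,i}=v$ on the diagonal and $u'_{i,j}=\overline m$ off it, together with one further row all of whose entries equal $\underline m$. Each main row has worst-case payoff $v$ and the extra row has worst-case payoff $\underline m\le v$, so the pure-strategy security level is exactly $v$; since $N+1\ge 2$, the largest entry is $\overline m$ and the smallest is $\underline m$. Let $\mathbf p$ assign $0$ to the extra row, $1-N\beta$ to row $N+1$, and $\beta$ to each of rows $1,\dots,N$. In column $j$ the payoff equals $\overline m-p_j(\overline m-v)\ge \overline m-\beta(\overline m-v)=w$, so $\mathbf p\in\mathcal P_{\mathtt{U}'}(w)$ while $H(\mathbf p)=h$. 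With the lower bound this yields $F_{\mathtt{U}'}(w)=h$ for this table, and therefore $G_{\mathtt{U}}^{(4)}(w)=\min_{\mathtt{U}'}F_{\mathtt{U}'}(w)=h$.

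\textbf{Main obstacle.} The step requiring real care is the entropy-minimization lemma used inside the lower bound: one must establish cleanly that the minimum of a concave function over the $\beta$-capped probability simplex occurs at a vertex and that \emph{every} such vertex has the $(\beta,\dots,\beta,1-N\beta)$ profile, in particular excluding more ``spread-out'' vertices when the number $|J|$ of occupied columns exceeds $N+1$. The remaining ingredients---the grouping bookkeeping above and the explicit construction---are routine, the only delicacy being the direct verification of the degenerate boundary cases against the closed form.
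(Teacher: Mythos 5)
Your proposal is correct, and its upper-bound half is essentially the paper's: your ``generalized matching pennies'' table with an extra all-$\underline m$ row is, up to the cosmetic difference in that extra row, exactly the canonical table $\mathtt U^*$ of size $\lfloor(\overline m-v)/(\overline m-w)\rfloor+2$ used in the paper, with the same optimal vector $(\beta,\dots,\beta,1-N\beta)$. Where you genuinely diverge is the lower bound. The paper argues by table surgery: it dominates an arbitrary $\mathtt U'$ entrywise by a table whose rows each have a single sub-$v$ entry, then merges identical rows (which only lowers entropy) and deletes all-$\overline m$ columns to land on the canonical table, and finally identifies the minimizer there by majorization/Schur-concavity of entropy. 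You instead work directly on the arbitrary table: picking for each supported row a column witnessing its security value $\le v$, grouping the support by that column, reading off $P_j\le\beta=(\overline m-w)/(\overline m-v)$ from the column-$j$ security constraint, and using $H(\mathbf p)\ge H((P_j)_j)$; the problem then reduces to minimizing entropy over the $\beta$-capped simplex, which you settle by vertex analysis (concave objective attains its minimum at a vertex, and every vertex of $\{\mathbf q\ge 0,\ \sum_i q_i=1,\ q_i\le\beta\}$ has at most one coordinate strictly between $0$ and $\beta$, hence is a permutation of $(\beta,\dots,\beta,1-N\beta,0,\dots,0)$) rather than by the paper's majorization argument. The step you flag as the main obstacle is in fact fine: the perturbation argument shows two coordinates strictly inside $(0,\beta)$ cannot both occur at an extreme point, which rules out the ``spread-out'' vertices you worried about, so all vertices have entropy equal to the claimed $h$; alternatively you could simply cite Schur-concavity as the paper does. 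Your route buys a cleaner, fully rigorous lower bound that sidesteps the paper's informal row-merging/column-deleting reductions (and its side case of the $m\times(m-1)$ table), at the cost of having to state the capped-simplex entropy lemma explicitly; both treatments leave the degenerate boundaries ($v=\overline m$, or $v<w=\overline m$ where both sides are $0$ or $+\infty$) to a direct check, as you do.
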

 The proof of the above theorem is provided in Section~\ref{sec:second_idea}.
Theorem~\ref{theorem:second_idea} gives another explicit form lower bound for the min-entropy function which is optimal in the sense that it is the tightest lower bound that utilizes just the information of minimum entry, maximum entry and pure-strategy security level of the payoff table.
\color{black}
\begin{remark}
The bound $G_{\mathtt U}^{(4)}(w)$ does not depend on the minimum entry $\underline m$. This is because in the minimization problem of Equation~\eqref{eq:second_idea1}, we do not restrict the dimension of the payoff table. The key observation is as follows: given a table $\mathtt U$ with maximum entry $\overline m$ and pure-strategy security level $v$, construct another payoff table $\mathtt U'$ as follows: first, replace with $v$ the entries of $\mathtt U$ that have values less than $v$. Then, add a new row of all $\underline m$s to the resulting payoff table. Consider that $\mathtt U'$ still has pure strategy security level of $v$ and maximum entry of $\overline m$. Furthermore, any strategy that guarantees payoff $w$ in game $\mathtt U$ also guarantees payoff $w$ in game $\mathtt U'$.
\end{remark}
\color{black}

Let us now turn to the third idea to prove a lower bound for $ F_{\mathtt U}(w)$ in \eqref{eqn:Fw-eq}, namely replacing the set of distributions $\mathcal{P}_{\mathtt U}( w )$ with a potentially bigger set. As mentioned earlier, minimization of the entropy over the set
\begin{equation*}\mathcal{P}_{\mathtt U}(w)=\{\mathbf{p}\in \Delta(\mathcal{A}): {\mathbf{p}^T\mathtt U}\geq w \mathbf{1}^T\}\end{equation*}
can be difficult. However, it could be possible to solve it (or find good lower bounds for it) for special choices of the matrix $\mathtt U$.\footnote{For instance, if each row of $\mathtt U$ has only one non-zero element, the set of constraints will be on the individual coordinates of the vector $\mathbf{p}$ and minimizing entropy for such constraints is tractable.} We show how a result for an special case of $\mathtt U$ can be utilized to find a bound (computable in polynomial time) for an arbitrary $\mathtt U$. Assume that we have a way to minimize entropy over the set
\begin{equation*}\mathcal Q(\mathbf{r})\triangleq\{\mathbf{p}\in \Delta(\mathcal{A}):
{\mathbf{p}^T{\mathtt U}^*}\geq \mathbf{r}^T\}\end{equation*}
for some given matrix ${\mathtt U}^*$, and any arbitrary column vector $\mathbf{r}$. We are interested in a value for $\mathbf{r}$ such that
\begin{equation}\mathcal{P}_{\mathtt U}(w)\subseteq \mathcal Q(\mathbf{r}).\label{eqn:relax-eq}
\end{equation}
We can relax the minimization of the entropy over the set $\mathcal{P}_{\mathtt U}(w)$ by computing its minimum over the bigger set of $\mathcal Q(\mathbf{r})$. Note that an appropriate $\mathbf{r}$ in \eqref{eqn:relax-eq} can be found by solving a number of linear programs: the product ${\mathbf{p}^T{\mathtt U}^*}$ consists of a number of linear functions of coordinates of $\mathbf{p}$, and the minimum of each linear function over the set $\mathcal{P}_{\mathtt U}(w)$ is a linear program (see also Theorem~\ref{theorem-sec4-1}).

\textbf{Upper bound:}
It is clear that $F_{\mathtt U}(w) \leq H(\mathbf{p})$ for any arbitrary choice of $\mathbf{p}\in \mathcal{P}_{\mathtt U}( w )$. The following theorem gives a number of upper bounds each of which are obtained by identifying $\mathbf{p}\in \mathcal{P}_{\mathtt U}( w )$ in different ways.

\begin{theorem}\label{T:bounds}
Consider a game with payoff matrix $\mathtt{U}$ and parameters ${v}$, $\underbar{m}$, $\overline{m}$ and $ w ^*$ defined in Definition~\ref{D:parameters}. Let $\mathsf{h}^*$ be the entropy of a Nash strategy of player $A$. Define
\begin{align*}
Q_{\mathtt U}^{(1)}( w ) & = \min \left\{\mathsf{h}^*,\frac{ w -{v}}{ w ^*-{v}}\mathsf{h}^* +h\left(\frac{ w -{v}}{ w ^*-{v}}\right) \right\}, \\
Q_{\mathtt U}^{(2)}( w ) & = \min_{i\in\mathcal{A}} H(\mathbf{p}^*_{\text{max},i}),\quad \mathbf{p}^*_{\text{max},i} \in \underset{\mathbf{p}\in \mathcal{P}( w )}{\arg \max}\quad p_{i},\\
Q_{\mathtt U}^{(3)}( w ) & = \underset{j\in\mathcal{B}}{\min}\quad\underset{\mathbf{p}\in\mathcal{P}_{\mathtt U}( w ): \sum_{i\in \mathcal A}p_iu_{i,j}=w}{\max}H(\mathbf{p}),
\end{align*}
where $h(\alpha)=-\alpha \log(\alpha)-(1-\alpha)\log(1-\alpha)$ and $Q_{\mathtt U}^{(2)}( w )$ can be defined with any choice of $\mathbf{p}^*_{\text{max},i}$ from the argmax set (if there are multiple possible choices). We have
$$ F_{\mathtt U}( w ) \leq Q_{{\mathtt U}}^{(r)}( w ),\quad r=1,2,3.$$
\end{theorem}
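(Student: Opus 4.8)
The plan is to exhibit, for each $r\in\{1,2,3\}$, an explicit distribution $\mathbf p\in\mathcal P_{\mathtt U}(w)$ whose entropy equals (or is bounded by) $Q_{\mathtt U}^{(r)}(w)$; since $F_{\mathtt U}(w)=\min_{\mathbf p\in\mathcal P_{\mathtt U}(w)}H(\mathbf p)$, producing any feasible point gives an upper bound. The bounds for $r=2$ and $r=3$ are essentially immediate: the distributions $\mathbf p^*_{\text{max},i}$ and the maximizers defining $Q_{\mathtt U}^{(3)}(w)$ are by construction elements of $\mathcal P_{\mathtt U}(w)$ (the extra equality constraint $\sum_i p_iu_{i,j}=w$ in $Q^{(3)}$ only shrinks the feasible set, so those points are still in $\mathcal P_{\mathtt U}(w)$), hence $F_{\mathtt U}(w)\le H(\mathbf p^*_{\text{max},i})$ for every $i$ and every $j$, and we may take the minimum over $i$ and over $j$ respectively. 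So the real work is in $r=1$.

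For $Q_{\mathtt U}^{(1)}(w)$, first note $F_{\mathtt U}(w)\le \mathsf h^*$ trivially, because a Nash strategy $\mathbf p^*$ of Alice secures $w^*\ge w$, hence lies in $\mathcal P_{\mathtt U}(w)$, and has entropy $\mathsf h^*$. For the second term in the minimum, I would take a pure action $a^\dagger$ attaining the pure-strategy security level, i.e.\ $\min_j u_{a^\dagger,j}=v$, and form the mixture $\mathbf p_\theta=(1-\theta)\mathbf e_{a^\dagger}+\theta\,\mathbf p^*$ with $\theta=\frac{w-v}{w^*-v}\in[0,1]$. Since $\mathbf e_{a^\dagger}$ secures $v$ and $\mathbf p^*$ secures $w^*$, linearity of the payoff in $\mathbf p$ gives that $\mathbf p_\theta$ secures $(1-\theta)v+\theta w^*=w$, so $\mathbf p_\theta\in\mathcal P_{\mathtt U}(w)$. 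It then remains to bound $H(\mathbf p_\theta)$. Using the grouping/concavity identity for entropy of a mixture of two distributions — writing $\mathbf p_\theta$ as the law obtained by first flipping a coin with bias $\theta$ and then drawing from $\mathbf e_{a^\dagger}$ or $\mathbf p^*$ — one gets $H(\mathbf p_\theta)\le h(\theta)+\theta H(\mathbf p^*)+(1-\theta)H(\mathbf e_{a^\dagger})=h(\theta)+\theta\,\mathsf h^*$, since $H(\mathbf e_{a^\dagger})=0$. Substituting $\theta=\frac{w-v}{w^*-v}$ yields exactly the bound $\frac{w-v}{w^*-v}\mathsf h^*+h\!\left(\frac{w-v}{w^*-v}\right)$, and combining with $F_{\mathtt U}(w)\le\mathsf h^*$ gives $F_{\mathtt U}(w)\le Q_{\mathtt U}^{(1)}(w)$.

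The main obstacle — really the only subtlety — is the entropy estimate in the $r=1$ case: one must be careful that the naive bound $H(\mathbf p_\theta)\le h(\theta)+\theta\,\mathsf h^*$ is what is actually needed, and that it follows from the standard fact $H(\sum_k\lambda_k\mathbf q_k)\le H(\boldsymbol\lambda)+\sum_k\lambda_k H(\mathbf q_k)$ applied with two components. This is a routine consequence of the chain rule ($H(X)\le H(X,Z)=H(Z)+H(X|Z)$ where $Z$ is the mixing indicator), but it is the one place where an inequality rather than an identity enters, and it is worth stating explicitly. Everything else is just checking feasibility of the candidate points, which is linear-programming bookkeeping using the definitions of $v$, $w^*$ and $\mathcal P_{\mathtt U}(w)$ in Definition~\ref{D:parameters} and \eqref{E:P_set}.
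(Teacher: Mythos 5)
Your proposal is correct and follows essentially the same route as the paper: for $r=2,3$ the bounds come from feasibility of the candidate distributions in $\mathcal P_{\mathtt U}(w)$, and for $r=1$ you use exactly the paper's construction, namely the mixture of a pure strategy attaining $v$ with a Nash strategy, weighted so as to secure $w$ (your $\theta$ is the paper's $1-\alpha$), together with the standard mixture-entropy bound $H(\theta \mathbf p^*+(1-\theta)\mathbf e_{a^\dagger})\leq h(\theta)+\theta\,\mathsf h^*$. No gaps.
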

Proof of Theorem~\ref{T:bounds} can be found in Section~\ref{proofTbound}.

\begin{remark}
The second derivative of $Q^{(1)}_{\mathtt U}(w)$ is negative for $v\leq w \leq w^*$. Thus, $Q^{(1)}_{\mathtt U}(w)$ is a concave function of $w$ and one can readily inspect that:
$$Q^{(1)}_{\mathtt U\oplus \mathtt U}(w)=Q^{(1)}_{\mathtt U}(\frac{w}{2}).$$
As $\arg \max_{\mathbf{p}\in \mathcal{P}( w )} p_{i}$ may contain multiple elements, $Q^{(2)}_{\mathtt U}(w)$ is not a well defined function of $w$. The function $Q^{(3)}_{\mathtt U}(w)$ is not necessarily scalable for game $\mathtt U\oplus \mathtt U$.
\end{remark}

\begin{example}\label{Ex:Game}
Consider two games with the following payoff tables:
$$
\mathtt U=\left[
	\begin{matrix}
	3&1 &0 &-2 &0 &-2 \\
	1&3 &-2 &0 &-2 &0 \\
	0&-2 &3 &1 &0 &-2 \\
	-2&0 &1 &3 &-2 &0 \\
	0&-2 &0 &-2 &3 &1 \\
	-2&0 &-2 &0 &1 &3
	\end{matrix}
 \right],
 \qquad
 \mathtt U'=\left[
	\begin{matrix}
	0&1 &1 &.5 \\
	1&0&.5 &1 \\
	1&.5 &0 &1 \\
	.5&1 &1 &0
	\end{matrix}
 \right].
$$
Figures~\ref{F:bounds} and \ref{F:tight_bounds} illustrate the behavior of the bounds for the games $\mathtt U$ and $\mathtt U'$, respectively. In this examples, since $v=\underline{m}$, $G^{(3)}_{\mathtt U} (w)$ coincides with $G^{(2)}_{\mathtt U} (w)$, thus, it has not been depicted in the figures.
\end{example}

\begin{figure}
\centering
\begin{picture}(17,7)
\put(0,.1){\includegraphics[width=17cm]{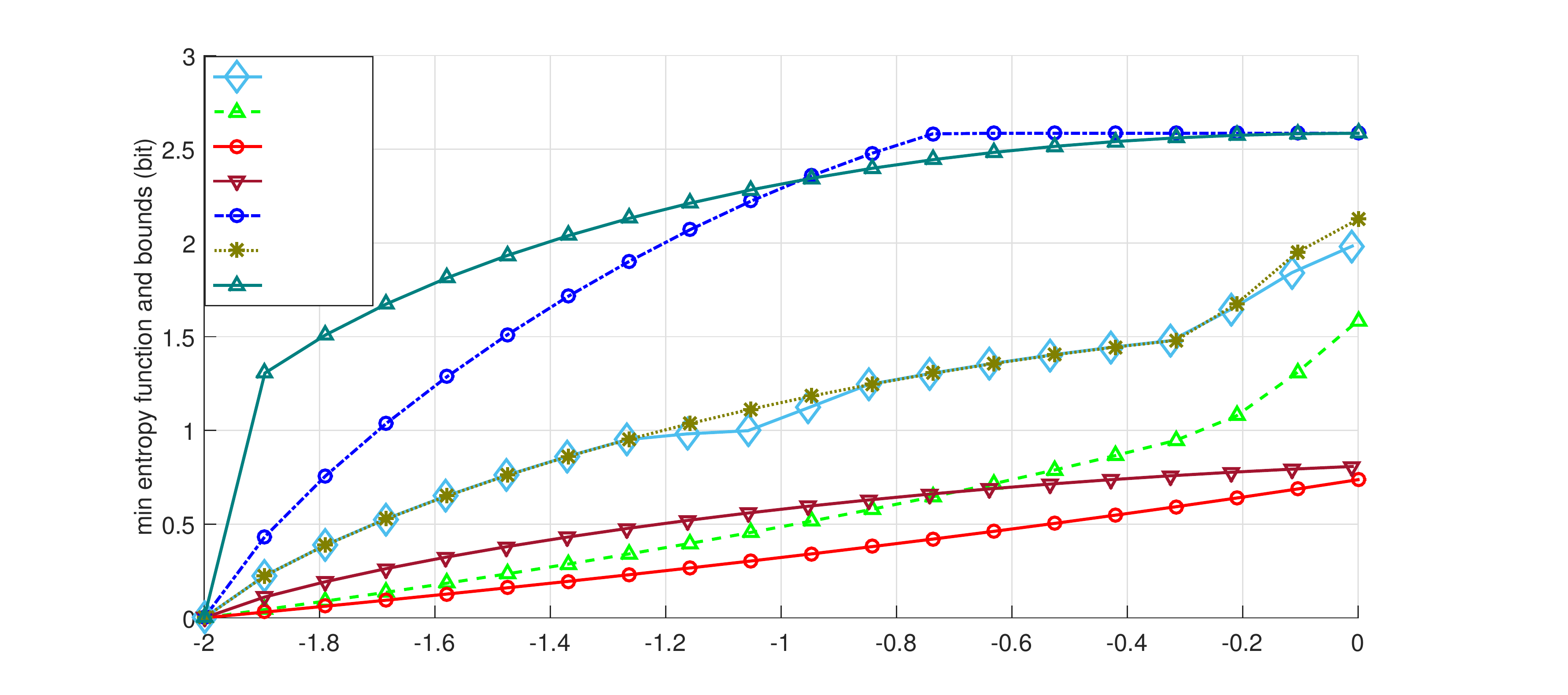}}
\put(8.3,0){$w$}
\put(3,6.8){\tiny $F_{\mathtt U}(w)$\normalsize}
\put(3,6.35){\tiny $G_{\mathtt U}^{(1)}(w)$\normalsize}
\put(3,5.95){\tiny $G_{\mathtt U}^{(2)}(w)$\normalsize}
\put(3,5.6){\tiny $G_{\mathtt U}^{(4)}(w)$\normalsize}
\put(3,5.25){\tiny $Q_{\mathtt U}^{(1)}(w)$\normalsize}
\put(3,4.85){\tiny $Q_{\mathtt U}^{(2)}(w)$\normalsize}
\put(3,4.45){\tiny $Q_{\mathtt U}^{(3)}(w)$\normalsize}
\end{picture}
\caption{Illustration of the bounds on the min-entropy function for the game $\mathtt U$ defined in Example~\ref{Ex:Game}. The horizontal line depicts $w$ and vertical line depicts the value of bounds.}\label{F:bounds}
\end{figure}

\begin{figure}
\centering
\begin{picture}(17,7)
\put(0,.1){\includegraphics[width=17cm]{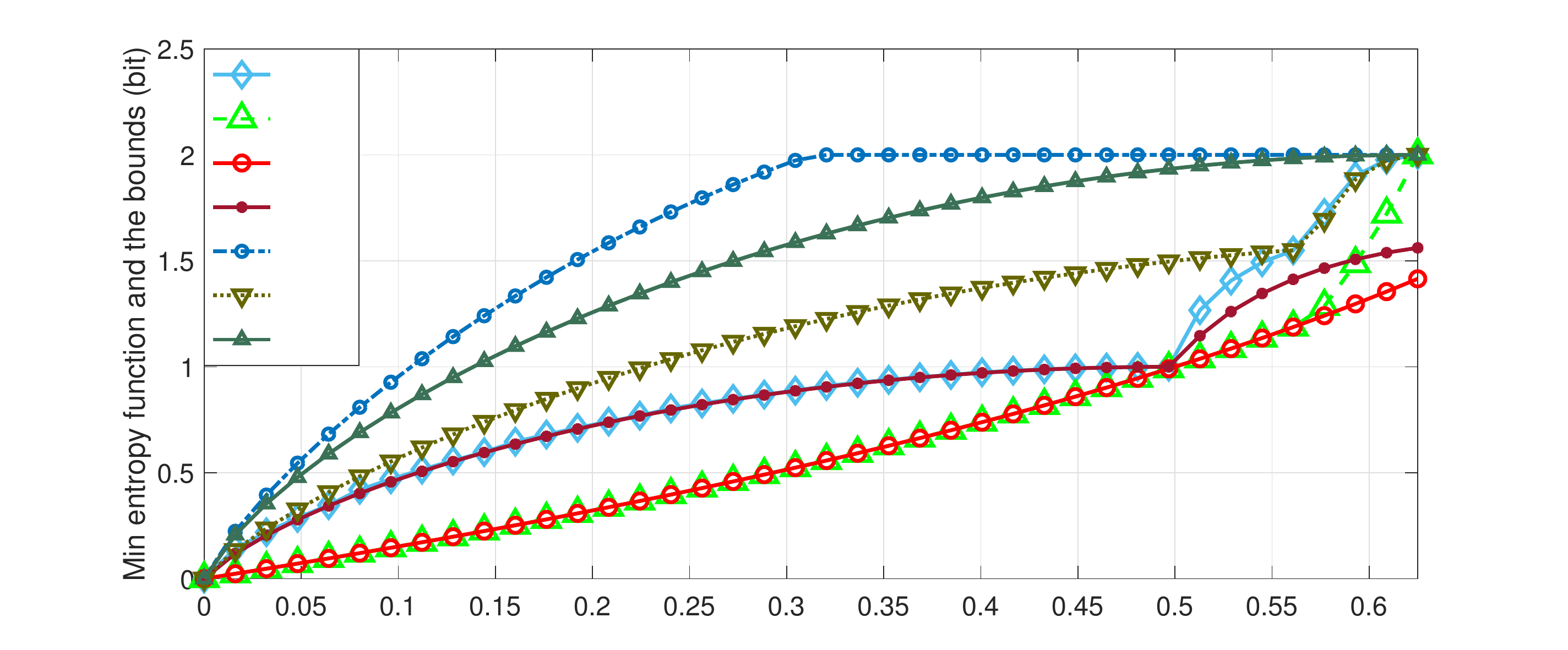}}
\put(8.3,0){$w$}
\put(3,6.3){\tiny $F_{\mathtt U'}(w)$\normalsize}
\put(3,5.82){\tiny $G_{\mathtt U'}^{(1)}(w)$\normalsize}
\put(3,5.34){\tiny $G_{\mathtt U'}^{(2)}(w)$\normalsize}
\put(3,4.86){\tiny $G_{\mathtt U'}^{(4)}(w)$\normalsize}
\put(3,4.38){\tiny $Q_{\mathtt U'}^{(1)}(w)$\normalsize}
\put(3,3.9){\tiny $Q_{\mathtt U'}^{(2)}(w)$\normalsize}
\put(3,3.42){\tiny $Q_{\mathtt U'}^{(3)}(w)$\normalsize}
\end{picture}
\caption{Illustration of the bounds on the min-entropy function for the game $\mathtt U'$ defined in Example~\ref{Ex:Game}. The horizontal line depicts $w$ and vertical line depicts the value of bounds.}\label{F:tight_bounds}
\end{figure}

\section{Proofs}\label{S:proofs}
\subsection{Proof of Proposition~\ref{Pro:mapping}} \label{S:mapping}
To prove Proposition~\ref{Pro:mapping} we make use of Lemmas~\ref{L:yasayi} and \ref{L:han}. A brief discussion on randomness extraction and proof of Lemma~\ref{L:yasayi} is provided in~\ref{sec:randomness_extraction}.
\begin{lemma} \label{L:yasayi}
Consider the correlated random sequences $X_p^L$ and $Y_p^L$ drawn i.i.d.\ from respective spaces $\mathcal{X}$ and $\mathcal{Y}$ by joint probability distribution $p_{XY}$. Let $Q^{(L)}$ be a random variable independent of $Y_p^L$ and uniformly distributed on $\{1,2,\dots,2^{RL}\}$, where $R<H(X|Y)$ is a real number; then, there exist mappings $\mathsf B_L:\mathcal{X}^L\to \{1,2,\dots,2^{RL}\}$ such that
$$\lim_{L\to \infty} \|p_{\mathsf B_L(X_p^L)Y_p^L}-p_{Q^{(L)}Y_p^L}\|_{TV} =0.$$
\end{lemma}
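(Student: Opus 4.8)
The plan is to prove Lemma~\ref{L:yasayi} with the standard recipe for extracting private randomness against correlated side information: a smoothing step supplied by the conditional asymptotic equipartition property (AEP), followed by the Leftover Hash Lemma applied with a $2$-universal family of hash functions. For each $L$, fix a $2$-universal family $\mathcal{H}_L$ of maps $\mathcal{X}^L\to\{1,\dots,M\}$, where $M=2^{RL}$ (rounded to an integer where needed); thus $\Pr_{H}[H(x^L)=H(\tilde x^L)]\leq 1/M$ for all $x^L\neq\tilde x^L$ when $H$ is uniform on $\mathcal{H}_L$, and such families exist via the usual linear constructions over a finite field. The conditional Leftover Hash Lemma states that for any joint law $p_{\bar X\bar Y}$ and any $\epsilon>0$,
$$\E_{H}\big\|p_{H(\bar X)\,\bar Y}-p_U\,p_{\bar Y}\big\|_{TV}\ \leq\ \epsilon+\frac12\,2^{-(H_{\min}^{\epsilon}(\bar X|\bar Y)-\log M)/2}\,,$$
with $p_U$ uniform on $\{1,\dots,M\}$ and $H_{\min}^{\epsilon}(\bar X|\bar Y)$ the $\epsilon$-smooth conditional min-entropy; in particular some fixed $h\in\mathcal{H}_L$ attains this average, and that deterministic $h$ will serve as $\mathsf B_L$.

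For the smoothing input I would invoke the conditional AEP: since $(X_{p,i},Y_{p,i})$ are i.i.d., the strong law of large numbers gives
$$-\frac1L\log p_{X^L|Y^L}(X_p^L|Y_p^L)=-\frac1L\sum_{i=1}^{L}\log p_{X|Y}(X_{p,i}|Y_{p,i})\ \longrightarrow\ H(X|Y)$$
almost surely, hence in probability; combining this with the analogous convergence for the marginal $p_{Y^L}$ and truncating $p_{X^LY^L}$ to the joint--and--marginal typical set (which costs only a vanishing amount of total variation and on which $p_{X^L|Y^L}\leq 2^{-L(H(X|Y)-\delta)}$), one obtains $\frac1L H_{\min}^{\epsilon_L}(X_p^L|Y_p^L)\to H(X|Y)$ along some sequence $\epsilon_L\to0$.

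Assembling the pieces, pick $\delta>0$ with $R+\delta<H(X|Y)$ (possible since $R<H(X|Y)$). For all large $L$ we have $H_{\min}^{\epsilon_L}(X_p^L|Y_p^L)\geq L(H(X|Y)-\delta)$, so the Leftover Hash Lemma yields a map $\mathsf B_L:\mathcal{X}^L\to\{1,\dots,2^{RL}\}$ with
$$\big\|p_{\mathsf B_L(X_p^L)\,Y_p^L}-p_U\,p_{Y_p^L}\big\|_{TV}\ \leq\ \epsilon_L+\frac12\,2^{-(L(H(X|Y)-\delta-R)-o(L))/2}\,.$$
Because $Q^{(L)}$ is uniform on $\{1,\dots,2^{RL}\}$ and independent of $Y_p^L$, we have $p_{Q^{(L)}Y_p^L}=p_U\,p_{Y_p^L}$, so the left-hand side above is precisely the quantity in the lemma; as $L\to\infty$ both error terms tend to $0$, which proves Lemma~\ref{L:yasayi}.

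The step I expect to require the most care --- and essentially the only nontrivial one --- is the \emph{conditional} character of the statement: the extracted label $\mathsf B_L(X_p^L)$ must be close to uniform \emph{jointly with} $Y_p^L$, not merely in its own marginal. This is exactly why the smoothing has to be carried out relative to the conditional law $p_{X^L|Y^L}$ (equivalently, why the governing resource is the smooth \emph{conditional} min-entropy rather than the unconditional one), and why the reference $Y$-distribution must be handled with the usual care in the hash bound; the i.i.d.\ structure is what makes $\frac1L H_{\min}^{\epsilon_L}(X_p^L|Y_p^L)$ converge to the single-letter quantity $H(X|Y)$. Everything else is routine total-variation bookkeeping, for which the properties in Lemma~\ref{lemma:tv_p} suffice. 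I would close by remarking that Lemma~\ref{L:yasayi} is the ``randomness extraction'' half of the two-step construction which, together with the source-simulation Lemma~\ref{L:han}, proves Proposition~\ref{Pro:mapping}, and that for general ergodic (rather than i.i.d.)\ sources the same argument delivers the statement with $H(X|Y)$ replaced by the conditional spectral inf-entropy, i.e.\ the information-spectrum version of intrinsic randomness extraction.
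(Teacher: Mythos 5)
Your proposal is correct and follows essentially the same route as the paper: smooth a conditional Rényi-type entropy of the i.i.d.\ pair via conditional typicality, apply a leftover-hash/random-binning bound to the smoothed quantity, and fix a good deterministic hash $\mathsf B_L$, noting $p_{Q^{(L)}Y_p^L}=p_U\,p_{Y_p^L}$. The only cosmetic differences are that the paper works with the smooth conditional \emph{collision} entropy $H_2^{\epsilon}$ and proves its hashing bound from scratch (Theorem~\ref{theorem:leftover}, Corollary~\ref{corollary:smooth}, Remark~\ref{remark:iid_min}) rather than citing the smooth min-entropy Leftover Hash Lemma for a $2$-universal family, and its smoothing constant is $2\epsilon$ rather than your $\epsilon$ — neither affects the argument.
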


Choose a real number $R$ such that $\gamma H(p_A^{(1)})+(1-\gamma)H(p_A^{(2)})<R<H(X|Y)$ and $LR$ is a natural number (for a sufficiently large $L$).

We define random variables $Q^{(L)}$ and $\hat{A}^{L}$ that are mutually independent of each other and of $Y_p^L$ with the following marginal distributions: let $Q^{(L)}$ be a uniformly distributed random variable on $\{1,2,\dots,2^{RL}\}$ and $\hat{A}^{L}$ be distributed as follows:
$$p_{\hat{A}^{L}}(a^{L})=\prod_{t=1}^{\lceil\gamma L\rceil}p_A^{(1)}(a_t) \prod_{t=\lceil\gamma L )\rceil+1}^{L}p_A^{(2)}(a_t).$$
Note that $R<H(X|Y)$; thus, according to Lemma~\ref{L:yasayi}, there exist mappings $\mathsf B_L:\mathcal{X}^{L}\to \{1,2,\dots,2^{RL}\}$ satisfying
\begin{equation} \label{E:bining1}
\lim_{L\to \infty} \|p_{\mathsf B_L\left(X_p^{L}\right)Y_p^{L}}-p_{Q^{(L)}Y_p^{L}}\|_{TV} =0.
\end{equation}
On the other hand, we have
\begin{align*}
 \text{p-}\limsup_{L\to\infty}\frac{1}{L}\log\frac{1}{p_{\hat{A}^{L}}(\hat{A}^{L})} &=\gamma H(p_A^{(1)})+(1-\gamma)H(p_A^{(2)}) \\
& < R = \text{p-}\liminf_{L\to\infty}\frac{1}{L}\log\frac{1}{p_{Q^{L}}(Q^{L})}.
\end{align*}
Therefore, according to Lemma~\ref{L:han}, there exist mappings $\varphi_L:\{1,2,\dots,2^{RL}\}\to \mathcal{A}^{L}$ such that
$$ \lim_{L\to \infty} \|p_{\varphi_L\left(Q^{(L)}\right)}-p_{\hat{A}^{L}}\|_{TV}=0.$$
Considering the fact that $\varphi_L(Q^{(L)})$ and $\hat{A}^{L}$ both are independent of $Y_p^{L}$, the above equation along with the third property of the total variation distance in Lemma~\ref{lemma:tv_p} results in
\begin{equation}\label{E:han1}
\lim_{L\to \infty} \|p_{\varphi_L\left(Q^{(L)}\right)Y_p^{L}}-p_{\hat{A}^{L}Y_p^{L}}\|_{TV} =0.
\end{equation}
Furthermore, note that
\begin{align}
\|p_{\varphi_L\left(\mathsf B_L(X_p^L)\right)Y_p^{L}}-p_{\hat{A}^{L}Y_p^{L}}\|_{TV} &\leq \|p_{\varphi_L\left(\mathsf B_L(X_p^L)\right)Y_p^{L}}-p_{\varphi_L\left(Q^{(L)}\right)Y_p^{L}}\|_{TV} + \|p_{\varphi_L\left(Q^{(L)}\right)Y_p^{L}}-p_{\hat{A}^{L}Y_p^{L}}\|_{TV} \notag \\
&\leq \|p_{\mathsf B_L(X_p^L)Y_p^{L}}-p_{Q^{(L)}Y_p^{L}}\|_{TV} + \|p_{\varphi_L\left(Q^{(L)}\right)Y_p^{L}}-p_{\hat{A}^{L}Y_p^{L}}\|_{TV},
\label{E:triangular1}
\end{align}
where the first inequality follows from the triangle inequality for the total variation distance, and the second inequality follows from the fourth property of total variation in Lemma~\ref{lemma:tv_p}. Let $A_c^L=\psi_L(X_p^L)=\varphi_L\left(\mathsf B_L\left(X_p^{L}\right)\right)$. Then, by combining Equations \eqref{E:bining1}, \eqref{E:han1} and \eqref{E:triangular1}, we have
\begin{equation}\label{E:final}
\lim_{L\to \infty} \|p_{A_c^LY_p^{L}}-p_{\hat{A}^{L}Y_p^{L}}\|_{TV} =0.
\end{equation}
Thus, $\psi_L(\cdot)=\varphi_L(\mathsf B_L(\cdot))$ is the desired mapping.

\subsection{Proof of Proposition~\ref{Pro:mapping2}} \label{S:mapping2}
Choose real numbers $R_1$ and $R_2$ such that
$$R_1<H(X|Y,U=0),\quad R_2<H(X|Y,U=1),\quad p_U(0)R_1+p_U(1)R_2>H(Z|V),$$
and for sufficiently large $L$, $\lceil p_U(0)L\rceil R_1$ and $\lceil p_U(1)L\rceil R_2$ are natural numbers. Since
$$H(X|YU)=p_U(0)H(X|Y,U=0)+p_U(1)H(X|Y,U=1)>H(Z|V),$$
real numbers $R_1$ and $R_2$ with the above properties exist.

Let $\hat{L}=\lceil p_U(0)L\rceil$ and $\tilde{L}=L-\lceil p_U(0)L\rceil$. Observe that $L=\hat{L}+\tilde{L}$. We will consider three sets of random variables:
\begin{itemize}
\item (Set 1): Random variables $(X_1, \cdots, X_L, Y_1, \cdots, Y_L, W)$ defined in the statement of the proposition with the joint distribution $p_{X^LY^LW}$ satisfying
\begin{equation}\label{eq:real_variables}
\left|\left|p_{X^LY^LW}(x^L,y^L,w)-p_W(w)\prod_{t=1}^{\hat{L}} p_{XY|U}(x_t,y_t|0)\prod_{t=\hat{L}+1}^{L} p_{XY|U}(x_t,y_t|1)\right|\right|_{TV}\leq \delta_1.
\end{equation}
\item (Set 2): Mutually independent random variables $W'$, $(\hat{X}^{\hat L},\hat Y^{\hat L})$ and $(\tilde{X}^{\tilde L},\tilde Y^{\tilde L})$. Here $p_{W'}=p_W$. Furthermore, $(\hat{X}^{\hat L},\hat Y^{\hat L})$ and $(\tilde{X}^{\tilde L},\tilde Y^{\tilde L})$ are i.i.d.\ according to $p_{XY|U}(\hat x,\hat y |0)$ and $p_{XY|U}(\tilde x,\tilde y |1)$, respectively. \color{black} In other words, for $X'^L=(\hat{X}^{\hat L},\tilde X^{\tilde L})$ and $Y'^L=(\hat{Y}^{\hat L},\tilde Y^{\tilde L})$ we have,
\begin{equation*}
p_{X'^LY'^L}(x^L,y^L)=\prod_{t=1}^{\hat{L}} p_{XY|U}(x_t,y_t|0)\prod_{t=\hat{L}+1}^{L} p_{XY|U}(x_t,y_t|1).
\end{equation*}
Equation~\eqref{eq:real_variables} along with the above equation and $p_{W'}=p_W$ shows that the random variables in Set 2 are related to the random variables in Set 1 as follows:
\begin{equation}\label{eq:ideal_variables}
\|p_{X^LY^LW}-p_{X'^LY'^L}p_{W'}\|_{TV} \leq \delta_1.
\end{equation}
\item (Set 3): Two independent random variables $\hat Q^{(\hat L)}$ and $\tilde Q^{(\tilde L)}$ with uniform distributions on $\{1,2,\ldots,2^{\hat L R_1}\}$ and $\{1,2,\ldots,2^{\tilde L R_2}\}$, respectively.
\end{itemize}

Since $R_1<H(X|Y,U=0)$ and $R_2<H(X|Y,U=1)$, according to Lemma~\ref{L:yasayi}, for arbitrary $\delta'>0$, there exist mappings $\hat{\mathsf B}_{\hat L}:\mathcal{X}^{\hat L}\to \{1,2,\dots,2^{R_1\hat L}\}$ and $\tilde{\mathsf B}_{\tilde L}:\mathcal{X}^{\tilde L}\to \{1,2,\dots,2^{R_2\tilde L}\}$ satisfying
\begin{equation} \label{E:bining}
\|p_{\hat{\mathsf B}_{\hat L}\left(\hat X^{\hat L}\right)\hat Y^{\hat L}}-p_{\hat Q^{(\hat L)}}p_{\hat Y^{\hat L}}\|_{TV} \leq \delta', \quad
\|p_{\tilde{\mathsf B}_{\tilde L}\left(\tilde X^{\tilde L}\right)\tilde Y^{\tilde L}}-p_{\tilde Q^{(\tilde L)}}p_{\tilde Y^{\tilde L}}\|_{TV} \leq \delta',
\end{equation}
for sufficiently large $\hat L$ and $\tilde L$. Next, let $Q^{(L)}=(\hat Q^{(\hat L)},\tilde Q^{(\tilde L)})$, and $\mathsf B_L:\mathcal X^L\to \{1,2,\dots,2^{R_1\hat L+R_2 \tilde L}\}$ be the mapping $\mathsf B_L(\hat X^{\hat L},\tilde X^{\tilde L})=(\hat{\mathsf B}_{\hat L}(\hat X^{\hat L}),\tilde{\mathsf B}_{\tilde L}(\tilde X^{\tilde L}))$. Using Equation \eqref{E:bining},
independence of $(\hat{X}^{\hat L},\hat Y^{\hat L})$ from $(\tilde{X}^{\tilde L},\tilde Y^{\tilde L})$, and independence of $\hat Q^{(\hat L)}$ from $\tilde Q^{(\tilde L)}$, we have
\begin{equation*}
\|p_{\mathsf B_L\left(\hat X^{\hat L},\tilde X^{\tilde L}\right)\hat Y^{\hat L}\tilde Y^{\tilde L}}-p_{Q^{(L)}}p_{\hat Y^{\hat L}}p_{\tilde Y^{\tilde L}}\|_{TV} \leq 2\delta',
\end{equation*}
where we used the third property of total variation in Lemma~\ref{lemma:tv_p} for random variables $E_1=(\hat{\mathsf B}_{\hat L}(\hat X^{\hat L}),\hat Y^{\hat L})$, $F_1=(\tilde{\mathsf B}_{\tilde L}(\tilde X^{\tilde L}),\tilde Y^{\tilde L})$, $E_2=(\hat Q^{(\hat L)},\hat Y^{\hat L})$ and $F_2=(\tilde Q^{(\tilde L)},\tilde Y^{\tilde L})$.
Then, using the fifth property of total variation in Lemma~\ref{lemma:tv_p} for random variables $E_1=(\mathsf B_L(\hat X^{\hat L},\tilde X^{\tilde L}),\hat Y^{\hat L},\tilde Y^{\tilde L})$, $E_2=(Q^{(L)},\hat Y^{\hat L},\tilde Y^{\tilde L})$ and $F=W'$, we get
 \begin{equation*}
\|p_{W'}p_{\mathsf B_L\left(\hat X^{\hat L},\tilde X^{\tilde L}\right)\hat Y^{\hat L}\tilde Y^{\tilde L}}-p_{W'}p_{Q^{(L)}}p_{\hat Y^{\hat L}}p_{\tilde Y^{\tilde L}}\|_{TV} \leq 2\delta'.
\end{equation*}
\color{black} Furthermore, by using the notation $X'^L=(\hat{X}^{\hat L},\tilde X^{\tilde L})$ and $Y'^L=(\hat{Y}^{\hat L},\tilde Y^{\tilde L})$, the above inequality is simplified as follows:
\begin{equation}\label{eq:bining2}
\|p_{W'}p_{\mathsf B_L\left(X'^L\right)Y'^L}-p_{W'}p_{Q^{(L)}}p_{Y'^L}\|_{TV} \leq 2\delta'.
\end{equation}
The above inequality shows that the mapping $\mathsf B_L$ simulates random variable $Q^{(L)}$ (in Set 3 of random variables) from $X'^L$ (in Set 2 of random variables) within the total variation distance of $2\delta'$. \color{black} Now, we consider the Set 1 of random variables.
By applying the mapping $\mathsf B_L$ on sequence $X^L$ we have that
\begin{align}\label{eq:bining3}
\|p_{W\mathsf B_L\left(X^L\right)Y^L}-p_Wp_{Q^{(L)}}p_{Y^L}\|_{TV} &\leq \|p_{W\mathsf B_L\left(X^L\right)Y^L}- p_{W'}p_{\mathsf B_L\left(X'^L\right)Y'^L}\|_{TV} + \notag \\
&\qquad\|p_{W'}p_{\mathsf B_L\left(X'^L\right)Y'^L} - p_{W'}p_{Q^{(L)}}p_{Y'^L}\|_{TV} + \notag\\
&\qquad\quad\|p_{W'}p_{Q^{(L)}}p_{Y'^L}-p_Wp_{Q^{(L)}}p_{Y^L}\|_{TV} \notag \\
&\leq 2\delta_1+2\delta',
\end{align}
where the first inequality follows from the triangle inequality for the total variation distance, and the second inequality results from Equation \eqref{eq:bining2} and the following two facts:
\begin{itemize}
\item \color{black}$\|p_{W\mathsf B_L\left(X^L\right)Y^L}- p_{W'}p_{\mathsf B_L\left(X'^L\right)Y'^L}\|_{TV}\leq \delta_1$. This is a consequence of Equation~\eqref{eq:ideal_variables}, combined with the fourth property of the total variation distance in Lemma~\ref{lemma:tv_p}.

\item$\|p_{W'}p_{Q^{(L)}}p_{Y'^L}-p_Wp_{Q^{(L)}}p_{Y^L}\|_{TV}\leq \delta_1$. To see that this inequality is correct, first, note that Equation~\eqref{eq:ideal_variables} along with the second property of total variation in Lemma~\ref{lemma:tv_p} implies that $Y^L$ is in $\delta_1$ distance of $Y'^L$. Then, the fact that $p_{W'}=p_{W}$ along with the fifth property of total variation in Lemma~\ref{lemma:tv_p} gives the desired inequality.
\end{itemize}

Now, Let $\hat Z^L$ be an ideal sequence with distribution
$$p_{\hat Z^L}(\hat z^L)=\prod_{t=1}^{\lceil p_V(0)L\rceil} p_{Z|V}(\hat z_t|0)\prod_{t=\lceil p_V(0)L\rceil+1}^{L} p_{Z|V}(\hat z_t|1).$$
Then, we have
\begin{align*}
 \text{p-}\limsup_{L\to\infty}\frac{1}{L}\log\frac{1}{p_{\hat{Z}^{L}}(\hat{Z}^{L})} &= H(Z|V)\\
& < p_U(0)R_1+p_U(1)R_2 = \text{p-}\liminf_{L\to\infty}\frac{1}{L}\log\frac{1}{p_{Q^{L}}(Q^{L})}.
\end{align*}
Therefore, according to Lemma~\ref{L:han}, there exist mappings $\varphi_L:\{1,2,\dots,2^{R_1\hat L+R_2\tilde L}\}\to \mathcal{Z}^{L}$ such that for sufficiently large $L$
\begin{equation}\label{eq:han1}
\|p_{\varphi_L\left(Q^{(L)}\right)}-p_{\hat{Z}^{L}}\|_{TV} \leq \delta'.
\end{equation}
Next, by replacing $p_{E_1}$, $p_{E_2}$ and $p_{F}$ with respective distributions $p_{\varphi_L(Q^{(L)})}$, $p_{\hat Z^L}$ and $p_Wp_{Y^L}$, the fifth property of total variation in Lemma~\ref{lemma:tv_p} along with Equation~\eqref{eq:han1} implies
\begin{equation}\label{E:han2}
\|p_{\varphi_L\left(Q^{(L)}\right)}p_{Y^{L}}p_W-p_{\hat{Z}^{L}}p_{Y^{L}}p_W\|_{TV} \leq \delta'.
\end{equation}
Thus we have
\begin{align} \label{E:triangular}
\|p_{\varphi_L\left(\mathsf B_L(X^L)\right)Y^{L}W}-p_{\hat{Z}^{L}}p_{Y^L}p_W\|_{TV} &\leq \|p_{\varphi_L\left(\mathsf B_L(X^L)\right)Y^{L}W}-p_{\varphi_L\left(Q^{(L)}\right)}p_{Y^{L}}p_W\|_{TV} + \notag \\
&\quad\quad\quad\quad\quad\qquad \|p_{\varphi_L\left(Q^{(L)}\right)}p_{Y^{L}}p_W-p_{\hat{Z}^{L}}p_{Y^L}p_W\|_{TV} \notag \\
&\leq \|p_{\mathsf B_L(X^L)Y^{L}W}-p_{Q^{(L)}}p_{Y^{L}}p_W\|_{TV} + \delta'\notag\\
&\leq 2\delta_1+3\delta',
\end{align}
where the first inequality follows from the triangle inequality for the total variation distance; the second inequality is a consequent of Equation \eqref{E:han2} and the fourth property of total variation in Lemma~\ref{lemma:tv_p} where $E_1$ and $E_2$ are replaced with $\mathsf B_L(X^L)Y^{L}W$ and $Q^{(L)}Y^{L}W$, respectively; finally, the third inequality results from Equation \eqref{eq:bining3}.

Let $Z^L= \varphi_L\left(\mathsf B_L(X^L)\right)$ and take $\delta'$ small enough that $3\delta'\leq \delta_2$. Then, from Equation~\eqref{E:triangular} we have
$$ \|p_{Z^LY^{L}W}-p_{\hat{Z}^{L}}p_{Y^L}p_W\|_{TV}\leq 2\delta_1+\delta_2,$$
and the proof is complete.

\subsection{Proof of Theorems~\ref{thm:TVw1w2} and \ref{thm:mainw1w2}} \label{proof:mainw1w2}

Take arbitrary $\mathbf{p}\in \mathcal{P}^c_{\mathtt U}(w_2)$ and $\mathbf{q}\in \mathcal{P}_{\mathtt U}(w_1)$. Since $\mathbf{p}\in \mathcal{P}^c_{\mathtt U}(w_2)$, when Alice plays according to $\mathbf{p}$, Bob has a (pure) strategy $j\in\mathcal{B}$ that reduces Alice's expected payoff to a number less than $w_2$. We can define a random variable $\tilde{W}$ that represents the payoff of Alice when Alice plays $\mathbf{p}$ and Bob plays $j$. Then, the alphabet set of $\tilde{W}$ is $\{u_{1,j}, u_{2,j}, \ldots, u_{n,j}\}$, and its probability distribution over this set is $\mathbf{p}$. We must have $\mathbb{E}[\tilde{W}]<w_2$.

Let us assume that Alice adopts $\mathbf{q}$, but Bob keeps playing the same $j\in\mathcal{B}$. Let $W$ be the random variable describing the payoff of Alice when she plays according to $\mathbf{q}$. Because $\mathbf{q}\in \mathcal{P}_{\mathtt U}(w_1)$, $\mathbb{E}[W]\geq w_1$. We can think of $W$ as taking $u_{i,j}$ with probability $q_i$.

Theorem~\ref{thm:TVw1w2} follows from the following chain of inequalities:
\begin{align}
|w_1-w_2|&\leq |\mathbb{E}[W]-\mathbb{E}[\tilde{W}]|\notag\\
&=\left|\sum_{i}(p_i-q_i)u_{i,j}\right|\nonumber
\\&=\left|\sum_{i}(p_i-q_i)\left(u_{i,j}-\frac{\overline{m}+\underline{m}}{2}\right)\right|\label{eqn:mmm1}
\\&\leq\big(\sum_{i}|p_i-q_i|\big)\max_{i}\bigg|u_{i,j}-\frac{\overline{m}+\underline{m}}{2}\bigg|\nonumber
\\&\leq (2d_1(\mathbf{p},\mathbf{q}))\left(\frac{|\overline{m}-\underline{m}|}{2}\right),\label{eqn:tot-var-dist1}
\end{align}
where \eqref{eqn:mmm1} follows from the fact that $(\overline{m}+\underline{m})/2$ is constant and $\sum_i p_i=\sum_i q_i=1$.

To prove Theorem~\ref{thm:mainw1w2}, the key step is to relate the R\'enyi divergence to \emph{variance}. Remember that $d_2(\mathbf{p},\mathbf{q})=\log(1+\chi^2(\mathbf{p}, \mathbf{q}))$. The following lemma follows from the Chapman--Robbins bound:

\begin{lemma} \label{lemmaVar}
Take an arbitrary sequence of real numbers $(x_1, x_2, \ldots, x_n)$ and two probability distributions $\mathbf{p}=(p_1, p_2, \ldots, p_n)$ and $\mathbf{q}=(q_1, q_2, \ldots, q_n)$. Let $W$ be a random variable that takes value $x_i$ with probability $q_i$, and $\tilde{W}$ be a random variable that takes value $x_i$ with probability $p_i$. Then, we have
\begin{align}\chi^2(\mathbf{p}, \mathbf{q})\geq \frac{(\mathbb{E}[{W}]-\mathbb{E}[\tilde{W}])^2}{\mathsf{Var}[W]}.\label{eqn:DTforml}\end{align}
Furthermore, the above inequality becomes an equality if we set $x_i=(p_i-q_i)/q_i$.
\end{lemma}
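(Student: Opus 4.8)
The plan is to obtain \eqref{eqn:DTforml} as a one-line consequence of the Cauchy--Schwarz inequality; this is exactly the mechanism behind the Chapman--Robbins bound from estimation theory. First I would rewrite the two expectations as $\mathbb{E}[W]=\sum_i x_i q_i$ and $\mathbb{E}[\tilde W]=\sum_i x_i p_i$, and use $\sum_i(p_i-q_i)=0$ to ``center'' the difference:
$$\mathbb{E}[\tilde W]-\mathbb{E}[W]=\sum_i x_i(p_i-q_i)=\sum_i\big(x_i-\mathbb{E}[W]\big)(p_i-q_i).$$
The point of subtracting the constant $\mathbb{E}[W]$ is that the variance of $W$, rather than a raw second moment, is what shows up on the right-hand side.

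Next I would write each term as $\big(x_i-\mathbb{E}[W]\big)\sqrt{q_i}\cdot\tfrac{p_i-q_i}{\sqrt{q_i}}$ and apply Cauchy--Schwarz to these two factors, giving
$$\big(\mathbb{E}[\tilde W]-\mathbb{E}[W]\big)^2\le\Big(\sum_i\big(x_i-\mathbb{E}[W]\big)^2 q_i\Big)\Big(\sum_i\frac{(p_i-q_i)^2}{q_i}\Big)=\mathsf{Var}[W]\cdot\chi^2(\mathbf{p},\mathbf{q}),$$
which rearranges to \eqref{eqn:DTforml}. (If $q_i=0$ for some $i$ with $p_i>0$, then $\chi^2(\mathbf{p},\mathbf{q})=+\infty$ and the inequality is vacuous; if $\mathsf{Var}[W]=0$ and $\mathbf{p}$ is absolutely continuous with respect to $\mathbf{q}$, then both sides are $0$ under the $0/0$ convention already fixed for $d_2$.)

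Finally, for the equality assertion I would invoke the equality case of Cauchy--Schwarz: equality holds if and only if the vectors $\big(x_i-\mathbb{E}[W]\big)\sqrt{q_i}$ and $(p_i-q_i)/\sqrt{q_i}$ are proportional, i.e. if and only if $\big(x_i-\mathbb{E}[W]\big)q_i$ is a fixed multiple of $p_i-q_i$ for all $i$. Taking $x_i=(p_i-q_i)/q_i$ yields $\mathbb{E}[W]=\sum_i q_i\,\tfrac{p_i-q_i}{q_i}=\sum_i(p_i-q_i)=0$, hence $x_i-\mathbb{E}[W]=x_i=(p_i-q_i)/q_i$ and $\big(x_i-\mathbb{E}[W]\big)q_i=p_i-q_i$, so the proportionality holds with constant $1$ and the bound is attained. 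I do not anticipate a genuine obstacle here; the only thing requiring care is the handling of the degenerate cases $q_i=0$ and $\mathsf{Var}[W]=0$, which are covered by the conventions the paper has already adopted.
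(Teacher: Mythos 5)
Your proof is correct and is essentially the paper's route made explicit: the paper does not write out a proof at all but simply derives the lemma from the Chapman--Robbins bound, and your centered Cauchy--Schwarz argument (with the check that $x_i=(p_i-q_i)/q_i$ forces $\mathbb{E}[W]=0$ and $\mathbb{E}[\tilde W]=\mathsf{Var}[W]=\chi^2(\mathbf{p},\mathbf{q})$, hence equality) is precisely the standard derivation of that bound specialized to this setting. The only cosmetic caveat is that the $0/0$ convention you invoke for the degenerate case $\mathsf{Var}[W]=0$ is fixed in the paper only for $d_2$, not for the right-hand side of \eqref{eqn:DTforml}, but this bookkeeping point does not affect the substance.
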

Observe that the left hand side of \eqref{eqn:DTforml} depends only on the probability values $p_i$ and $q_i$, while the right hand side depends not only on the probabilities, but also the values that $W$ and $\tilde{W}$ take.
Using this lemma and the fact that $\mathbb{E}[\tilde{W}]<w_2<w_1\leq \mathbb{E}[W]$, we can conclude
$$\chi^2(\mathbf{p}, \mathbf{q})\geq \frac{(\mathbb{E}[{W}]-w_2)^2}{\mathsf{Var}[W]}.$$
Observe that $\underline{m}\leq W\leq \overline{m}$ holds with probability one. The proof is finished by the following lemma.
\begin{lemma} \label{lemmaVar2}
For any $w_1>w_2$, we have
$$\frac{(\mathbb{E}[{W}]-w_2)^2}{\mathsf{Var}[W]}\geq \frac{( w_1 -w_2)^2}{( w_1 -\underbar{m})(\overline{m}- w_1 )},$$
provided that $\mathbb{E}[W]\geq w_1$, and $\underline{m}\leq W\leq \overline{m}$.
\end{lemma}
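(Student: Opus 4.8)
The plan is to turn this probabilistic statement into an elementary inequality in the single variable $\mu:=\mathbb{E}[W]$. First I would dispose of the degenerate cases: if $\mathsf{Var}[W]=0$ (equivalently $W$ is a.s.\ constant; this in particular covers $w_1=\overline{m}$, where $w_1\le\mathbb{E}[W]\le\overline{m}$ forces $W\equiv\overline{m}$), then the left-hand side is $+\infty$ since $\mathbb{E}[W]-w_2\ge w_1-w_2>0$, and the bound is trivial. So assume $\mathsf{Var}[W]>0$; combined with $\underline{m}\le W\le\overline{m}$ this forces $\underline{m}<\mu<\overline{m}$, and since $w_2<w_1\le\mu$ (and, in every context in which the lemma is invoked, $w_2\ge v\ge\underline{m}$, so $w_1>\underline{m}$), all of $\mu-w_2$, $\mu-\underline{m}$, $\overline{m}-\mu$, $w_1-\underline{m}$, $\overline{m}-w_1$ are strictly positive.

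The first substantive step is the Bhatia--Davis bound on the variance of a bounded random variable: since $(\overline{m}-W)(W-\underline{m})\ge 0$ pointwise, taking expectations and expanding gives
$$\mathsf{Var}[W]=\mathbb{E}[W^2]-\mu^2\le(\overline{m}+\underline{m})\mu-\overline{m}\,\underline{m}-\mu^2=(\mu-\underline{m})(\overline{m}-\mu).$$
Using $\mathbb{E}[W]=\mu\ge w_1>w_2$, this reduces the target to
$$\frac{(\mathbb{E}[W]-w_2)^2}{\mathsf{Var}[W]}\ \ge\ \frac{(\mu-w_2)^2}{(\mu-\underline{m})(\overline{m}-\mu)}\ =:\ g(\mu),$$
so it remains only to show $g(\mu)\ge g(w_1)$ for every $\mu\in[w_1,\overline{m})$, since $g(w_1)$ is exactly the right-hand side of the claimed bound.

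The last step is to prove that $g$ is nondecreasing on $[w_1,\overline{m})$. I would do this either via the logarithmic derivative,
$$\frac{g'(\mu)}{g(\mu)}=\frac{2}{\mu-w_2}-\frac{1}{\mu-\underline{m}}+\frac{1}{\overline{m}-\mu}\ \ge\ \frac{2}{\mu-\underline{m}}-\frac{1}{\mu-\underline{m}}+\frac{1}{\overline{m}-\mu}>0,$$
where the inequality uses $\mu-w_2\le\mu-\underline{m}$; or, purely algebraically, by writing $\mu=w_1+t$ with $t\ge0$ and setting $a=w_1-w_2>0$, $b=w_1-\underline{m}>0$, $c=\overline{m}-w_1\ge0$, clearing denominators in $g(w_1+t)\ge g(w_1)$ and cancelling $t$, which reduces the claim to $(bc+a^2)t+a(ab+2bc-ac)\ge0$; here the first term is nonnegative and $ab+2bc-ac\ge a^2+2ac-ac=a(a+c)\ge0$ because $b\ge a$ (i.e.\ $w_2\ge\underline{m}$). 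Evaluating at $\mu=\mathbb{E}[W]$ then finishes the proof.

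I expect the only real subtlety to be the variance step: one must use the sharp, $\mu$-dependent bound $(\mu-\underline{m})(\overline{m}-\mu)$, since the cruder estimate $\tfrac14(\overline{m}-\underline{m})^2$ is not enough to recover the stated constant; and then the monotonicity of $g$, which is precisely where the hypothesis $w_2\ge\underline{m}$ (automatic here because $w_2\ge v\ge\underline{m}$) is genuinely needed. Everything else is routine.
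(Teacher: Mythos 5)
Your proof is correct and follows essentially the same route as the paper's: both bound $\mathsf{Var}[W]$ by $(\mu-\underline{m})(\overline{m}-\mu)$ for fixed mean $\mu=\mathbb{E}[W]$ (your Bhatia--Davis step is exactly the paper's maximal-variance/chord argument) and then show that $g(\mu)=(\mu-w_2)^2/\bigl((\mu-\underline{m})(\overline{m}-\mu)\bigr)$ is nondecreasing on $[w_1,\overline{m})$, so the minimum is attained at $\mu=w_1$. Your explicit handling of the $\mathsf{Var}[W]=0$ case and your observation that the monotonicity genuinely needs $w_2\ge\underline{m}$ (used implicitly in the paper's derivative computation, and automatic in every invocation since $w_2\ge v\ge\underline{m}$) are sensible refinements but do not change the argument.
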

The proof of this lemma is given in \ref{appndixB}.

\subsection{Proof of Theorem~\ref{theorem-sec4-1}} \label{proof-theorem-sec4-1}
In this section we provide additional details and build a geometric picture. This picture implies Theorem~\ref{theorem-sec4-1}, but also gives a geometric interpretation of Nash strategies.

Let
$$L(a_1, a_2, \ldots, a_n)= \Val(\mathtt U+\mathbf{a}\mathbf{1}^T), \qquad \forall \mathbf{a}\in\mathbb{R}^n.$$
Observe that the table $\tilde{\mathtt{U}}$ can be intuitively understood as giving an additional incentive $a_i$ to Alice for playing her $i$-th action (it is actually a disincentive if $a_i<0$). Also, since
$$L(a_1+c, a_2+c, \ldots, a_n+c)= L(a_1, a_2, \ldots, a_n)+c,$$ we only need to understand $L$ when the sum of the incentives $a_i$ is zero.

We need the following definition:
\begin{definition}
Let
$$K(p_1, p_2, \ldots, p_n)=\min_j \sum_{i}p_iu_{i,j}$$
be the payoff that Alice can guarantee with playing distribution $(p_1, \ldots, p_n)$ with table $\mathtt U$. We extend the definition of $K(\cdot)$ to arbitrary $(p_1, p_2, \ldots, p_n)\in\mathbb{R}^n$ by setting
$$K(p_1, p_2, \ldots, p_n)=-\infty,$$
when the tuple $(p_1, p_2, \ldots, p_n)$ does not form a valid probability distribution, \emph{i.e.,} when any of the $p_i$'s becomes negative, or $\sum_i p_i\neq 1$.
\end{definition}
Note that
$$\mathcal{P}_{\mathtt U}(w)=\left\{\mathbf{p}\Big\vert K(\mathbf{p})\geq w\right\}.$$
A full geometric picture of $\mathcal{P}_{\mathtt U}(w)$ as well as Nash strategies are provided in the following theorem:
\begin{theorem}\label{theorem-sec4-2}We have
\begin{enumerate}
\item The function $L(a_1, a_2, \ldots, a_n)$ is the convex conjugate dual of $K(p_1, p_2, \ldots, p_n)$ in the following sense:
\begin{align}L(\mathbf{a})&=\max_{\mathbf{p}\in\mathbb{R}^n}\left[K(\mathbf{p})+\sum_{i=1}^np_ia_i\right], \qquad\forall \mathbf{a}\in\mathbb{R}^n.
\end{align}
The function $L(a_1, a_2, \ldots, a_n)$ is jointly convex in $(a_1, a_2, \ldots, a_n)$, while $K(p_1, p_2, \ldots, p_n)$ is jointly concave in $(p_1, p_2, \ldots, p_n)$.

Furthermore, the supporting hyperplanes to the convex curve $\mathbf{a}\mapsto L(\mathbf{a})$ characterize Alice's Nash strategies as follows: for any arbitrary $\mathbf{a}$, $\mathbf{p}$ is a Nash strategy of Alice for table $\tilde{\mathtt{U}}=\mathtt{U}+\mathbf{a}\mathbf{1}^T$ if and only if $\mathbf{p}$ is a subgradient of the function $L$ at $\mathbf{a}$. In other words, take some arbitrary vector $\mathbf{a}$. Then, \begin{align}L(b_1, b_2, \ldots, b_n)\geq L(a_1,a_2,\ldots, a_n)+\sum_{i}(b_i-a_i)p_i, \qquad \forall \mathbf{b}\in\mathbb{R}^n,\label{eqn:supporthyp}\end{align}
if and only if $\mathbf{p}$ is a Nash strategy of Alice for the payoff table $\tilde{\mathtt{U}}=\mathtt U+\mathbf{a}\mathbf{1}^T$.

\item Given a probability vector $\mathbf{p}$, we have $$L(a_1, a_2, \ldots, a_n)\geq w+\sum_{i}(a_i-b_i)p_i, \qquad \forall \mathbf{a},$$
if and only if $\mathbf{p}$ guarantees a payoff of at least $w$ for game $\mathtt U+\mathbf{b}\mathbf{1}^T$. In particular, setting $b_i=0$, \begin{align}L(a_1, a_2, \ldots, a_n)\geq w+\sum_{i}a_ip_i, \qquad \forall \mathbf{a},\label{eqn:p2T5}\end{align}
if and only if $\mathbf{p}$ guarantees a payoff of at least $w$ for game $\mathtt U$, \emph{i.e.,} $\mathbf{p}\in\mathcal{P}_{\mathtt U}(w)$.
Thus, having a payoff $w$, we look for hyperplanes of the form $w +\sum_{i}p_ia_i$ that pass through $w$ at $(a_1, a_2, \ldots, a_n)=(0,0,\ldots, 0)$, and lie below the curve of $L$.
\end{enumerate}
\end{theorem}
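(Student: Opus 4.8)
The plan is to recognize $L$ and $K$ as a Fenchel conjugate pair and then read off both parts of the theorem from standard convex analysis, the bulk of the work being the almost tautological conjugate identity. I would first prove that identity. Since adding $\sum_i p_i a_i$ to the $i$-th row of $\mathtt U$ shifts each column sum by a quantity independent of $j$, for every $\mathbf{a}\in\mathbb{R}^n$,
\begin{align*}
\Val(\mathtt U+\mathbf{a}\mathbf{1}^T)
&=\max_{\mathbf{p}\in\Delta(\mathcal{A})}\min_{j\in\mathcal{B}}\sum_i p_i(u_{i,j}+a_i)
=\max_{\mathbf{p}\in\Delta(\mathcal{A})}\Big[\min_{j}\sum_i p_i u_{i,j}+\sum_i p_i a_i\Big],
\end{align*}
which, using the extension of $K$ by $-\infty$ off the simplex, is exactly $\max_{\mathbf{p}\in\mathbb{R}^n}\big[K(\mathbf{p})+\sum_i p_i a_i\big]=L(\mathbf{a})$, the first displayed equation of the theorem. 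This exhibits $L$ as a pointwise supremum of affine functions of $\mathbf{a}$, hence finite and convex; dually, on the simplex $K$ is a pointwise minimum of linear functions, hence concave there, and extending by $-\infty$ keeps it concave on $\mathbb{R}^n$ (equivalently, $-K$ is a closed, proper, convex function: real-valued and continuous on the compact simplex, $+\infty$ elsewhere).

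Next I would invoke Fenchel--Moreau. Writing $h=-K$, the computation above says $h^{*}=L$, and since $h$ is closed, proper and convex, $h^{**}=h$, i.e.
$$K(\mathbf{p})=\inf_{\mathbf{a}\in\mathbb{R}^n}\Big[L(\mathbf{a})-\sum_i p_i a_i\Big]\qquad\text{for every probability vector }\mathbf{p}.$$
Part 2 is then immediate: the condition ``$L(\mathbf{a})\ge w+\sum_i(a_i-b_i)p_i$ for all $\mathbf{a}$'' says $\inf_{\mathbf{a}}[L(\mathbf{a})-\sum_i p_i a_i]\ge w-\sum_i p_i b_i$, i.e. $K(\mathbf{p})+\sum_i p_i b_i\ge w$; and $K(\mathbf{p})+\sum_i p_i b_i=\min_j\sum_i p_i(u_{i,j}+b_i)$ is precisely the payoff $\mathbf{p}$ guarantees in $\mathtt U+\mathbf{b}\mathbf{1}^T$. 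Taking $\mathbf{b}=\mathbf{0}$ recovers ``$\mathbf{p}\in\mathcal{P}_{\mathtt U}(w)$ iff $\sum_i a_ip_i\le L(\mathbf a)-w=\Val(\mathtt U+\mathbf{a}\mathbf{1}^T)-w$ for all $\mathbf a$'', which is Theorem~\ref{theorem-sec4-1}; it also gives the geometric picture of Part 2 (hyperplanes $w+\sum_i p_i a_i$ through $w$ at $\mathbf{a}=\mathbf{0}$ lying below the graph of $L$).

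For the subgradient characterization of Nash strategies in Part 1, I would use the Fenchel equality condition: $\mathbf{p}\in\partial L(\mathbf{a})$ iff $L(\mathbf{a})+h(\mathbf{p})=\sum_i a_i p_i$, i.e. (using $h=-K$) iff $L(\mathbf{a})=K(\mathbf{p})+\sum_i a_i p_i$, which by the conjugate identity says exactly that $\mathbf{p}$ attains the maximum defining $L(\mathbf{a})$ --- equivalently $\mathbf{p}\in\Delta(\mathcal{A})$ with $\min_j\sum_i p_i(u_{i,j}+a_i)=\Val(\mathtt U+\mathbf{a}\mathbf{1}^T)$ --- which is the statement that $\mathbf{p}$ is an optimal (maximin, hence Nash) strategy of Alice for $\tilde{\mathtt U}=\mathtt U+\mathbf{a}\mathbf{1}^T$. (Since $L$ is finite everywhere, $\partial L(\mathbf a)\neq\emptyset$, consistent with Nash strategies always existing.) One could also argue the reverse implication by hand: evaluating \eqref{eqn:supporthyp} at $\mathbf{b}=\mathbf{a}+t\mathbf{e}_i$ and $\mathbf{b}=\mathbf{a}+t\mathbf{1}$, together with the elementary monotonicity $L(\mathbf{a}')\le L(\mathbf{a}'')+\max_i(a'_i-a''_i)$, forces $\mathbf{p}$ to be a probability vector, after which the Fenchel-equality step applies.

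The only genuine care required --- and the ``obstacle'' such as it is --- is the bookkeeping around the extended-real-valued $K$: verifying that $-K$ is closed, proper and convex so that Fenchel--Moreau is legitimate, and cleanly matching the three descriptions ``attains the supremum in $L(\mathbf{a})$'', ``lies in $\partial L(\mathbf{a})$'', and ``is a maximin strategy of Alice in $\tilde{\mathtt U}$''. Everything else is one-line rearrangement; in particular the conjugate identity, which carries essentially the whole theorem, is just the observation that a constant shift of a column leaves the inner minimization over $\mathcal B$ unchanged.
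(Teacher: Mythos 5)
Your proposal is correct and follows essentially the same route as the paper: establish the conjugate identity $L(\mathbf{a})=\max_{\mathbf{p}}[K(\mathbf{p})+\sum_i p_i a_i]$ by the column-shift observation, then use Fenchel duality/biconjugation to turn the families of inequalities in \eqref{eqn:supporthyp} and \eqref{eqn:p2T5} into $K(\mathbf{p})\geq \Val(\mathtt U+\mathbf{b}\mathbf{1}^T)$ and $K(\mathbf{p})+\sum_i p_i b_i\geq w$, respectively. Your phrasing of Part 1 via the Fenchel--Young equality (and your explicit check that $-K$ is closed, proper and convex, plus the sup-of-affine argument for convexity of $L$) is a slightly more careful packaging of the same argument the paper gives, not a different method.
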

Observe that the second part of Theorem~\ref{theorem-sec4-2} is equivalent with Theorem~\ref{theorem-sec4-1}.

\begin{proof}[Proof of Theorem~\ref{theorem-sec4-2}]

We begin with the first part of the theorem. Using the max-min formulation for the value of a game $\tilde{\mathtt{U}}$, we have
\begin{align}L(a_1, a_2, \ldots, a_n)&=\max_{{p}_i\geq 0, \sum_i p_i=1}\min_j \sum_{i}p_i(u_{i,j}+a_i)\nonumber
\\&=\max_{{p}_i\geq 0, \sum_i p_i=1}\left[\left(\min_j \sum_{i}p_iu_{i,j}\right)+\sum_{i}p_ia_i\right]\nonumber
\\&=\max_{{p}_i\geq 0, \sum_i p_i=1}\left[K(p_1, p_2, \ldots, p_n)+\sum_{i}p_ia_i\right]\nonumber
\\&=\max_{{p}_i\in\mathbb{R}}\left[K(p_1, p_2, \ldots, p_n)+\sum_{i}p_ia_i\right],\label{eqn:Kneg}
\end{align}
where \eqref{eqn:Kneg} follows from the fact that $K(\mathbf{p})$ is minus infinity when $\mathbf{p}$ is not a probability distribution. This shows the duality of $L(\cdot)$ and $K(\cdot)$. Next, note that $K(\cdot)$ is a minimum of linear functions; hence it is a concave function. Convexity of $L(\cdot)$ can be directly seen from \eqref{eqn:supporthyp} which implies that at least one supporting hyperplane to its curve exists at any given point (since at least one Nash strategy exists for any arbitrary game). Thus, it remains to prove \eqref{eqn:supporthyp}.

Without loss of generality, it suffices to prove \eqref{eqn:supporthyp} for $\mathbf{b}=\mathbf{0}$ ($b_i=0$), and get the result for arbitrary $\mathbf{b}$ by changing variable $\mathtt{U} \rightarrow \mathtt{U}+\mathbf{b}\mathbf{1}^T$.
The inequality
\begin{align}L(a_1, a_2, \ldots, a_n)\geq L(0,0,\ldots, 0)+\sum_{i}a_ip_i, \qquad \forall \mathbf{a},\label{eqnLap}\end{align}
can be also expressed as
$$\min_{\mathbf{a}}\left(L(a_1, a_2, \ldots, a_n)-\sum_{i}a_ip_i\right)\geq L(0,0,\ldots, 0).$$
From the duality relation \eqref{eqn:Kneg} and utilizing the Fenchel's duality theorem, the left hand side is $K(p_1, p_2, \ldots, p_n)$. Thus, the expression is equivalent with \eqref{eqnLap} can be written as
$$K(p_1, p_2, \ldots, p_n)\geq L(0,0,\ldots, 0)=\Val(\mathtt{U}),$$
which is equivalent with $\mathbf{p}$ being a Nash strategy.

The proof for the second part of the theorem is similar. As before $b_i$ can be set to zero. Then, we can express \eqref{eqn:p2T5} as
$$\min_{\mathbf{a}}\left(L(a_1, a_2, \ldots, a_n)-\sum_{i}a_ip_i\right)\geq w.$$
From the duality relation \eqref{eqn:Kneg} and the Fenchel's duality theorem, this is equivalent with
$K(p_1, p_2, \ldots, p_n)\geq w,$
or $\mathbf{p}\in\mathcal{P}_{\mathtt U}(w)$.
\end{proof}

\subsection{Proof of Theorem~\ref{T:direct_sum}} \label{S:direct_sum}
Let $p_{A}(a)$ be an arbitrary distribution that secures payoff $w/2$ in game $\mathtt U$. Observe that $p_{A_1A_2}(a_1,a_2)=\mathbbm{1}(a_1=a_2)p_A(a_1)$ secures payoff $w$ in game $\mathtt U\oplus \mathtt U$, where $\mathbbm{1}(\cdot)$ is the indicator function. In this case, $H(p_{A_1A_2})=H(p_A)$; thus, the function $F_{\mathtt U\oplus \mathtt U}(w)$ is bounded above as follows:
 \begin{equation}\label{E:lessthan}
	F_{\mathtt U\oplus \mathtt U}(w) \leq F_{\mathtt U}(\frac{w}{2}).
\end{equation}
On the other hand, let $p^*_{A_1A_2}\in \mathcal{P}_{\mathtt U\oplus \mathtt U}(w)$ be a distribution with minimum entropy that secures arbitrary payoff $w$ in game $\mathtt U \oplus \mathtt U$, \emph{i.e.,} $H(p^*_{A_1A_2})=F_{\mathtt U \oplus \mathtt U}(w)$. Note that such a distribution $p^*_{A_1A_2}$ exists because $\mathcal{P}_{\mathtt U\oplus \mathtt U}(w)$ is compact. We have
\begin{align}
w&\leq \min_{j_1,j_2\in \mathcal{B}} \E_{p^*_{A_1A_2}} [u_{A_1,j_1}+u_{A_2,j_2}]\notag\\
&= \min_{j_1,j_2\in \mathcal{B}} \left(\E_{p^*_{A_1A_2}} [u_{A_1,j_1}]+ \E_{p^*_{A_1A_2}}[u_{A_2,j_2}]\right)\notag\\
& =\min_{j_1\in \mathcal{B}} \E_{p^*_{A_1}} [u_{A_1,j_1}] + \min_{j_2\in \mathcal{B}} \E_{p^*_{A_2}} [u_{A_2,j_2}],\label{eq:min_entropy_1}
\end{align}
where $p^*_{A_1}$ and $p^*_{A_2}$ are the marginal distributions of $p^*_{A_1A_2}$. Equation \eqref{eq:min_entropy_1} implies that there exists $ k\in \{1,2\}$ such that
\begin{equation*}
\min_{j_k\in \mathcal{B}} \E_{p^*_{A_k}} [u_{A_k,j_k}] \geq\frac{ w}{2}.
\end{equation*}
Thus, $p^*_{A_k}$ secures payoff $w/2$ in game $\mathtt U$. Hence,
\begin{equation}\label{E:greeterthan}
F_{\mathtt U\oplus \mathtt U}(w)=H(p^*_{A_1A_2}) \overset{(a)}{\geq} H(p^*_{A_k}) \geq F_{\mathtt U}(\frac{w}{2}),
\end{equation}
where $(a)$ follows from the properties of entropy.

Equations \eqref{E:lessthan} and \eqref{E:greeterthan} conclude $F_{\mathtt U\oplus \mathtt U}(w)=F_{\mathtt U}(w/2)$. The above line of proof can be extended for every natural number $k$ to prove $F_{\oplus^k \mathtt U}( w )=F_{\mathtt U}( w/k)$.

\subsection{Proof of Theorem~\ref{theorem:second_idea}} \label{sec:second_idea}
 If $v=\overline m$, then there exists a pure strategy $a\in \mathcal{A}$ that guarantees the maximum achievable payoff $\overline m$ in any permissible payoff table $\mathtt U'$. Pure strategies have zero entropy and $G_{\mathtt U}^{(4)}(w)=0$. The explicit expression that we wish to prove for $G_{\mathtt U}^{(4)}(w)$ also vanishes as $v$ tends to $\overline m$, hence, the claim holds when $v=\overline m$. Therefore, for the rest of the proof we suppose that $v<\overline m$.

We claim that an optimal table for the minimization problem in the statement of the theorem is as follows:
\begin{equation}
\mathtt U^* =\left[\begin{array}{cccccc}
\underline{m} & \overline{m} & \overline{m} &\overline{m} & \cdots& \overline{m}\\
\overline{m} & v &\overline{m} & \overline{m} & \cdots& \overline{m}\\
\overline{m} & \overline{m} &v & \overline{m} & \cdots&\overline{m} \\
\overline{m} & \overline{m} &\overline{m} & v & \cdots &\overline{m}\\
\vdots & \vdots &\vdots & \vdots & \ddots& \vdots\\
\overline{m} & \overline{m} &\overline{m} & \cdots & \cdots &v
\end{array}\right]_{m\times m}, \label{eqn:tablemm}
\end{equation}
where the above square table is of size $m=\lfloor (\overline m - v)/(\overline m-w)\rfloor+2$. We first show that there is an optimal payoff table with the structure given in \eqref{eqn:tablemm} for some value of $m$ and then optimize over $m$, the size of the table.

Take some arbitrary feasible table $\mathtt U'=[u'_{i,j}]$ and distribution $\mathbf{p}\in \mathcal P_{\mathtt U'}(w)$. Since $\min_{i\in \mathcal A} \min_{j\in \mathcal B} u'_{i,j}=\underline m$, there exists a row $i$ and a column $j$ such that $u'_{i,j}=\underline m$. By reordering the rows and columns of the table, we may assume that $u'_{1,1}=\underline m$. Furthermore, since $\max_{i\in \mathcal A} \min_{j\in \mathcal B} u'_{i,j}=v$, for all $i$, there exists a $j_i$ such that $u_{i,j_i}'\leq v$. Thus, for all $i\in \mathcal A$ and $j\in \mathcal B$ we have $u'_{i,j}\leq u''_{i,j}$, where $\mathtt U''=[u''_{i,j}]$ is defined as follows:
$$u''_{i,j}=\begin{cases}\underline m, &i=1, j=1,\\ \overline m, &i=1, j\neq 1,\\ \overline m,&i\neq 1, j\neq j_i,\\v,&i\neq 1,j=j_i.\end{cases}$$
The table $\mathtt U''$ has higher payoffs for Alice than the table $\mathtt U'$ in each cell. Hence, $\mathbf p\in \mathcal P_{\mathtt U''}(w)$. The table $\mathtt U''$ also belongs to the feasible set of the minimization problem because it has one row containing one entry with value $\underline m$ and all other rows contain an entry with value $v$.

Next, we can do two more operations: if there is a column with all $\overline{m}$ entries, we delete that column. Because we are restricting the action set of Bob, Alice can still secure payoff $w$ with the same distribution $\mathbf{p}$. If two rows of the table
$\mathtt U''$
 are the same, we can merge them together into one row. In other words, for instance the second and third rows are the same, \emph{i.e.,} $j_2=j_3$, we can merge the second and third rows together and assign the sum probability $p_2+p_3$ to this merged row. This will reduce the entropy of the probability vector $\mathbf p$ while leaving the payoff of Alice unchanged. With these two operations and row/column permutation, we either obtain the table given in \eqref{eqn:tablemm} for some value of $m$, or the following $m\times (m-1)$ table
\begin{equation}
\left[\begin{array}{cccccc}
\underline{m} & \overline{m} & \overline{m} &\overline{m} & \cdots& \overline{m}\\
v & \overline{m} &\overline{m} & \overline{m} & \cdots& \overline{m}\\
\overline{m} & v &\overline{m} & \overline{m} & \cdots&\overline{m} \\
\overline{m} & \overline{m} &v & \overline{m} & \cdots &\overline{m}\\
\vdots & \vdots &\vdots & \vdots & \ddots& \vdots\\
\overline{m} & \overline{m} &\overline{m} & \cdots & \cdots &v
\end{array}\right]_{m\times (m-1)}.\label{table-bad}
\end{equation}
One can directly verify that any probability distribution $\mathbf{p}$ for the payoff of Alice from the table in \eqref{table-bad} is less than or equal to the payoff of Alice from the table in \eqref{eqn:tablemm}.

Having showed that a table with the structure given in \eqref{eqn:tablemm} is optimal, we now optimize over $m$, the size of the table, and compute $G_{\mathtt U}^{(4)}(w)$. The constraint $\mathbf p=(p_1,p_2,\dots,p_m)\in \mathcal P_{\mathtt U}(w)$ is equivalent to:
\begin{align*}
&\underline m p_1+\overline m (1-p_1)\geq w,\\
&v p_i+\overline m (1-p_i)\geq w, \quad\forall i>1,
\end{align*}
which simplifies to
\begin{equation}\label{eq:p_constr}
p_1\leq \frac{\overline m-w}{\overline m - \underline m},\quad
p_i\leq \frac{\overline m-w}{\overline m - v}, \quad\forall i>1.
\end{equation}
We claim that without loss of generality, we can assume that $p_1=0$. Since $\frac{\overline m-w}{\overline m - v}\geq\frac{\overline m-w}{\overline m - \underline m}$, we have
$p_1\leq \frac{\overline m-w}{\overline m - v}.$ Now, if we have a distribution
 $(p_1,p_2,\dots,p_m)$ on a table of the form \eqref{eqn:tablemm} for some $m$, we can consider the distribution
 $(0,p_1,p_2,\dots,p_m)$ on a table with size $m+1$; the entropy of $(0,p_1,p_2,\dots,p_m)$ is the same as $(p_1,p_2,\dots,p_m)$ and achieves the payoff of $w$. Therefore, the problem essentially reduces to the following: what is the minimum possible entropy for the set of all probability distributions $(p_1,p_2,\dots,p_m)$ satisfying
\begin{equation}\label{eq:p_constr2}
p_1=0,\quad p_i\leq \frac{\overline m-w}{\overline m - v}, \quad\forall i=2,\dots,m.
\end{equation}
\color{black}
Note that we should choose $m$ large enough to ensure that the set of probability distributions satisfying \eqref{eq:p_constr2} is non-empty. \color{black} An optimal choice is to take $m=m^*$ for any $m^*$ satisfying $m^*\geq \lfloor (\overline m - v)/(\overline m-w)\rfloor+2$. The optimal pmf $\mathbf p^*$ is as follows:
$$p^*_i=\begin{cases}(\overline m-w)/(\overline m - v), &i=2,\dots,\lfloor (\overline m - v)/(\overline m-w)\rfloor+1,\\
1-\lfloor (\overline m - v)/(\overline m-w)\rfloor (\overline m-w)/(\overline m - v), &i=\lfloor (\overline m - v)/(\overline m-w)\rfloor+2\\0,&\textrm{otherwise}. \end{cases}$$
The distribution $\mathbf p^*$ is the pmf that includes as much elements with value $(\overline m-w)/(\overline m - v)$ as possible. This distribution has minimum entropy because it majorizes all other pmfs in $\mathcal P_{\mathtt U}(w)$ \cite[Theorem 8.0.1]{Nielsen2002}. In other words, for any distribution $\mathbf{p}$ satisfying \eqref{eq:p_constr2}, and for all $i\leq m^*$, we have $\sum_{\ell=1}^i p_{\ell}^{*\downarrow} \geq \sum_{\ell=1}^i p_{\ell}^{\downarrow}$, where for a probability mass function $\mathbf p =(p_1,p_2,\dots,p_m)$, we let $\mathbf p^{\downarrow}=(p_1^{\downarrow},p_2^{\downarrow},\dots,p_m^{\downarrow})$ be the vector whose elements are the elements of $\mathbf p$ reordered into non-decreasing order.

The entropy of $\mathbf{p}^*$ equals
\begin{align*}
H(\mathbf p^*) = -\left\lfloor\frac{\overline m-v}{\overline m-w}\right\rfloor &\frac{\overline m-w}{\overline m-v}\log \left(\frac{\overline m-w}{\overline m-v}\right)- \\
& \left(1-\left\lfloor\frac{\overline m-v}{\overline m-w}\right\rfloor \frac{\overline m-w}{\overline m-v}\right)\log\left(1-\left\lfloor\frac{\overline m-v}{\overline m-w}\right\rfloor \frac{\overline m-w}{\overline m-v}\right)
\end{align*}
as desired.

\subsection{Proof of Theorem~\ref{T:bounds}}\label{proofTbound}
Let $\mathbf{p}^*$ be a Nash strategy for player A where $H(\mathbf{p}^*)=\mathsf{h}^*$ and $i^*\in \arg\max_{i\in \mathcal{A}} \min_{j\in \mathcal{B}}u_{i,j}$ be the pure strategy that guarantees payoff $v$ for player A. For $v\leq w\leq w^*$ define $\alpha=(w^*-w)/(w^*-v)$ and $\mathbf{p}=\alpha \mathbf{e}_{i^*} + (1-\alpha)\mathbf{p}^*$ where $\mathbf{e}_{i^*}$ is a vector of all zero elements except for its $i^*$-th element which is $1$. The pmf $\mathbf{p}$ guarantees payoff $w$ since
$$ \mathbf{p}^T U=\alpha \mathbf{e}_{i^*}^T U + (1-\alpha){\mathbf{p}^*}^T U\geq \alpha v \mathbf{1} + (1-\alpha) w^* \mathbf{1}=w \mathbf{1}.$$
Using the properties of entropy one can inspect that:
$$H(\mathbf{p})\leq \alpha H(\mathbf{e}_{i^*})+(1-\alpha)H(\mathbf{p}^*)+H(\alpha,1-\alpha)=(1-\alpha)\mathsf{h}^*-\alpha \log_2(\alpha)-(1-\alpha)\log_2(1-\alpha).$$
Therefore, $F_{\mathtt U}(w)\leq H(\mathbf{p})\leq (1-\alpha)\mathsf{h}^*-\alpha \log_2(\alpha)-(1-\alpha)\log_2(1-\alpha)$. In addition, as $\mathsf{h}^*$ secures payoff $w^*\geq w$, we have $F_{\mathtt U}(w)\leq \mathsf{h}^*$. These two facts imply $F_{\mathtt U}(w)\leq Q_{\mathtt U}^{(1)}(w)$ once we substitute the value of $\alpha$.

From the definition of $\mathbf{p}^*_{\text{max},i}$, it follows that $\mathbf{p}^*_{\text{max},i}\in \mathcal{P}_{\mathtt U}( w )$. Therefore, for every $i\in \mathcal{A}$, $F_{\mathtt U}(w)\leq H(\mathbf{p}^*_{\text{max},i})$ and $F_{\mathtt U}(w)\leq Q_{\mathtt U}^{(2)}(w)$ follows.

For an arbitrary $j\in \mathcal{B}$, if $\mathbf{p}'\in \{\mathbf{p}\in\mathcal{P}_{\mathtt U}( w ): \sum_{i\in \mathcal A}p_iu_{i,j}= w \}$, then, since $\mathbf{p}'$ also belongs to $\mathcal{P}_{\mathtt U}( w )$, we have $F_{\mathtt U}(w)\leq H(\mathbf{p}')$. Hence, for every $j\in \mathcal{B}$:
$$F_{\mathtt U}(w)\leq \underset{\mathbf{p}\in\mathcal{P}_{\mathtt U}( w ): \sum_{i\in \mathcal A}p_iu_{i,j}= w}{\max}H(\mathbf{p}).$$
The above inequality is correct for every $j\in \mathcal{B}$, thus, $F_{\mathtt U}(w)\leq Q_{\mathtt U}^{(3)}(w)$.

\appendix

\setcounter{equation}{0}
\renewcommand{\theequation}{\Alph{section}.\arabic{equation}}
\renewcommand{\thesection}{Appendix \Alph{section}}

\section{Review of the randomness extraction literature}\label{sec:randomness_extraction}
In this appendix, we review some results about randomness extraction which plays a key role in the construction of optimal strategies for the maximizer in the repeated games studied in this paper. The results of this section are adopted from \cite{renner} who considered the randomness extraction problem in the framework of quantum information theory.

Randomness extraction is the process of deriving some almost fair random bits from a given source $X$. More precisely, let $X\in \mathcal{X}$ and $Y\in\mathcal{Y}$ be two random variables with joint distribution $p_{XY}$ and let $f:\mathcal X \to \{1,2,3,\ldots,2^\ell\}$ be a deterministic mapping such that for $B=f(X)$ and $\epsilon>0$ we have
$$\Big \|p_{BY}-p^{U}_Bp_Y\Big\|_{TV}\leq \epsilon,$$
where $p_{BY}$ is the joint distribution of $B$ and $Y$ and $p_B^{U}$ is the uniform distribution on $\{1,2,3,\ldots,2^{\ell}\}$. Then, we say $f$ extracts $\ell$ fair bits independent of $Y$ from $X$ with precision $\epsilon$.

To precede, we need the definition of conditional collision entropy for a given pair of random variables. The following subsection is devoted for this definition.

\subsection{\qquad\qquad\quad Conditional collision entropy}
Let $X\in \mathcal{X}$ and $Y\in \mathcal Y$ be two random variables with joint probability distribution $p_{XY}$, and let $q_Y$ be an arbitrary probability distribution on the sample space $\mathcal Y$. The conditional collision entropy of $p_{XY}$ given $q_Y$ is defined as follows:\footnote{In this and upcoming definitions, we set $q_Y(y)/p_{XY}(x,y)$ to be zero if $q_Y(y)=p_{XY}(x,y)=0$, and infinity if $q_Y(y)>0$ while $p_{XY}(x,y)=0$.}
$$H_2(p_{XY}|q_Y)=-\log \sum_{(x,y) \in \mathcal X\times \mathcal Y} \frac{p_{XY}(x,y)^2}{q_Y(y)},$$
and the conditional collision entropy of $p_{XY}$ given $Y$ is defined as:
$$H_{2}(p_{XY}|Y)=\max_{q_Y\in \Delta(\mathcal Y) } H_{2}(p_{XY}|q_Y).$$

The conditional collision entropy, as defined above, is not a continuous function of the input probability distribution and a slight change of the probability distribution might result in a large deviation in its conditional collision entropy. The smoothed version of conditional collision entropy resolves this drawback. Let $\epsilon$ be an arbitrary positive number and $p_{XY}$ and $q_Y$ be some probability distributions on $\mathcal{X}\times \mathcal{Y}$ and $\mathcal{Y}$, respectively. The $\epsilon$-smooth conditional collision entropy of $p_{XY}$ given $q_Y$ is
$$H^{\epsilon}_{2}(p_{XY}|q_Y)=\max_{\overline{p}_{XY}\in \mathcal B_{\epsilon}(p_{XY})} H_{2}(\overline p_{XY}|q_Y),$$
where
$$\mathcal B_{\epsilon}(p_{XY})=\{\overline{p}_{XY}\in \Delta(\mathcal X\times \mathcal Y); \| \overline{p}_{XY}-p_{XY}\|_{TV} \leq \epsilon\}.$$
As before, the $\epsilon$-smooth conditional collision entropy of $p_{XY}$ given $Y$ is defined as
$$H^{\epsilon}_{2}(p_{XY}|Y) = \max_{q_Y\in \Delta(\mathcal Y)}H^{\epsilon}_{2}(p_{XY}|q_Y).$$

In the following remark, we show that for i.i.d.\ sources the smoothed conditional collision entropy is related to the Shannon entropy of the source.
\begin{remark}\label{remark:iid_min}
Let $(X^n,Y^n)\in \mathcal X^n\times \mathcal Y^n$ be drawn i.i.d.\ from probability distribution $p_{XY}=p_Yp_{X|Y}$ and for arbitrary $\epsilon >0$, the set of typical sequences, $\mathcal \tau_{\epsilon}^{(n)}$, be defined as
$$\mathcal \tau_{\epsilon}^{(n)}=\left\{(x^n,y^n)\in \mathcal X^n\times \mathcal Y^n: \left|-\frac{1}{n}\log p_{X^n|Y^n}(x^n|y^n)-H(X|Y)\right|\leq \epsilon\right\},$$
where $p_{X^n|Y^n}(x^n|y^n)=\prod_{i=1}^n p_{X|Y}(x_i|y_i)$ is the conditional distribution of $X^n$ given $Y^n$. Considering that $\E[-\log p_{X|Y}(X|Y)] = H(X|Y)$ and
$$-\frac{1}{n}\log p_{X^n|Y^n}(X^n|Y^n)=\frac{1}{n}\sum_{i=1}^n-\log p_{X|Y}(X_i|Y_i),$$
the weak law of large numbers implies that for sufficiently large $n$,
\begin{equation} \label{eq:pr_typical}
\Pr \left[(X^n,Y^n)\in \tau_{\epsilon}^{(n)}\right]\geq 1-\epsilon.
\end{equation}
Next, let $\overline p_{X^nY^n}$ be a probability distribution on $\mathcal X^n\times \mathcal Y^n$ defined as
$$\overline p_{X^nY^n}(x^n,y^n)=\begin{cases}\frac{p_{X^nY^n}(x^n,y^n)}{\Pr \left[(X^n,Y^n)\in \tau_{\epsilon}^{(n)}\right]}&(x^n,y^n)\in \tau_{\epsilon}^{(n)},\\0&\textrm{otherwise.}\end{cases}$$
The definition above along with Equation \eqref{eq:pr_typical} implies that $\|\overline p_{X^nY^n}-p_{X^nY^n}\|_{TV} \leq \epsilon$ holds for sufficiently large $n$. Therefore, for sufficiently large $n$, we have
\begin{align*}
H^{\epsilon}_{2}(p_{X^nY^n}|Y^n)&\geq H_{2}(\overline p_{X^nY^n}|\overline p_{Y^n})\\
&= -\log \left(\sum_{(x^n,y^n)\in \mathcal X^n\times \mathcal Y^n} \overline p_{X^nY^n}(x^n,y^n)\frac{\overline p_{X^nY^n}(x^n,y^n)}{\overline p_{Y^n}(y^n)}\right)\\
&\geq -\log \left(\max_{(x^n,y^n)\in \mathcal X^n\times \mathcal Y^n} \overline p_{X^n|Y^n}(x^n|y^n)\right)\\
&=-\log \left(\max_{(x^n,y^n)\in \tau_{\epsilon}^{(n)}} p_{X^n|Y^n}(x^n|y^n)\right)\\
&\geq n(H(X|Y)-\epsilon).
\end{align*}
\end{remark}

\subsection{\qquad\qquad\quad Randomness extraction}\label{sec:re}
In this subsection, we present the main tools regarding the extraction of randomness from a given source with known distribution.
\begin{theorem}[Theorem 5.5.1, \cite{renner}]\label{theorem:leftover}
Let $X\in \mathcal X$ and $Y\in \mathcal Y$ be two random variables with joint distribution $p_{XY}$. For arbitrary $\ell$ let $F:\mathcal X\to \{1,2,3,\dots,2^{\ell}\}$ be a random mapping constructed by assigning to $F(x)$ uniformly at random one element of $\{1,2,3,\dots,2^{\ell}\}$ and independently for distinct inputs $x,x'\in \mathcal X$. Then, we have
$$\sum_{f} p_F(f)\left\|p_{f(X)Y}-p^{U}_Bp_Y\right\|_{TV}\leq \frac{1}{2}2^{-\frac{1}{2}\left(H_{2}(p_{XY}|Y)-\ell\right)},$$
where the summation is over all deterministic mappings $f:\mathcal X\to \{1,2,3,\dots,2^{\ell}\}$, $p_F$ is the probability distribution of the random mapping $F$, $p_{f(X)Y}$ is the joint distribution of $Y$ and $f(X)$, and $p^{U}_B$ is the uniform distribution on the set $\{1,2,3,\ldots,2^{\ell}\}$. Therefore, there exists a deterministic mapping $f:\mathcal X\to \{1,2,3,\dots,2^{\ell}\}$ such that
$$\left\|p_{f(X)Y}-p^{U}_Bp_Y\right\|_{TV}\leq \frac{1}{2}2^{-\frac{1}{2}\left(H_{2}(p_{XY}|Y)-\ell\right)}.$$
\end{theorem}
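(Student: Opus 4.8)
The statement is the classical leftover hash lemma with side information, and the natural route is the standard ``collision probability'' argument: first bound the total variation distance by a $q_Y$-weighted $\ell_2$ quantity, then average over the random function $F$ using that $F$ is pairwise independent, and finally optimize the auxiliary distribution $q_Y$ and derandomize. \textbf{Step 1 (from total variation to a conditional $\ell_2$ bound).} Fix a deterministic $f$, write $B=f(X)$, recall $p^U_B(b)=2^{-\ell}$, and let $q_Y$ be any distribution on $\mathcal Y$ with $\supp(p_Y)\subseteq\supp(q_Y)$. Writing each term as $q_Y(y)^{1/2}\cdot q_Y(y)^{-1/2}(\cdots)$ and applying Cauchy--Schwarz, together with $\sum_{b,y}q_Y(y)=2^{\ell}$, gives
\[
\bigl\|p_{BY}-p^{U}_Bp_Y\bigr\|_{TV}
=\tfrac12\sum_{b,y}\bigl|p_{BY}(b,y)-2^{-\ell}p_Y(y)\bigr|
\le \tfrac12\,2^{\ell/2}\Bigl(\,\sum_{b,y}\frac{\bigl(p_{BY}(b,y)-2^{-\ell}p_Y(y)\bigr)^2}{q_Y(y)}\Bigr)^{1/2}.
\]
Averaging over $F$ and pulling $\E_F$ inside the square root by Jensen's inequality reduces the task to bounding $\E_F\sum_{b,y}\bigl(p_{F(X)Y}(b,y)-2^{-\ell}p_Y(y)\bigr)^2/q_Y(y)$.

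\textbf{Step 2 (the collision computation).} Write $p_{F(X)Y}(b,y)=\sum_x\mathbbm{1}[F(x)=b]\,p_{XY}(x,y)$, expand the square, and use $\E_F\mathbbm{1}[F(x)=b]=2^{-\ell}$ and, for $x\neq x'$, $\E_F\mathbbm{1}[F(x)=b]\mathbbm{1}[F(x')=b]=2^{-2\ell}$. Then $\E_F p_{F(X)Y}(b,y)=2^{-\ell}p_Y(y)=p^U_B(b)p_Y(y)$, and after the cross term and the square term proportional to $p_Y(y)^2$ cancel one obtains $\E_F\bigl(p_{F(X)Y}(b,y)-2^{-\ell}p_Y(y)\bigr)^2=(2^{-\ell}-2^{-2\ell})\sum_xp_{XY}(x,y)^2\le 2^{-\ell}\sum_xp_{XY}(x,y)^2$. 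Summing over the $2^{\ell}$ values of $b$ and then over $y$ with weights $q_Y(y)^{-1}$ yields
\[
\E_F\sum_{b,y}\frac{\bigl(p_{F(X)Y}(b,y)-2^{-\ell}p_Y(y)\bigr)^2}{q_Y(y)}\le\sum_{x,y}\frac{p_{XY}(x,y)^2}{q_Y(y)}=2^{-H_2(p_{XY}|q_Y)}.
\]
Combining with Step 1 gives $\E_F\|p_{F(X)Y}-p^U_Bp_Y\|_{TV}\le\tfrac12\,2^{-\frac12(H_2(p_{XY}|q_Y)-\ell)}$ for every admissible $q_Y$; taking $q_Y$ to be a maximizer in $H_2(p_{XY}|Y)=\max_{q_Y}H_2(p_{XY}|q_Y)$ proves the averaged bound, and since a nonnegative random variable cannot always exceed its mean there is a deterministic $f$ achieving at most that value.

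The computation in Step 2 is the technical core but is routine once pairwise independence of $F$ is invoked. I expect the only genuinely delicate point to be the bookkeeping around the free distribution $q_Y$: one should observe that the maximizer in the definition of $H_2(p_{XY}|Y)$ may be taken with $\supp(q_Y)=\supp(p_Y)$ (otherwise $H_2(p_{XY}|q_Y)=-\infty$ cannot be the maximum, and the weighted sums are ill-defined), and keep careful track that the $\ell_2$ estimate is performed against this conditional reference rather than against $p_Y$ itself --- which is precisely the feature that allows the side information $Y$ to be carried through the entire argument.
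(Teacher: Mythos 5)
Your proposal is correct and takes essentially the same route as the paper's proof: the Cauchy--Schwarz weighted-$\ell_2$ bound with an auxiliary $q_Y$, the collision/second-moment computation over the random hash using $\Pr[F(x)=F(x')]=2^{-\ell}$ for $x\ne x'$, Jensen's inequality to pull the expectation over $F$ inside the square root, and finally optimizing $q_Y$ and derandomizing by the averaging argument. The only difference is cosmetic: you bound the variance of the joint pmf $p_{F(X)Y}(b,y)$ directly, while the paper factors out $p_Y(y)$ and expresses the same cancellation through the conditional collision probability $\sum_b p_{f(X)|Y}(b|y)^2$.
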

\begin{proof}
To begin the proof we need a corollary of the Cauchy-Schwarz inequality stated in the following lemma.
\begin{lemma}\label{lemma:cauchy}
Let $s_1,s_2,\dots,s_n$ and $\gamma_1,\gamma_2,\dots,\gamma_n$ be two sequences of real numbers such that for all $i=1,2,\dots,n$, $\gamma_i\geq 0$ and $\sum_{i=1}^n\gamma_i\leq k$, where $k>0$. Then,
$$\sum_{i=1}^n |s_i|\leq \sqrt{k(\sum_{i=1}^n\frac{s_i^2}{\gamma_i})},$$
where $0/0$ is taken as being zero and $1/0=\infty$.
\end{lemma}
\begin{proof}
If for some $i$, $|s_i|>0$ and $\gamma_i=0$ then the right hand side of the inequality in the statement of the lemma becomes infinity and hence it holds. Otherwise, let for all $i=1,\dots,n$, if $|s_i|>0$, then $\gamma_i>0$; hence, by using Cauchy-Schwarz inequality we have
$$\sum_{i=1}^n |s_i|=\sum_{i=1}^n \sqrt{\gamma_i}\frac{|s_i|}{\sqrt{\gamma_i}}\leq \sqrt{(\sum_{i=1}^n\gamma_i)(\sum_{i=1}^n\frac{s_i^2}{\gamma_i})}\leq \sqrt{k(\sum_{i=1}^n\frac{s_i^2}{\gamma_i})},$$
where we take $0/0=0$ in accordance with the statement of the lemma.
\end{proof}
Let $q_Y$ be an arbitrary probability distribution on $\mathcal Y$ and $f$ be an arbitrary realization of $F$, then, from Lemma~\ref{lemma:cauchy} we have
\begin{align}\label{eq:chain}
\|p_{f(X)Y}-p_{B}^{U}p_Y\|_{TV} &= \frac{1}{2}\sum_{b=1}^{2^{\ell}}\sum_{y\in \mathcal Y}|p_{f(X)Y}(b,y)-p_{B}^{U}(b)p_Y(y)| \notag\\
&\leq \frac{1}{2}\sqrt{2^{\ell}\left(\sum_{b=1}^{2^{\ell}}\sum_{y\in \mathcal Y}\frac{|p_{f(X)Y}(b,y)-p_{B}^{U}(b)p_Y(y)|^2}{q_Y(y)}\right)}\notag\\
&=\frac{1}{2}\sqrt{2^{\ell}\left(\sum_{y\in \mathcal Y}\frac{p_Y(y)^2}{q_Y(y)}\sum_{b=1}^{2^{\ell}}\left|p_{f(X)|Y}(b|y)-p_{B}^{U}(b)\right|^2\right)}.
\end{align}
The term $\sum_{b=1}^{2^{\ell}}|p_{f(X)|Y}(b|y)-p_{B}^{U}(b)|^2$ is characterized as
\begin{align}
\sum_{b=1}^{2^{\ell}}\Big|p_{f(X)|Y}(b|y)&-p_{B}^{U}(b)\Big|^2\notag\\
&=\sum_{b=1}^{2^{\ell}}\left|p_{f(X)|Y}(b|y)-2^{-\ell}\right|^2\notag\\
&=\sum_{b=1}^{2^{\ell}}\left(p_{f(X)|Y}(b|y)^2-2\times2^{-\ell}p_{f(X)|Y}(b|y)+2^{-2\ell}\right)\notag\\
&=\left(\sum_{b=1}^{2^{\ell}} p_{f(X)|Y}(b|y)^2\right)-2^{-\ell}\label{eq:detail}\\
&=\sum_{x,x'\in \mathcal X} p_{X|Y}(x|y)p_{X|Y}(x'|y)\left(\mathbbm{1}(f(x)=f(x'))-2^{-\ell}\right)\label{eq:detail2}\\
&=\sum_{x\in \mathcal X} p_{X|Y}(x|y)^2\left(1-2^{-\ell}\right)\notag\\
&\qquad\qquad+ \sum_{x\neq x'} p_{X|Y}(x|y)p_{X|Y}(x'|y)\left(\mathbbm{1}(f(x)=f(x'))-2^{-\ell}\right)\notag\\
&\leq \sum_{x\in \mathcal X} p_{X|Y}(x|y)^2 + \sum_{x\neq x'} p_{X|Y}(x|y)p_{X|Y}(x'|y)\left(\mathbbm{1}(f(x)=f(x'))-2^{-\ell}\right),\label{eq:chain2}
\end{align}
where $\mathbbm 1(\cdot)$ is the indicator function \color{black} and Equation~\eqref{eq:detail} is implied by the following fact:
$$\sum_{b=1}^{2^{\ell}} (-2\times2^{-\ell}p_{f(X)|Y}(b|y)+2^{-2\ell}) = -2\times 2^{-\ell}\left(\sum_{b=1}^{2^{\ell}}p_{f(X)|Y}(b|y)\right) +2^{-2\ell}\times 2^{\ell} =-2\times 2^{-\ell}+2^{-\ell}=-2^{-\ell} ,$$
where we used $\sum_{b=1}^{2^{\ell}}p_{f(X)|Y}(b|y)=1$. To justify Equation~\eqref{eq:detail2}, note that $\sum_{b=1}^{2^{\ell}} p_{f(X)|Y}(b|y)^2$ is the probability of the event $f(X')=f(X'')$, where $X'$ and $X''$ are i.i.d.\ random variables with distribution $p_{X|Y=y}$, \emph{i.e.}, $p_{X'X''}(x,x')=p_{X|Y}(x|y)p_{X|Y}(x'|y)$. On the other hand, the probability of the event $f(X')=f(X'')$ is also characterized as
$$\sum_{x,x'\in \mathcal X} p_{X|Y}(x|y)p_{X|Y}(x'|y)\mathbbm{1}(f(x)=f(x')).$$
Thus,
$$\sum_{b=1}^{2^{\ell}} p_{f(X)|Y}(b|y)^2=\sum_{x,x'\in \mathcal X} p_{X|Y}(x|y)p_{X|Y}(x'|y)\mathbbm{1}(f(x)=f(x')).$$
The above equation along with the fact that $\sum_{x,x'\in \mathcal X} p_{X|Y}(x|y)p_{X|Y}(x'|y)=1$ implies Equation~\eqref{eq:detail2}. \color{black}

Next, by taking the average of $\sum_{b=1}^{2^{\ell}}\big|p_{f(X)|Y}(b|y)-p_{B}^{U}(b)\big|^2$ with respect to $F$ we have
\begin{align}\label{eq:chain3}
\sum_{f}p_F(f)&\sum_{b=1}^{2^{\ell}}\Big|p_{f(X)|Y}(b|y)-p_{B}^{U}(b)\Big|^2\notag\\
&\leq \sum_{x\in \mathcal X} p_{X|Y}(x|y)^2 + \sum_{x\neq x'} p_{X|Y}(x|y)p_{X|Y}(x'|y)\left(\sum_{f}p_F(f)\mathbbm{1}(f(x)=f(x'))-2^{-\ell}\right)\notag\\
&= \sum_{x\in \mathcal X} p_{X|Y}(x|y)^2 + \sum_{x\neq x'} p_{X|Y}(x|y)p_{X|Y}(x'|y)\left(\Pr[F(x)=F(x')]-2^{-\ell}\right)\notag\\
&= \sum_{x\in \mathcal X} p_{X|Y}(x|y)^2 ,
\end{align}
where the inequality follows from \eqref{eq:chain2}, and the last equality holds because for distinct $x,x'\in \mathcal X$, $F(x)$ coincides with $F(x')$ with probability $2^{-\ell}$. Therefore,
\begin{align*}
\sum_{f} p_F(f)\left\|p_{f(X)Y}-p^{U}_Bp_Y\right\|_{TV} &\leq \frac{1}{2}\sum_{f} p_F(f)\sqrt{2^{\ell}\left(\sum_{y\in \mathcal Y}\frac{p_Y(y)^2}{q_Y(y)}\sum_{b=1}^{2^{\ell}}\left|p_{f(X)|Y}(b|y)-p_{B}^{U}(b)\right|^2\right)}\\
&\leq \frac{1}{2}\sqrt{2^{\ell}\left(\sum_{y\in \mathcal Y}\frac{p_Y(y)^2}{q_Y(y)}\sum_{f} p_F(f)\sum_{b=1}^{2^{\ell}}\left|p_{f(X)|Y}(b|y)-p_{B}^{U}(b)\right|^2\right)}\\
&\leq \frac{1}{2}\sqrt{2^{\ell}\left(\sum_{y\in \mathcal Y}\frac{p_Y(y)^2}{q_Y(y)}\sum_{x\in \mathcal X} p_{X|Y}(x|y)^2\right)}\\
&= \frac{1}{2}\sqrt{2^{\ell}\left(\sum_{(x,y)\in \mathcal X\times \mathcal Y}\frac{p_{XY}(x,y)^2}{q_Y(y)}\right)}\\
&=\frac{1}{2}2^{-\frac{1}{2}\left(H_{2}(p_{XY}|q_Y)-\ell\right)},
\end{align*}
where the first inequality follows from \eqref{eq:chain}, the second inequality follows from utilizing the Jensen's inequality for concave function $\sqrt{\cdot}$, the third inequality follows from \eqref{eq:chain3}, and the second equality follows from the definition of $H_2(p_{XY}|q_Y)$. Since $q_Y$ is arbitrary, the claim of the theorem is implied by the above inequality.
\end{proof}
Now, we can utilize Theorem~\ref{theorem:leftover} to obtain another bound in terms of the smoothed conditional collision entropy. Let $\epsilon>0$ be an arbitrary real number, and $X\in \mathcal X$ and $Y\in \mathcal Y$ be two random variables with joint probability distribution $p_{XY}$. Let $\overline p_{XY} \in \mathcal B_{\epsilon}(p_{XY})$ be the probability distribution such that
$$H_{2}^{\epsilon} (p_{XY}|Y)= H_{2}(\overline p_{XY}|Y),$$
and $f:\mathcal X\to \{1,2,3,\ldots,2^{\ell}\}$ be the mapping of Theorem~\ref{theorem:leftover} satisfying
\begin{equation}\label{eq:leftover1}
\left\|\overline p_{f(X)Y}-p^{U}_B\overline p_Y\right\|_{TV}\leq \frac{1}{2}2^{-\frac{1}{2}\left(H_{2}(\overline p_{XY}|Y)-\ell\right)} = \frac{1}{2}2^{-\frac{1}{2}\left(H^{\epsilon}_{2}(p_{XY}|Y)-\ell\right)},
\end{equation}
where $\overline p_{f(X)Y}$ is the joint distribution of $(f(X),Y)$ when $(X,Y)$ is distributed according to $\overline p_{XY}$, and $\overline p_Y$ is the marginal distribution of $\overline p_{XY}$ with respect to $Y$. Then, we have
\begin{align*}
\left\|p_{f(X)Y}-p^{U}_B p_Y\right\|_{TV}&\leq \left\|p_{f(X)Y}-\overline p_{f(X)Y}\right\|_{TV}+\left\|\overline p_{f(X)Y} - p^{U}_B \overline p_Y \right\|_{TV} + \left\|p^{U}_B \overline p_Y - p^{U}_B p_Y\right\|_{TV}\\
&\leq 2\epsilon + \frac{1}{2}2^{-\frac{1}{2}\left(H^{\epsilon}_{2}(p_{XY}|Y)-\ell\right)},
\end{align*}
where the first inequality results from the triangular inequality for the total variation distance, and the second inequality follows from \eqref{eq:leftover1} and the following two facts:
\begin{itemize}
\item $\left\|p_{f(X)Y}-\overline p_{f(X)Y}\right\|_{TV}\leq \left\|p_{XY}-\overline p_{XY}\right\|_{TV}\leq \epsilon$ : This is a consequence of the forth property of the total variation distance in Lemma~\ref{lemma:tv_p}.
\item $\left\|p^{U}_B \overline p_Y - p^{U}_B p_Y\right\|_{TV} = \left\|\overline p_Y -p_Y\right\|_{TV}\leq \left\|p_{XY}-\overline p_{XY}\right\|_{TV}\leq \epsilon$ : The equality and the first inequality follow from the fifth and the second property (respectively) of the total variation distance in Lemma~\ref{lemma:tv_p}.
\end{itemize}

Therefore, the following corollary of Theorem~\ref{theorem:leftover} is concluded.
\begin{corollary}\label{corollary:smooth}
For random variables $X\in\mathcal X$ and $Y\in \mathcal Y$ with joint distribution $p_{XY}$ and $\epsilon>0$, there exists a deterministic mapping $f:\mathcal X\to \{1,2,.\ldots,2^{\ell}\}$ such that
$$\left\|p_{f(X)Y}-p^{U}_B p_Y\right\|_{TV} \leq 2\epsilon + \frac{1}{2}2^{-\frac{1}{2}\left(H^{\epsilon}_{2}(p_{XY}|Y)-\ell\right)},$$
where $p_B^{U}$ is the uniform distribution on $\{1,2,\dots,2^{\ell}\}.$
\end{corollary}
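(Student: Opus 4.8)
The plan is to obtain the corollary as a routine ``smoothing'' of the leftover hash bound of Theorem~\ref{theorem:leftover}, using only the monotonicity properties of total variation collected in Lemma~\ref{lemma:tv_p}. First I would fix a distribution $\overline p_{XY}\in\mathcal B_\epsilon(p_{XY})$ that attains the smooth conditional collision entropy, i.e.\ $H_2^\epsilon(p_{XY}|Y)=H_2(\overline p_{XY}|Y)$. Such a maximizer exists because $\mathcal B_\epsilon(p_{XY})$ is compact and, after interchanging the two maxima appearing in the definitions, one has $H_2^\epsilon(p_{XY}|Y)=\max_{\overline p_{XY}\in\mathcal B_\epsilon(p_{XY})}H_2(\overline p_{XY}|Y)$, with the inner objective $H_2(\cdot|Y)=\max_{q_Y}H_2(\cdot|q_Y)$ upper semicontinuous on the ball.

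Next I would apply Theorem~\ref{theorem:leftover} to a pair $(X,Y)$ distributed according to $\overline p_{XY}$ rather than $p_{XY}$. This yields a \emph{single} deterministic map $f:\mathcal X\to\{1,\dots,2^\ell\}$ with $\|\overline p_{f(X)Y}-p_B^U\overline p_Y\|_{TV}\le\frac12 2^{-\frac12(H_2(\overline p_{XY}|Y)-\ell)}=\frac12 2^{-\frac12(H_2^\epsilon(p_{XY}|Y)-\ell)}$. The point is that this same $f$ can then be analysed under the true distribution $p_{XY}$.

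Finally I would insert $\overline p_{f(X)Y}$ and $p_B^U\overline p_Y$ as intermediaries and apply the triangle inequality:
$$\|p_{f(X)Y}-p_B^U p_Y\|_{TV}\le\|p_{f(X)Y}-\overline p_{f(X)Y}\|_{TV}+\|\overline p_{f(X)Y}-p_B^U\overline p_Y\|_{TV}+\|p_B^U\overline p_Y-p_B^U p_Y\|_{TV}.$$
The middle term is bounded by the leftover-hash estimate above. For the first term, since $(x,y)\mapsto(f(x),y)$ is a deterministic function of the joint variable, Property~4 of Lemma~\ref{lemma:tv_p} gives $\|p_{f(X)Y}-\overline p_{f(X)Y}\|_{TV}\le\|p_{XY}-\overline p_{XY}\|_{TV}\le\epsilon$. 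For the third term, Property~5 of Lemma~\ref{lemma:tv_p} strips off the common factor $p_B^U$, and Property~2 (marginalization does not increase total variation) gives $\|\overline p_Y-p_Y\|_{TV}\le\|p_{XY}-\overline p_{XY}\|_{TV}\le\epsilon$. Adding the three bounds produces $2\epsilon+\frac12 2^{-\frac12(H_2^\epsilon(p_{XY}|Y)-\ell)}$, as claimed.

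There is essentially no analytic difficulty once Theorem~\ref{theorem:leftover} is in hand; the only points requiring care are the existence of the smoothing distribution $\overline p_{XY}$ realizing the maximum, and the bookkeeping that matches each of the three total-variation terms with the correct property of Lemma~\ref{lemma:tv_p}—in particular the observation that hashing, which acts only on the $X$-coordinate, is a legitimate deterministic post-processing of $(X,Y)$.
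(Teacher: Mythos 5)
Your proposal is correct and follows essentially the same route as the paper: pick the smoothing distribution $\overline p_{XY}$ attaining $H_2^{\epsilon}(p_{XY}|Y)$, apply Theorem~\ref{theorem:leftover} under $\overline p_{XY}$ to get a single deterministic $f$, and then use the triangle inequality with the same three terms, bounded by Properties~4, and 5 together with 2, of Lemma~\ref{lemma:tv_p}. Your added remark on the existence of the maximizer (via interchanging the two maxima and compactness of $\mathcal B_\epsilon(p_{XY})$) is a harmless refinement the paper leaves implicit.
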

Next, we utilize Corollary~\ref{corollary:smooth} in combination with Remark~\ref{remark:iid_min} to conclude that for the random sequence $(X^n,Y^n)$ drawn i.i.d.\ from $p_{XY}$ there exists a mapping that extracts almost $nH(X|Y)$ bits from $X^n$ with desired precision given that $n$ is sufficiently large. \color{black} This is the claim of Lemma~\ref{L:yasayi} in Section~\ref{S:mapping}. We finish this section by restating Lemma~\ref{L:yasayi} and providing its formal proof. \color{black}
\begingroup
\def\thetheorem{\ref{L:yasayi}}
\begin{lemma}
Consider the correlated random sequences $X^n$ and $Y^n$ drawn i.i.d.\ from respective spaces $\mathcal{X}$ and $\mathcal{Y}$ by joint probability distribution $p_{XY}$. Let $Q^{(n)}$ be a random variable independent of $Y^n$ and uniformly distributed on $\{1,2,\dots,2^{Rn}\}$, where $R<H(X|Y)$ is a real number; then, there exist mappings $\mathsf B_n:\mathcal{X}^n\to \{1,2,\dots,2^{Rn}\}$ such that
$$\lim_{n\to \infty} \|p_{\mathsf B_n(X^n)Y^n}-p_{Q^{(n)}}p_{Y^n}\|_{TV} =0.$$
\end{lemma}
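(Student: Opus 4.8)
The plan is to derive Lemma~\ref{L:yasayi} directly from the smoothed leftover-hash bound of Corollary~\ref{corollary:smooth}, applied to the block pair $(X^n,Y^n)$, combined with the collision-entropy estimate of Remark~\ref{remark:iid_min}. I would start with a trivial reduction: since $Q^{(n)}$ is uniform on $\{1,\dots,2^{Rn}\}$ and independent of $Y^n$, we have $p_{Q^{(n)}}p_{Y^n}=p^{U}_Bp_{Y^n}$, where $p^{U}_B$ denotes the uniform law on $\{1,\dots,2^{Rn}\}$, so the target is equivalent to constructing maps $\mathsf B_n:\mathcal X^n\to\{1,\dots,2^{Rn}\}$ with $\|p_{\mathsf B_n(X^n)Y^n}-p^{U}_Bp_{Y^n}\|_{TV}\to 0$. (As elsewhere in the paper, one may take $Rn$ to be a natural number for large $n$; the rounding is immaterial.)

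Next I would fix a number $\epsilon$ with $0<\epsilon<H(X|Y)-R$, which exists precisely because $R<H(X|Y)$, and apply Corollary~\ref{corollary:smooth} to the pair $(X^n,Y^n)$ with output length $\ell=Rn$ and precision $\epsilon$. This yields a deterministic map $\mathsf B_n^{(\epsilon)}$ with $\|p_{\mathsf B_n^{(\epsilon)}(X^n)Y^n}-p^{U}_Bp_{Y^n}\|_{TV}\le 2\epsilon+\tfrac12\,2^{-\frac12(H^{\epsilon}_2(p_{X^nY^n}|Y^n)-Rn)}$. By Remark~\ref{remark:iid_min}, for all $n$ large enough $H^{\epsilon}_2(p_{X^nY^n}|Y^n)\ge n(H(X|Y)-\epsilon)$, so the exponent there is at least $\tfrac{n}{2}(H(X|Y)-\epsilon-R)$, a positive multiple of $n$; hence the second term decays exponentially, and there is $N(\epsilon)$ such that $\|p_{\mathsf B_n^{(\epsilon)}(X^n)Y^n}-p^{U}_Bp_{Y^n}\|_{TV}\le 3\epsilon$ for every $n\ge N(\epsilon)$.

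Finally I would close the argument by diagonalizing in $\epsilon$: choose $\epsilon_k\downarrow 0$ with each $\epsilon_k<H(X|Y)-R$ and an increasing sequence $n_k$ with $n_k\ge N(\epsilon_k)$, and for $n\in[n_k,n_{k+1})$ set $\mathsf B_n:=\mathsf B_n^{(\epsilon_k)}$; then $\|p_{\mathsf B_n(X^n)Y^n}-p^{U}_Bp_{Y^n}\|_{TV}\le 3\epsilon_k\to 0$, which is what is claimed. An equivalent route is to let the smoothing parameter vanish with $n$, e.g.\ $\epsilon_n=n^{-1/4}$: since the fluctuations of $-\tfrac1n\log p_{X^n|Y^n}(X^n|Y^n)$ about $H(X|Y)$ are $O(n^{-1/2})$, one still has $\Pr[(X^n,Y^n)\in\tau^{(n)}_{\epsilon_n}]\to 1$, so Remark~\ref{remark:iid_min} applies with $\epsilon=\epsilon_n$ and no diagonalization is needed.

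The only step that needs care is the middle one: confirming that the typical-set truncation in Remark~\ref{remark:iid_min} really does give the linear lower bound $H^{\epsilon}_2(p_{X^nY^n}|Y^n)\gtrsim n(H(X|Y)-\epsilon)$ — this rests on the weak law of large numbers for $-\tfrac1n\log p_{X^n|Y^n}(X^n|Y^n)$ — and then keeping straight the two roles played by $\epsilon$, namely the smoothing radius and the additive slack $2\epsilon$ in Corollary~\ref{corollary:smooth}. Everything after that is mechanical substitution.
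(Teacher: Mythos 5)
Your proposal is correct and follows essentially the same route as the paper's own proof: apply Corollary~\ref{corollary:smooth} with $\ell=Rn$ and a fixed smoothing parameter $\epsilon<H(X|Y)-R$, lower-bound $H^{\epsilon}_{2}(p_{X^nY^n}|Y^n)$ by $n(H(X|Y)-\epsilon)$ via Remark~\ref{remark:iid_min}, and conclude that the total variation distance is eventually at most a constant multiple of $\epsilon$. Your explicit diagonalization over $\epsilon_k\downarrow 0$ just spells out the final limiting step that the paper handles by fixing $\epsilon$ in terms of an arbitrary $\delta$, so the two arguments are the same in substance.
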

\addtocounter{theorem}{-1}
\endgroup
\begin{proof}
For arbitrary $\delta>0$ choose some positive number $\epsilon$ such that $\epsilon<\delta/2$ and $\epsilon <H(X|Y)-R$. Let $\mathsf B_n$ be the mapping of Corollary~\ref{corollary:smooth} satisfying
$$\|p_{\mathsf B_n(X^n)Y^n}-p_{Q^{(n)}}p_{Y^n}\|_{TV}\leq 2\epsilon + \frac{1}{2}2^{-\frac{1}{2}\left(H^{\epsilon}_{2}(p_{X^nY^n}|Y^n)-nR\right)}.$$
Then, by using Remark~\ref{remark:iid_min}, for sufficiently large $n$ we have
$$\|p_{\mathsf B_n(X^n)Y^n}-p_{Q^{(n)}}p_{Y^n}\|_{TV}\leq 2\epsilon + \frac{1}{2}2^{-\frac{n}{2}\left(H(X|Y)-R-\epsilon\right)}.$$
Since $\epsilon<\delta/2$ and $H(X|Y)-R-\epsilon>0$, there exists a natural number $n_0$ such that for all $n>n_0$, $2\epsilon + \frac{1}{2}2^{-\frac{n}{2}\left(H(X|Y)-R-\epsilon\right)}\leq\delta$; hence, for sufficiently large $n$,
$$\|p_{\mathsf B_n(X^n)Y^n}-p_{Q^{(n)}}p_{Y^n}\|_{TV}\leq\delta,$$
which finishes the proof.
\end{proof}

\setcounter{equation}{0}
\color{black}
\section{Proof of Proposition~\ref{pro:cardinality}}\label{sec:cardinality}
First, we show that the cardinality of random variable $R$ can be reduced to two. Without loss of generality, we may assume that
$p_{R}(r)>0$ for all $r\in\mathcal R$. We will alter the joint distribution $p_{RQ\bold AS}$ and construct a new joint distribution $p'_{RQ\bold AS}$ on the same alphabet sets such that
\begin{itemize}
\item $p'_{RQ\bold AS}$ still satisfies the constraints \eqref{eqnE1}-\eqref{eqnE3} and \eqref{eq:pi};
\item $p'_{R}(r)>0$ for only two values of $r$, \emph{i.e.,} the support of $R$ under distribution $p'_{RQ\bold AS}$ is of cardinality at most two.
\end{itemize}
Let
$$p'_{Q{\bold A}{S}|R}(q,\bold{a},s|r)=p_{Q{\bold A}{S}|R}(q,\bold{a},s|r), \qquad \forall (q,\bold{a},s,r).$$
In other words, the distributions $p$ and $p'$ differ only in the marginal distribution of $R$. Observe that with this choice of $p'_{Q{\bold A}{S}|R}=p_{Q{\bold A}{S}|R}$, Equations \eqref{eqnE1} and \eqref{eqnE2} are satisfied for $p'_{RQ{\bold A}{S}}$ regardless of how we choose the marginal distribution of $R$; these properties are inherited from $p_{RQ{\bold A}{S}}$.

 To specify the joint distribution $p'_{RQ{\bold A}{S}}$, we need to specify $p'_R(r)$ for different values of $r\in\mathcal{R}$. We can think of the marginal distribution $p'_R$ as a vector of size $|\mathcal{R}|$ of non-negative numbers adding up to one, \emph{i.e.,} a vector $[p'_R(r), r\in\mathcal{R}]$ in the probability simplex
\begin{align}&p'_R(r)\geq 0,\quad, \forall r\in\mathcal R, \label{Eqwder1}\\
&\sum_{r\in \mathcal R}p'_R(r)=1. \label{Eqwder2}\end{align}
Let $H(.|.)$ and $H'(.|.)$ denote the entropy function under distributions $p_{RQ{\bold A}{S}}$ and $p'_{RQ{\bold A}{S}}$, respectively. Similarly, let $\pi(\bold A|R)$ and $\pi'(\bold A|R)$ represent the security level of $\bold A$ given $R$ under distributions $p_{RQ{\bold A}{S}}$ and $p'_{RQ{\bold A}{S}}$, respectively. The expression
$$H'(Q{\bold A}|SR)-H'(Q|R)=\sum_{r\in \mathcal R}p'_R(r) \left[ H'(Q{\bold A}|S, R=r)-H'(Q|R=r) \right]$$
is linear in $p'_R(r)$. We impose the following linear constraint on $p'_R(r)$:
\begin{align}\sum_{r\in \mathcal R}p'_R(r) \left[ H'(Q{\bold A}|S, R=r)-H'(Q|R=r) \right]=H(Q{\bold A}|SR)-H(Q|R).\label{Eqwder3}\end{align}
This linear constraint implies that $H'(Q{\bold A}|SR)-H'(Q|R)=H(Q{\bold A}|SR)-H(Q|R)$. This would ensure \eqref{eqnE3} for $p'_{RQ{\bold A}{S}}$. Now, consider the polytope $\mathsf{P}$ formed by real vectors $[p'_R(r), r\in\mathcal{R}]$ of size $|\mathcal{R}|$ satisfying \eqref{Eqwder1}-\eqref{Eqwder3}. This polytope is non-empty as it includes the vector corresponding to the marginal distribution $[p_R(r)]$. The expression $\pi'(\bold A|R)$ is also a linear expression in $p'_R(r)$ for $r\in\mathcal{R}$. The maximum of the linear function $\pi'(\bold A|R)$ over polytope $\mathsf{P}$ occurs at a vertex of $\mathsf{P}$. We choose the distribution corresponding to this vertex. Thus, $\pi'(\bold A|R)\geq \pi(\bold A|R)$ and \eqref{eq:pi} is satisfied under $p'_{RQ{\bold A}{S}}$. Next, it suffices to show that each vertex of $\mathsf{P}$ corresponds to a joint distribution $p'_R(r)$ with at most two non-negative entries. The polytope $\mathsf{P}$ is defined via $|\mathcal{R}|+2$ hyperplanes given in \eqref{Eqwder1}-\eqref{Eqwder3}. Since $\mathsf{P}$ lies in a space of dimension $|\mathcal{R}|$, each of its vertices must lie at the intersection of at least $|\mathcal{R}|$ hyperplanes defining $\mathsf{P}$. In other words, each vertex must lie on at least $|\mathcal{R}|$ of the $|\mathcal{R}|+2$ hyperplanes given in \eqref{Eqwder1}-\eqref{Eqwder3}. This implies that each vertex must lie on at least $|\mathcal{R}|-2$ hyperplanes of the type given in \eqref{Eqwder1}. In other words, each vertex must have at least $|\mathcal{R}|-2$ zero entries, and hence at most two non-zero entries.

Next, we alter the joint distribution $p'_{RQ\bold AS}$ and construct a new joint distribution $p''_{RQ\bold AS}$ on the same alphabet sets such that
\begin{itemize}
\item $p''_{RQ\bold AS}$ still satisfies the constraints \eqref{eqnE1}-\eqref{eqnE3} and \eqref{eq:pi};
\item $p''_{Q}(q)>0$ for at most $2|\mathcal A|$ values of $q$, \emph{i.e.,} the size of the support of $Q$ under distribution $p''_{RQ\bold AS}$ is at most $2|\mathcal A|$.
\end{itemize}
Let
\begin{equation}\label{eq:prime0}
p''_R(r)=p'_R(r),\quad p''_{\bold AS|RQ}(\bold{a},s|r,q)=p'_{\bold AS|RQ}(\bold{a},s|r,q), \qquad \forall (\bold{a},s,r,q).
\end{equation}
In other words, the marginal distribution on $R$ and the conditional distribution on $\bold AS$ given $RQ$ are preserved.
We will now choose $p''_{Q|R}$ to fulfill the definition of the joint distribution $p''$. Equation \eqref{eq:prime0} implies that Equations \eqref{eqnE1} and \eqref{eqnE2} are satisfied under $p''$. For arbitrary $r\in\mathcal R$ satisfying $p''_R(r)>0$, as before, we interpret $p''_{Q|R=r}$ as a real vector of size $|\mathcal Q|$ in the probability simplex
\begin{align}&p''_{Q|R=r}(q)\geq 0,\quad \forall q\in \mathcal Q, \label{eq:prime1}\\
&\sum_{q\in \mathcal Q}p''_{Q|R=r}(q)=1. \label{eq:prime2}\end{align}
Note that $p''_{\bold A|R=r}(\bold a)=\sum_{q\in \mathcal Q}p''_{Q|R=r}(q)p''_{\bold A|Q,R=r}(\bold a|q)$ is a linear function of $p''_{Q|R=r}$, so we impose the following linear constraints on $p''_{Q|R=r}$:
\begin{equation}\label{eq:prime3}
p''_{\bold A|R=r}(\bold a)=p'_{A|R=r}(\bold a),\quad \forall \bold a\in \mathcal A-\{\bold a_0\},
\end{equation}
where $\bold a_0$ is an arbitrary element of $\mathcal A$ and $\mathcal A-\{\bold a_0\}$ is the set $\mathcal A$ from which $\bold a_0$ is excluded. Consider that as $p''_{\bold A|R=r}$ and $p'_{\bold A|R=r}$ both sum up to one, Equation~\eqref{eq:prime3} guarantees that $p''_{\bold A|R=r}(\bold a_0)=p'_{A|R=r}(\bold a_0)$, thus $p''_{\bold A|R=r}=p'_{\bold A|R=r}$. The fact that $p''_{\bold A|R=r}=p'_{\bold A|R=r}$, along with Equation~\eqref{eq:prime0}, implies
\begin{align}
\pi''(\bold A|R)=\pi'(\bold A|R),\label{eq:prime4}\\
H''(S|R)=H'(S|R),\label{eq:prime5}
\end{align}
where $H''(.|.)$ denotes the entropy function and $\pi''(\bold A|R)$ denotes the security level of $\bold A$ given $R$, both under the distribution $p''_{RQ\bold AS}$. Equation~\eqref{eq:prime4} implies that Equation~\eqref{eq:pi} is satisfied under $p''_{RQ\bold AS}$. Let $\mathsf P'$ denote the polytope of all conditional distributions $p''_{Q|R=r}$ satisfying \eqref{eq:prime1}-\eqref{eq:prime3}. Note that $\mathsf P'$ is non-empty since it includes $p'_{Q|R=r}$. We choose $p''_{Q|R=r}$ to be a vertex of $\mathsf P'$ that maximizes $H''(\bold AS|Q,R=r)$ (linear in $p''_{Q|R=r}$). Therefore, we have $H''(\bold AS|Q,R=r)\geq H'(\bold AS|Q,R=r)$ and considering that $p''_R=p'_R$, we conclude that
\begin{equation}\label{eq:prime6}
H''(\bold AS|QR)\geq H'(\bold AS|QR).
\end{equation}
Then, using Equations~\eqref{eq:prime5} and \eqref{eq:prime6}, we have
\begin{align*}
H''(Q\bold A|RS)-H''(Q|R)&=H''(\bold AS|RQ)-H''(S|R)\\
&\geq H'(\bold AS|RQ)-H'(S|R)\\
&= H'(Q\bold A|RS)-H'(Q|R)\\
&\geq 0,
\end{align*}
where the equations follow from the properties of the entropy function, the first inequality follows from Equations~\eqref{eq:prime5} and \eqref{eq:prime6}, and the last inequality follows from the fact that $p'$ satisfies \eqref{eqnE3}. Thus, Equation~\eqref{eqnE3} is satisfied under $p''_{RQ\bold AS}$. Next, we complete the proof by showing that the support of $Q$ under the distribution $p''_{RQ\bold AS}$ has cardinality of at most $2|\mathcal A|$. To do this, it suffices to show that for arbitrary $r\in \mathcal R$ for which $p''_R(r)>0$, the conditional distribution $p''_{Q|R=r}$ has cardinality at most $|\mathcal A|$. Then, from the fact that $p''_R(r)>0$ for only two values of $r\in \mathcal R$, we can conclude that under $p''_{RQ\bold AS}$, the support of $Q$ has cardinality at most $2|\mathcal A|$. Note that the polytope $\mathsf P'$ belongs to a space of dimension $|Q|$, thus, every vertex of $\mathsf P'$ lies in the intersection of at least $|Q|$ hyperplanes defining $\mathsf P'$ (Equations~\eqref{eq:prime1}-\eqref{eq:prime3}). On the other hand, $\mathsf P'$ is defined by $|\mathcal Q|+|\mathcal A|$ hyperplanes of which $|\mathcal Q|$ hyperplanes are of the type given in \eqref{eq:prime1}; thus, every vertex of $\mathsf P'$ lies in at least $|\mathcal Q|-|\mathcal A|$ hyperplanes of the type given in \eqref{eq:prime1}. Therefore, $p''_{Q|R=r}$ has at least $|\mathcal Q|-|\mathcal A|$ zero-valued entries and hence at most $|\mathcal A|$ non-zero entries.
\color{black}

\setcounter{equation}{0}
\section{Proof of Lemma~\ref{lemmaVar2}}\label{appndixB}
Let us minimize the expression $(\E[W]-w_2)^2/\Var[W]$ over all random variables $W$ that satisfy $\underline{m}\leq W \leq \overline{m}$ and $\E[W]\geq w_1$ :
\begin{align}
\min_{\substack{\underline{m}\leq W \leq \overline{m}\\ \E[W]\geq w_1}} \frac{\left( \E[W]-w_2 \right)^2}{\Var[W]} &= \min_{w_1\leq \mu\leq \overline{m}} \left( \min_{\substack{\underline{m}\leq W \leq \overline{m}\\ \E[W]=\mu }} \frac{\left( \E[W]-w_2 \right)^2}{\Var[W]}\right)
= \min_{w_1\leq \mu \leq \overline{m}} \left(\frac{(\mu -w_2)^2}{\underset{\substack{\underline{m}\leq W \leq \overline{m}\\ \E[W]=\mu }}{\max} \Var[W]}\right)\nonumber
\\&= \min_{w_1\leq \mu \leq \overline{m}} \left(\frac{(\mu -w_2)^2}{-\mu^2+\underset{\substack{\underline{m}\leq W \leq \overline{m}\\ \E[W]=\mu }}{\max} \E[W^2]}\right).
\label{E:min_min}
\end{align}
We claim that
\begin{equation}\label{E:max_var}
\underset{\substack{\underline{m}\leq W \leq \overline{m}\\ \E[W]=\mu }}{\max} \mathbb{E}[W^2]=(\overline{m}-\mu )(\mu -\underline{m})+\mu^2.
\end{equation}
Observe that if $W^*$ is the following binary random variable
\begin{align}\mathbb{P}[W^*=\underline{m}]=\frac{\overline{m}-\mu }{\overline{m}-\underline{m}},\quad \mathbb{P}[W^*=\overline{m}]=\frac{\mu -\underline{m}}{\overline{m}-\underline{m}},
\end{align}
we have $\mathbb{E}[W^*]=\mu $ and $\E[(W^*)^2]=(\overline{m}-\mu )(\mu -\underline{m})+\mu^2$. As a result,
\begin{equation}\label{E:max_var2}
\underset{\substack{\underline{m}\leq W \leq \overline{m}\\ \E[W]=\mu }}{\max} \mathbb{E}[W^2]\geq (\overline{m}-\mu )(\mu -\underline{m})+\mu^2.
\end{equation}
On the other hand, the function $f(x)=x^2$ is convex and lies below the line that connects the two points $(\underline{m}, \underline{m}^2)$ and $(\overline{m}, \overline{m}^2)$ for any $x\in[\underline{m},\overline{m}]$, \emph{i.e.,}
$$x^2\leq \underline{m}^2+(x-\underline{m})(\overline{m}+\underline{m}), \quad \forall x\in [\underline{m}, \overline{m}].$$
Thus,
$$\mathbb{E}[W^2]\leq \underline{m}^2+(\mathbb{E}[W]-\underline{m})(\overline{m}+\underline{m})=(\overline{m}-\mu )(\mu -\underline{m})+\mu^2.$$
Thus, Equation \eqref{E:max_var} holds. As a result, Equation \eqref{E:min_min} becomes
\begin{equation}\label{E:min_g}
\min_{\substack{\underline{m}\leq W \leq \overline{m}\\ \E[W]\geq w_1}} \frac{\left( \E[W]-w_2 \right)^2}{\Var[W]} = \min_{w_1\leq \mu \leq \overline{m}} g(\mu),
\end{equation}
where
$$g(\mu)=\frac{(\mu -w_2)^2}{(\overline{m}-\mu )(\mu -\underline{m})}.$$
Note that the function $g(\cdot)$ is increasing for any $\mu$ satisfying $\underline{m}\leq w_2\leq \mu \leq \overline{m}$ as
\begin{align*}
\frac{d g(\mu )}{d\mu }&=\frac{(\mu -w_2)\left[(\overline{m}-\mu)(\mu+w_2-2\underline{m})+(\mu-\underline{m})(\mu- w_2)\right]}{(\overline{m}-\mu )^2(\mu -\underline{m})^2} \geq 0.
\end{align*}
Since $w_2\leq w_1$, this would then imply that the minimum on the right hand side of \eqref{E:min_g} is obtained at $\mu=w_1$. This will complete the proof.

\section*{Acknowledgements}
The authors would like to thank Yury Polyanskiy for pointing out the connection with the Chapman–Robbins bound. The authors also want to thank the anonymous reviewers for their helpful suggestions.
This work was supported by the Sharif University of Technology under Grant QB950607.

\bibliographystyle{elsarticle-harv}
\bibliography{reference}






\end{document}